\newcommand{\cO}{\mathcal{O}}
\newcommand{\B}{\mbox{\bf B}}
\newcommand{\C}{\mbox{\bf C}}
\renewcommand{\L}{\mbox{\bf L}}
\newcommand{\DL}{\mathcal{L}}
\newcommand{\id}{\mathsf{id}}
\newcommand{\xorindex}{\texttt{XOR-index}}
\newcommand{\local}{\mbox{\sf LOCAL}}
\newcommand{\congest}{\mbox{\sf CONGEST}}
\newcommand{\bcc}{\mbox{\sf BCC}}
\newcommand{\ncc}{\mbox{\sf NCC}}
\newcommand{\ucc}{\mbox{\sf UCC}}
\title{Computing Power of Hybrid Models in Synchronous Networks} 
\author{Pierre Fraigniaud\thanks{Additional support for ANR projects QuData and DUCAT.}}{IRIF, Universit\'e Paris Cit\'e and CNRS, France.}{pierre.fraigniaud@irif.fr}{}{}
\author{Pedro Montealegre\thanks{This work was supported by Centro de Modelamiento Matem\'atico (CMM), ACE210010 and FB210005, BASAL funds for centers of excellence from ANID-Chile, FONDECYT 1220142, FONDECYT 11190482, and PAI77170068 }}{Facultad de Ingenier\'ia y Ciencias, Universidad Adolfo Ib\'a\~nez, Santiago, Chile.}{p.montealegre@uai.cl}{}{}
\author{Pablo Paredes}{Departamento de Ingenier\'ia Matem\'atica, Universidad de Chile, Chile}{pparedes@dim.uchile.cl}{}{}
\author{Ivan Rapaport}{DIM-CMM (UMI 2807 CNRS), Universidad de Chile, Chile.}{rapaport@dim.uchile.cl}{}{}
\author{Mart\'in R\'ios-Wilson}{Facultad de Ingenier\'ia y Ciencias, Universidad Adolfo Ib\'a\~nez, Santiago, Chile.}{martin.rios@uai.cl}{}{}
\author{Ioan Todinca}{LIFO, Universit\'e d'Orl\'eans and INSA Centre-Val de Loire, France.}{ioan.todinca@univ-orleans.fr}{}{}
\authorrunning{P. Fraigniaud et al.} 
\keywords{hybrid model, synchronous networks, LOCAL, CONGEST, Broadcast Congested Clique} 
\begin{document}

\maketitle

\begin{abstract}
During the last two decades, a small set of distributed computing models for networks have emerged, among which \local, \congest, and Broadcast Congested Clique (\bcc) play a prominent role. We consider \emph{hybrid} models resulting from combining these three models. That is, we analyze the computing power of models allowing to, say, perform a constant number of rounds of \congest, then  a constant number of rounds of  \local, then  a constant number of rounds of \bcc, possibly repeating this figure a constant number of times. We specifically focus on 2-round models, and we establish the complete picture of the relative powers of these models. That is, for every pair of such models, we determine whether one is (strictly) stronger than the other, or whether the two models are incomparable. The separation results are obtained by approaching communication complexity through an original angle, which may be of an independent interest.  The two players are not bounded to compute the value of a binary function, but the \emph{combined} outputs of the two players are constrained by this value. In particular, we introduce the {\tt XOR-Index} problem, in which Alice is given a binary vector $x\in\{0,1\}^n$ together with an index $i\in[n]$, Bob is given a binary vector $y\in\{0,1\}^n$ together with an index $j\in[n]$, and, after a single round of 2-way communication, Alice must output a boolean~$\mbox{out}_A$, and Bob must output a boolean~$\mbox{out}_B$, such that $\mbox{out}_A\land\mbox{out}_B = x_j\oplus y_i$. We show that the communication complexity of  {\tt XOR-Index} is $\Omega(n)$ bits. 
\end{abstract}

\clearpage
\section{Introduction}

This paper analyzes the relative power of  distributed computing models for networks, all resulting from the combination of standard synchronous models such as \local\/ and \congest~\cite{P00}, as well as Broadcast Congested Clique (\bcc)~\cite{drucker2014power}. Each of these three models has its strengths  and limitations. In particular, \congest\/ assumes the ability for each node to send a specific message to each of its neighbors at every round (even in a clique). However, the communication links have limited bandwidth. Specifically, at most $O(\log n)$ bits can be sent through any link during a round, in $n$-node networks. \local\/ assumes a link with unlimited bandwidth between any two neighboring nodes, but the information acquired by any node~$u$ after $t\geq 0$ rounds of communication is limited to the data available at nodes at distance at most~$t$ from~$u$ in the network. Finally, \bcc\/ supports all-to-all communications between the nodes, and thus does not suffer from the locality constraint of \local\/ and  \congest\/. However, at each round, each node is bounded to send a \emph{same} $O(\log n)$-bit message to all the other nodes. In this paper, we investigate the power of models resulting from combining these three models, in order to take advantage of their positive aspects without suffering from their negative ones. 

For the sake of comparing models, we focus on the standard framework of distributed \emph{decision} problems on labeled graphs (see~\cite{FeuilloleyF16}). Such problems are defined by a collection $\DL$ of pairs $(G,\ell)$, where $G=(V,E)$ is a graph, and $\ell:V\to\{0,1\}^*$ is a function assigning a label $\ell(u)\in\{0,1\}^*$ to every $u\in V$. Such a set $\DL$ is called a distributed \emph{language}. For instance, deciding whether a certain set~$U$ of nodes in a graph~$G$ forms a vertex cover can be modeled by the language 
\[
\mbox{\tt vertex-cover}=\big\{(G,\ell): \forall \{u,v\}\in E(G), \;\ell(u)=1 \lor \ell(v)=1\big\}, 
\] 
by labeling~1 all the vertices in~$U$, and~0 all the other vertices. Similarly, deciding $C_4$-freeness can be modeled by the language $\mbox{\tt $C_4$-freeness}=\{(G,\ell): C_4 \not\preceq G\}$, where $H\preceq G$ denotes that $H$ is a subgraph of~$G$, and deciding whether a graph is planar can be captured by the language $\mbox{\tt planarity}=\{(G,\ell): \mbox{$G$ is planar}\}$.  A distributed algorithm~$A$ \emph{decides}~$\DL$ if every node running~$A$ eventually accepts or rejects, and the following condition is satisfied: for every labeled graph $(G,\ell)$, 
\[
(G,\ell)\in \DL \iff \mbox{all nodes accept.}
\]
That is, every node should accept in a yes-instance (i.e., an instance $(G,\ell)\in\DL$), and, in a no-instance (i.e.,  an instance $(G,\ell)\notin\DL$), at least one node must reject. 
 
For every $t\geq 0$,  let us denote by $\L^t$ the set of distributed languages~$\DL$ for which there is a $t$-round algorithm in the \local\/ model deciding~$\DL$, with $\L = \L^1$. The sets $\C^t$ and $\B^t$ are defined similarly, for the \congest\/ and \bcc\/ models, respectively. Note that while it is easy to show, using indistinguishability arguments, that, for every $t\geq 1$, $\L^t\smallsetminus \L^{t-1}\neq\varnothing$ and $\C^t\smallsetminus \C^{t-1}\neq\varnothing$, establishing that there is indeed a decision problem in $\B^t\smallsetminus \B^{t-1}$ requires significantly more work~\cite{nisan1991rounds}. Also, we define $\L^*=\cup_{t\geq 0}\L^t$, $\C^*=\cup_{t\geq 0}\C^t$, and $\B^*=\cup_{t\geq 0}\B^t$. So, in particular, $\L^*$ is the class of distributed languages that can be decided in a constant number of rounds in the \local\/ model. 

The three models under consideration, i.e., \local, \congest, and \bcc\/ exhibit very different behaviors with respect to decision problems. For instance, it is known~\cite{DBLP:conf/podc/DruckerKO13} that 
\[
\mbox{\tt $C_4$-freeness}\in \L\smallsetminus (\B^*\cup\C^*), 
\] 
whenever one assumes, as we do in this paper, that, for all models under consideration, every node is initially aware of the identifiers\footnote{In each of the models, every node~$u$ of a $n$-node network $G=(V,E)$  is supposed to be provided with an identifier $\id(u)$, where $\id:V\to[1,N]$ is one-to-one, and $N(n)=\mbox{poly}(n)$, i.e., all identifiers can be stored on $O(\log n)$ bits in $n$-node networks. We also assume that all nodes are initially aware of the size~$n$ of the network, merely because this is the case in model \bcc.} of its neighbors. 
 On the other hand,  it is also known~\cite{BeckerKMNRST15} that 
\[
\mbox{\tt planarity}\in \B\smallsetminus \L^*.
\] 
This means that while no \local\/ algorithms can decide planarity in a constant number of rounds, there is a 1-round \bcc\/ algorithm deciding planarity, and while no \bcc\/ algorithms can decide $C_4$-freeness in a constant number of rounds, there is a 1-round \local\/ algorithm deciding $C_4$-freeness. So, if one allows \local\/  algorithms to do just a single round of all-to-all communication, as in \bcc\/, then both $C_4$-freeness and planarity can be solved in a constant number of rounds, hence increasing the computational power of  \local\/ dramatically. 

This observation led us to investigate scenarios such as the case in which the \congest\/ model is enhanced by allowing nodes to perform few rounds in either \local\/, or \bcc\/. What would be the computing power of such a \emph{hybrid}  model? For answering this question, for a collection of non-negative integers $\alpha_1,\dots,\alpha_k$, $\beta_1,\dots,\beta_k$, and $\gamma_1,\dots,\gamma_k$, we define the set 
\[
\prod_{i=1}^k\L^{\alpha_i}\B^{\beta_i}\C^{\gamma_i}
\]
as the class of decision languages~$\DL$ which can be decided by an algorithm performing $\alpha_1\geq 0$ rounds of  \local\/, followed by $\beta_1\geq 0$ rounds of  \bcc\/, followed by $\gamma_1\geq 0$ rounds of  \congest, followed by $\alpha_2\geq 0$ rounds of  \local\/, etc., up to $\gamma_k\geq 0$ rounds of  \congest. For instance, we have 
\[
\{\mbox{{\tt planarity},\;{\tt $C_4$-freeness}}\}\subseteq \L\B \cap  \B\L.
\]
However, how do  $\L\B$ and $\B\L$ compare? And what about $\C\B$ vs.~$\B\C$, and $\L\C$ vs.~$\C\L$? These are the kinds of questions that we are studying in this paper. In the long-term perspective, this line of research is motivated by the following question. Let $\DL$ be a fixed distributed language, and let us assume that a round of \local\/ costs~$a$ (say, for acquiring high-throughput channels), that a round of \bcc\/ costs~$b$ (say, for benefiting of facilities supporting all-to-all communications), and that a round of \congest\/ costs~$c$. The goal is to minimize the total cost of an algorithm deciding~$\DL$ in a constant number of rounds, that is, to solve the following minimization problem: 
\begin{equation}
\min_{\prod_{i=1}^k\L^{\alpha_i}\B^{\beta_i}\C^{\gamma_i} \mbox{\small $\;\ni \DL$}} 
\left ( a\sum_{i=1}^k\alpha_i+b\sum_{i=1}^k\beta_i+c\sum_{i=1}^k\gamma_i \right).
\label{eq:min}
\end{equation}
Note that, for $a=b=c=1$, Eq.~\eqref{eq:min}  corresponds to minimizing the number of rounds for deciding~$\DL$ when using a combination of the communication facilities provided by \local, \congest, and \bcc. For instance, deciding whether a graph is $C_k$-free can be achieved in $\lfloor\frac{k}{2}\rfloor$ rounds in \local, that is, $\mbox{\tt  $C_k$-freeness}\in \L^{\lfloor k/2 \rfloor}$. Eq.~\eqref{eq:min} is asking whether deciding {\tt $C_k$-freeness} could be achieved at a lower cost by combining  \local, \congest, and \bcc.  For tackling Eq.~\eqref{eq:min}, we need a better understanding of the fundamental effects resulting from combining  these models. 

\subsection{Our Results}

On the negative side, we provide a series of separation results between 2-round hybrid models. In particular, we show that $\B\C$ and $\C\B$ are incomparable. That is, there are languages in $\B\C\smallsetminus\C\B$, and languages in $\C\B\smallsetminus\B\C$. In fact, we show stronger separation results, by establishing that $\B\C\smallsetminus\C^*\B\neq\varnothing$, and $\C\B\smallsetminus\B\L^*\neq\varnothing$. That is, in particular, there are languages that can be decided by a 2-round algorithm performing a single \bcc\/ round followed by one \congest\/ round, which cannot be decided by any algorithm performing $k$ \congest\/ rounds followed by a single \bcc\/ round, for any $k\geq 1$. 

On the positive side, we show that, for any non-negative integers $\alpha_1,\dots,\alpha_k,\beta_1,\dots,\beta_k$, 
\begin{equation}
\prod_{i=1}^k \L^{\alpha_i}\B^{\beta_i} \subseteq \L^{\sum_{i=1}^k\alpha_i}\B^{\sum_{i=1}^k\beta_i}.
\label{eq:BL-in-LB}
\end{equation}
That is, if a language $\DL$ can be decided by a $t$-round algorithm alternating \local\/ and \bcc\/ rounds, then $\DL$ can be decided by a $t$-round algorithm performing all its \local\/ rounds first, and then all its \bcc\/ rounds --- with the notations of Eq.~\eqref{eq:BL-in-LB},  $t=\sum_{i=1}^k(\alpha_i+\beta_i)$. So, in particular $\B\L\subseteq \L\B$. This inclusion is strict, since, as said before, $\C\B\smallsetminus\B\L^*\neq \varnothing$. In fact, this separation holds even if the number of  \local\/ rounds depends on the number of nodes~$n$ in the network, as long as the algorithm performs $o(n)$ \local\/ rounds after its \bcc\/ round. Another consequence of  Eq.~\eqref{eq:BL-in-LB} is that the largest class of languages among all the ones considered in this paper is $\L^*\B^*$, that is, languages that can be decided by algorithms performing  $k$ \local\/ rounds followed by $k'$ \bcc\/ rounds, for some $k\geq 0$ and $k'\geq 0$. Thus,  Eq.~\eqref{eq:min} should be studied for languages~$\DL\in\L^*\B^*$.  

Interestingly, our separation results hold even for randomized protocols, which can err with probability at most~$\epsilon \leq \nicefrac15$. That is, in particular, there is a language $\DL\in \C\B$ (i.e., that can be decided by a deterministic 2-round algorithm) which cannot be decided with error probability at most $\nicefrac15$ by any randomized algorithm performing one \bcc\/ round first, followed by $k$ \local\/ rounds, for any $k\geq 1$. All our results about 2-rounds hybrid models are summarized on Figure~\ref{fig:lattice}.

\begin{figure}[htb]
\centering
\includegraphics[width=7cm]{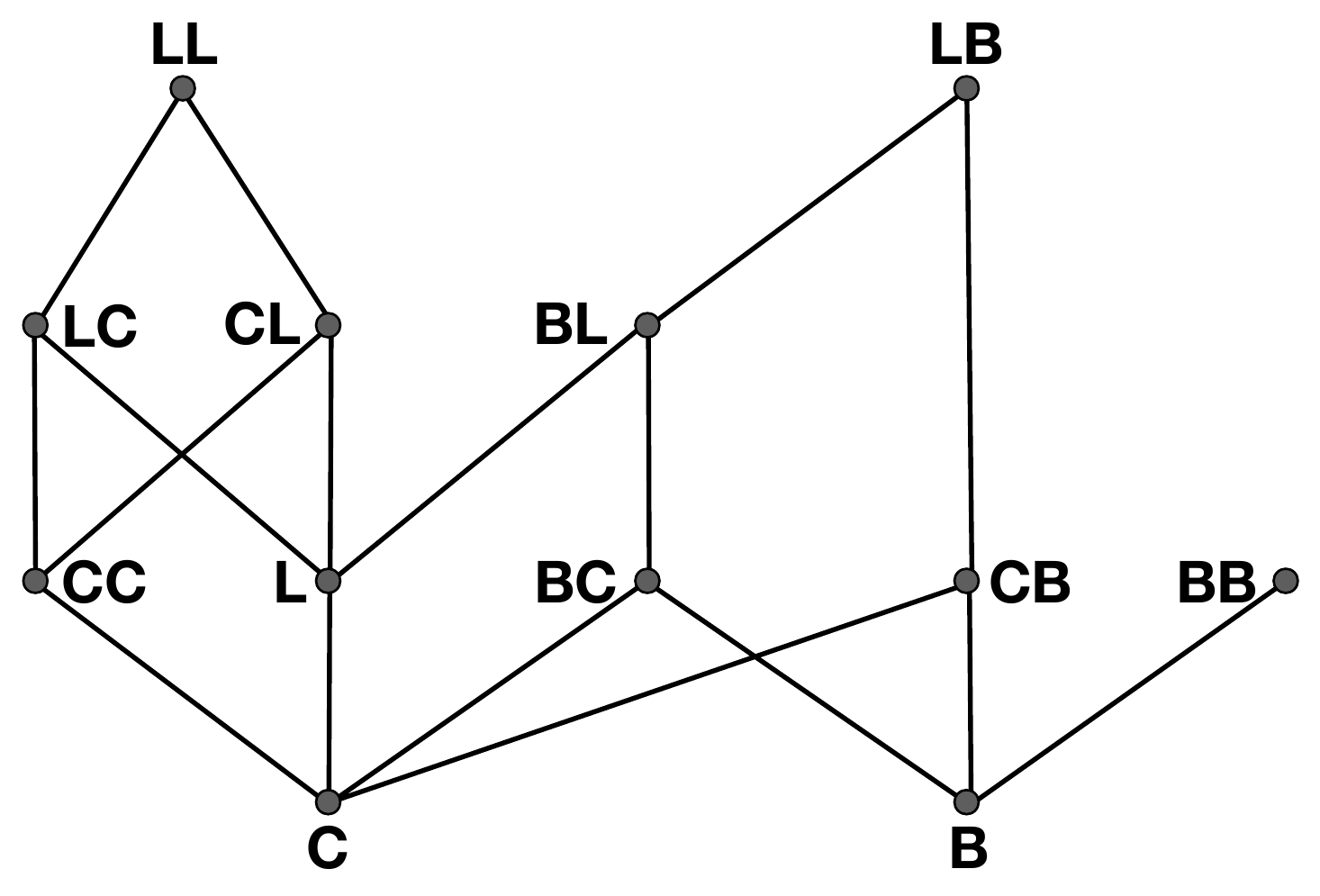}
\caption{\sl The poset of 2-round hybrid models. An edge between a set of languages $\mathbf{S}_1$ and a set~$\mathbf{S}_2$, where $\mathbf{S}_1$ is at a  level lower than $\mathbf{S}_2$, indicates that $\mathbf{S}_1\subseteq \mathbf{S}_2$. In fact, all inclusions are strict. Transitive edges are not displayed. Two sets that are not connected by a monotone path are incomparable. For instance, $\C\B$ and $\B\L$ are incomparable, while $\B\C\subseteq \L\B$. }
\label{fig:lattice}
\end{figure}

\subparagraph{Our Techniques.}

All our separation results are obtained by reductions from communication complexity lower bounds. However, we had to revisit several known communication complexity results for adapting them to the setting of distributed decision, in which  no-instances may be rejected by a single node, and not necessarily by all the nodes. In particular, we revisit the classical {\tt Index} problem. Recall that, in this problem, Alice is given a binary vector $x\in\{0,1\}^n$, Bob is given an index $i\in[n]$, and Bob must output $x_i$ based on a single message received from Alice (1-way communication). We define the {\tt XOR-Index} problem, in which Alice is given a binary vector $x\in\{0,1\}^n$ together with an index $i\in[n]$, Bob is given a binary vector $y\in\{0,1\}^n$ together with an index $j\in[n]$, and, after a single round of 2-way communication, Alice must output a boolean~$\mbox{out}_A$ and Bob must output a boolean~$\mbox{out}_B$, such that 
\[
\mbox{out}_A \land \mbox{out}_B = x_j\oplus y_i.
\]
That is, if $x_j\neq y_i$ then Alice and Bob must both accept (i.e., output \emph{true}), and if $x_j=y_i$ then \emph{at least} one of these two players must reject (i.e., output \emph{false}). We show that the sum of the sizes of the message sent by Alice to Bob and the message sent by Bob to Alice is $\Omega(n)$ bits. This bound holds even if the communication protocol is randomized and may err with probability at most $\nicefrac15$, and even if the two players have access to shared random coins. 

The fact that only one of the two players may reject a no-instance (i.e., an instance where $x_j\oplus y_i=0$), and not necessarily both, while a yes-instance must be accepted by both players, yields an asymmetry which complicates the analysis. We use information theoretic tools for establishing our lower bound. Specifically, we identify a way to decorrelate the behaviors of Alice and Bob, so that to analyze separately the distribution of decisions taken by each player, and then to recombine them for lower bounding the probability of error in case the messages exchanged between the players are small, contradicting the fact that this error probability is supposed to be small. Roughly, given  messages $m_A$ and $m_B$ exchanged by the two players, and given two indices $i$ and $j$, we compute the value $y_i$ maximizing the error probability for Alice, and the value $x_j$ maximizing the error probability for Bob, conditioned to $m_A,m_B,i,j$. We then show that the combined pair $(x_j,y_i)$ provide a sufficiently good lower bound on the probability of error for the whole protocol, which contradicts the fact that the error must be at most~$\epsilon$.

\subsection{Related Work}


The \local\/ model was introduced in~\cite{Linial92} at the beginning of the 1990s, when the celebrated $\Omega(\log^*n)$ lower bound on the number of rounds for computing a 3-coloring or a maximal independent set (MIS) in the $n$-node cycle was proved. A few years later, the class of \emph{locally checkable labeling} (LCL) problems was introduced and studied in~\cite{naor1995can}. This class essentially corresponds to the class~$\L^*$, but restricted to graphs with constant maximum degrees. Given $\DL\in\L^*$, and  the family $\mathcal{G}_\Delta$ of graphs with maximum degree at most~$\Delta$, solving the LCL problem induced by $\DL$ and $\mathcal{G}_\Delta$ consists of designing a distributed algorithm which, given a graph~$G\in\mathcal{G}_\Delta$, computes a labeling~$\ell$ of the nodes such that $(G,\ell)\in\DL$. It is known that many LCL problems can be solved in constant number of rounds in \local. This is for instance the case of certain types of weak colorings problems~\cite{naor1995can}. Also, there is an $O(\sqrt{k} \Delta^{1/\sqrt{k}} \log \Delta)$-approximation algorithm for  minimum dominating set running in $O(k)$ rounds~\cite{kuhn2004cannot} (where $k\geq 1$ is a parameter), and there is an $O(n^\varepsilon)$-approximation algorithm for the minimum coloring problem running in $\exp(O(1/\varepsilon))$ rounds~\cite{barenboim2018fast}. In fact, it is undecidable, in general, whether a given LCL problem has a (construction) algorithm running in a constant number of rounds~\cite{naor1995can}. A plethora of papers have addressed graph problems in the \local\/ model, and we refer to the survey~\cite{Suomela13}, but several significant results have been obtained since then, among which it is worth mentioning two fields in close connection to the topic of this paper, which emerged in the early 2010s. One is the systematic study of distributed decision problems in various settings, including non-determinism~\cite{FraigniaudKP13,GoosS16,KormanKP10} and interactive protocols~\cite{KolOS18,NaorPY20}. The other is a systematic study of the round-complexity of LCL problems (see, e.g., \cite{Balliu0OSST21,Suomela2020}, and the references therein). 


The \congest\/ model is a weaker variant of the \local\/ model in which the size of the messages exchanged at each round between neighbors is bounded to $O(\log n)$ bits, or $B$ bits in the parametrized version of the model.  This bound on the message size creates bottlenecks limiting the power of algorithms under this model. A fruitful line of research has established several non-trivial lower bounds on the round-complexity of \congest\/ algorithms, by reduction from communication complexity problems (see for instance
\cite{abboud2021smaller,artur2020detecting,elkin2006unconditional,peleg2000near,sarma2012distributed}). 
Nevertheless, several problems can still be solved in a constant number of rounds in \congest. This is for instance the case of computing a $(2+\varepsilon)$-approximation of minimum vertex cover which can be done in $O(\log \Delta / \log \log \Delta)$ rounds~\cite{bar2017distributed} in graphs with maximum degree~$\Delta$. Also, \emph{testing} (a weaker variant of decision, a la property-testing) the presence of specific subgraphs like small cliques or short cycles can be done in a constant number of rounds in~\congest (see, e.g., \cite{Censor-HillelFS19,EvenFFGLMMOORT17,FraigniaudO19,FraigniaudRST16,LeviMR21}).  


The congested clique model~\cite{drucker2014power,lotker2003mst} has first been introduced in its \emph{unicast} version (\ucc),
where every node is allowed to send potentially different $O(\log n)$-bit messages to each of the other $n-1$ nodes at every round.
In the \ucc\/ model, many natural problems can be solved in a constant number of rounds \cite{chang2019complexity,jurdzinski2018mst,lenzen2013optimal}. 
The \ucc\/ model is very powerful, and it has actually been proved~\cite{drucker2014power} that
it can simulate powerful bounded-depth circuits classes, from which it follows that exhibiting non-trivial lower bounds for the \ucc\/ model is quite difficult. The \emph{broadcast} variant of the congested clique, namely the  \bcc\/ model, is significantly weaker than the unicast variant, and lower bounds on the round-complexity of problems in the \bcc\/ model have been established, again by reduction to communication complexity problems. This is the case of problems such as detecting the presence of particular subgraphs~\cite{drucker2014power}, detecting planted cliques~\cite{chen2019broadcast}, or 
approximating the diameter of the network~\cite{holzer2016approximation}.
Obviously, many fast, non-trivial \bcc-algorithms  have also been devised. As examples, we can mention  
the sub-logarithmic deterministic algorithm that finds a maximal spanning forest in $O(\log n/\log \log n)$ rounds \cite{jurdzinski2018connectivity}, and algorithms for deciding and reconstructing several graph families (including bounded degeneracy graphs) performing in a constant number of rounds~\cite{becker2011adding}. It is worth noticing that, for single round algorithms, the \bcc\/ model is also referred to using other terminologies, such as \emph{simultaneous-messages}~\cite{babai2003communication}, or \emph{sketches}~\cite{ahn2012analyzing,yu2021tight}. 
In these latter models though, the measure of complexity is the size of the messages, and therefore the restriction to $O(\log n)$-bits   messages is not enforced.


Hybrid distributed computing models have  been  investigated in the literature only recently, motivated by the various forms of modern communication technologies, from high-throughput optical links to global wireless communication facilities, to peer-to-peer long-distance logical connections. In particular, a hybrid model allowing nodes to perform in a \emph{local} mode, and in a \emph{global} mode at each round has been recently considered~\cite{AugustineHKSS20}. The local mode corresponds to perform a \local\/ round~\cite{P00}, while the global mode corresponds to perform a \emph{node-capacitated clique} (\ncc) round~\cite{AugustineGGHSKL19}, which allows each node to exchange $O(\log n)$-bit messages with $O(\log n)$ arbitrary nodes in the network. It is shown that, in the \local+\ncc\/ hybrid model, SSSP can be approximated in $\widetilde{O}(n^{\nicefrac13})$ rounds, and APSP can be approximated in $\widetilde{O}(\sqrt{n})$ rounds. Several lower bounds are also presented in~\cite{AugustineHKSS20}, including an $\widetilde{\Omega}(\sqrt{n})$-round lower bound for computing APSP, and an $\Omega(n^{\nicefrac13})$-round lower bound for computing the diameter. In a subsequent work~\cite{KuhnS20}, it was shown that APSP can actually be solved exactly in $\widetilde{O}(\sqrt{n})$ rounds in the \local+\textsf{NCC} model. Some of these results were further improved in~\cite{Censor-HillelLP21,Censor-HillelLP21b} where it is shown how to solve multiple SSSP problems  exactly in $\widetilde{O}(n^{\nicefrac13})$ rounds, and how to approximate SSSP in $\widetilde{O}(n^{\nicefrac{5}{17}})$ rounds. Other graph problems, such as spanning tree, maximal independent set (MIS) construction, and routing were also  considered in the  \local+\ncc\/ model (see \cite{CoyC0HKSSS21,GotteHSW21}). In fact, it was very recently shown~\cite{AnagnostidesG21} that any problem on sparse graphs can be solved in $\widetilde{O}(\sqrt{n})$ rounds in the \local+\ncc\/ model. Efficient distributed algorithms for general graphs in this model can then be obtained using sparsification techniques. Finally, it is worth pointing out that the weaker hybrid model \congest+\ncc\/ was considered in~\cite{FHS20} for restricted families of graphs. 


As a final remark, it is interesting to notice that the {\tt XOR-Index} problem is related to the EPR paradox~\cite{EPR35}, and especially the so-called CHSH game~\cite{CHSH69} whose objective is to demonstrate the existence of quantum (non-classical) correlations in physics (see~\cite{ArfaouiF14}).

\section{Hybrid Models Based on \local\/ and \bcc}

In this section, we consider the combination of \local\/ and \bcc, and, in particular, we compare the two classed $\L\B$ and $\B\L$. The section can be considered as a warmup section before stating more complex separation results further in the text. 

First, we establish a general result concerning the hybridation of  \local\/ and \bcc. Recall that $\prod_{i=1}^k \L^{\alpha_i}\B^{\beta_i}$ is the class of distributed decision problems that can be decided by an algorithm performing $\alpha_1$ rounds of \local, then $\beta_1$ rounds of \bcc, then $\alpha_2$ rounds of \local, etc., ending with $\beta_k$ rounds of \bcc. We show that every language in this class can be computed in the same number of rounds by performing first all \local\/ rounds, and then all \bcc\/ rounds. 

\begin{theorem}\label{theo:LalphaBbeta}
Let $k\geq 1$ be an integer, and let  $\alpha_1,\dots,\alpha_k$ and $\beta_1,\dots,\beta_k$ be non-negative integers. We have 
$\prod_{i=1}^k \L^{\alpha_i}\B^{\beta_i} \subseteq \L^{\sum_{i=1}^k\alpha_i}\B^{\sum_{i=1}^k\beta_i}$.
\end{theorem}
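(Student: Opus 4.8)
The plan is to argue by a single direct simulation rather than by iterated local swaps. The starting point is a \emph{full-information normal form}: in any protocol witnessing membership in $\prod_{i=1}^k\L^{\alpha_i}\B^{\beta_i}$, we may assume that during each \local\ round every node sends its entire current state to all of its neighbors, and that during each \bcc\ round every node broadcasts some $O(\log n)$-bit function of its current state. Write $T_L=\sum_i\alpha_i$ and $T_B=\sum_i\beta_i$. With this normalization, I would track the \emph{knowledge} $K_u^{(t)}$ of a node $u$ after $t$ rounds: it starts as $u$'s input (its identifier, its label, $n$, and the identifiers of its neighbors); a \local\ round replaces $K_u$ by $K_u$ together with the tuple $(K_v)_{v\sim u}$; and a \bcc\ round appends to every node's knowledge the common tuple of all messages broadcast in that round.

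The key structural observation to establish is a \emph{separation of local and global information flow}. First, after the whole protocol, $K_u^{(\mathrm{end})}$ is a function $F(B_u,\tau)$ of only two objects: the ball $B_u$ consisting of the inputs of all nodes at distance at most $T_L$ from $u$ together with the induced graph structure on that ball, and the full broadcast transcript $\tau$ (the concatenation of all messages broadcast in all \bcc\ rounds). Indeed, information propagated by \local\ rounds travels at most $T_L$ hops, while the only genuinely global content ever injected into anyone's knowledge is $\tau$, which is common to all nodes. Second, and symmetrically, the message broadcast by a node $v$ in the $m$-th \bcc\ round depends only on $B_v$ and on the prefix $\tau^{<m}$ of the transcript produced by the earlier \bcc\ rounds; hence $\tau$ itself is a function $G\big((B_v)_{v\in V}\big)$. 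Both facts follow by unrolling the recursion defining $K_u^{(t)}$, and checking them carefully — in particular that a ``stale'' copy of a node's state carried along a propagation path is recoverable from $B_u$ and $\tau$ — is where most of the care is needed; this is the main obstacle of the proof.

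Given these two facts, I would build the rearranged protocol $P'$ as follows. Phase~1 consists of $T_L$ \local\ rounds in which every node forwards its full knowledge, so that after Phase~1 each node $u$ knows exactly its ball $B_u$. Phase~2 consists of $T_B$ \bcc\ rounds; in the $m$-th such round, each node $v$ broadcasts precisely the message it would have broadcast in the $m$-th \bcc\ round of the original protocol. This is well defined: by a straightforward induction on $m$, the transcript of the first $m-1$ rounds of Phase~2 coincides with $\tau^{<m}$, and by the second structural fact that message is a function of $B_v$ (known to $v$ after Phase~1) and of $\tau^{<m}$; moreover it has the same $O(\log n)$ length as in the original protocol, so Phase~2 is a legal sequence of \bcc\ rounds. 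After Phase~2 every node knows the entire transcript $\tau$. Finally, node $u$ outputs $F(B_u,\tau)$, i.e.\ the decision it would have reached in the original protocol. Since $P'$ uses $T_L$ \local\ rounds followed by $T_B$ \bcc\ rounds and produces the same per-node decisions, it witnesses $\prod_{i=1}^k\L^{\alpha_i}\B^{\beta_i}\subseteq\L^{T_L}\B^{T_B}$. (An essentially equivalent route is induction on $k$ using the single swap $\B^{\beta}\L^{\alpha}\subseteq\L^{\alpha}\B^{\beta}$ together with the trivial mergers $\L^{a}\L^{b}=\L^{a+b}$ and $\B^{a}\B^{b}=\B^{a+b}$; the direct argument above simply makes the bookkeeping of the preceding and following rounds explicit in one shot.)
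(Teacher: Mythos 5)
Your proof is correct, but it takes a genuinely different route from the paper's. The paper proceeds by iterated \emph{adjacent swaps}: it locates the last occurrence of a \bcc\ round immediately followed by a \local\ round and exchanges them, observing that the deferred \local\ message $L_{v,u}$ can be reconstructed by $u$ from $v$'s pre-broadcast state (shipped one round early over the \local\ link) together with the globally known broadcasts; repeating this swap normalizes the protocol. You instead prove the normal form in one shot, by characterizing the information structure of the entire protocol: every node's final knowledge factors as $F(B_u,\tau)$ through its radius-$T_L$ ball and the broadcast transcript, and the transcript itself is reconstructible round by round from the balls. Your two structural facts do hold (the inductive bookkeeping goes through: after $t$ rounds a node's knowledge is a function of its ball of radius $\lambda(t)$, the number of \local\ rounds so far, and of the transcript prefix $\tau^{\le\beta(t)}$, using that a neighbor's radius-$r$ ball is determined by one's own radius-$(r+1)$ ball and that broadcasts are common knowledge), and the resulting protocol $P'$ is legal since the rebroadcast messages are verbatim the original $O(\log n)$-bit messages. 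The trade-off: the paper's swap argument is more elementary and requires no global invariant, only a one-round lookahead; your argument demands the more delicate ball/transcript decomposition (which you rightly flag as the crux) but in exchange makes transparent \emph{why} the rearrangement is possible --- locally propagated information travels at most $T_L$ hops regardless of when the \local\ rounds occur, while broadcast information is position-independent because it is common to all nodes --- and it directly exhibits the normalized protocol rather than reaching it through $O(T_LT_B)$ intermediate protocols.
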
 

\begin{proof}
Let $\DL\in \prod_{i=1}^k \L^{\alpha_i}\B^{\beta_i}$, and let $A$ be a distributed algorithm deciding $\DL$ in the corresponding hybrid model combining \local\/ and \bcc. Let us consider the maximum integer ${t< \sum_{i=1}^k(\alpha_i+\beta_i)}$ such that $A$ performs \bcc\/ at round~$t$, and \local\/ at round $t+1$. (If no such $t$ exist, then $A$ is already in the desired form.) We transform $A$ into $A'$ performing the same as~$A$, excepted that rounds $t$ and $t+1$ are switched. Specifically, let us consider a run of $A$ for an instance $(G,\ell)$. Let $B_u$ be the message broadcasted by~$u$ at round~$t$ of~$A$, and, for every neighbor $v$ of~$u$, let $L_{u,v}$ be the message sent by~$u$ to~$v$ at round~$t+1$ of~$A$. To define $A'$, let $S_u$ be the state of every node~$u$ at the beginning of round~$t$ of~$A$, and let $N_G(u)$ be the set of neighbors of~$u$ in~$G$. In $A'$, every node $u$ sends its state~$S_u$ to all its neighbors at round~$t$, using \local. At round $t+1$ of~$A'$, every node~$u$ broadcasts $B_u$ to all nodes, using \bcc\/ (this is doable, as $u$ was able to produce~$B_u$ based on~$S_u$ at round~$t$). Finally, before completing round $t+1$, every node $u$ uses the collection $\{S_v:v\in N_G(u)\}$ and the collection $\{B_w : w\in V(G)\}$ to compute the messages $L_{v,u}$ for all $v\in N_G(u)$, by simulating what every such neighbor would have done~$v$ at round~$t$ of~$A$. Indeed, $L_{v,u}$ depend solely on $S_v$ and $\{B_w : w\in V(G)\}$. (We make the standard assumption that all nodes are running the same algorithm, but even if that was not the case, every node could also send the code of its algorithm to all its neighbors together with its state at round~$t$.) It follows that, at the end of round $t+1$ of~$A'$, every node~$u$ can compute its state after $t+1$ rounds of~$A$. By repeating the same switch operation until no \local\/ rounds occur after a \bcc\/ round, we eventually obtain an algorithm deciding~$\DL$ and establishing that $\DL\in\L^{\sum_{i=1}^k\alpha_i}\B^{\sum_{i=1}^k\beta_i}$. 
\end{proof}

\begin{corollary}
$\B\L\subsetneq \L\B$.
\end{corollary}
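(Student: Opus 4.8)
The plan is to obtain the inclusion $\B\L\subseteq\L\B$ as a one-line instance of Theorem~\ref{theo:LalphaBbeta}, and then to derive strictness from one of the separation results announced in the introduction, so that the corollary requires essentially no new work beyond bookkeeping.

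For the inclusion, I would observe that $\B\L$ is exactly the class $\prod_{i=1}^{2}\L^{\alpha_i}\B^{\beta_i}$ for the parameters $(\alpha_1,\beta_1,\alpha_2,\beta_2)=(0,1,1,0)$: doing $0$ rounds of \local, then $1$ round of \bcc, then $1$ round of \local, then $0$ rounds of \bcc, is precisely one \bcc\/ round followed by one \local\/ round. Applying Theorem~\ref{theo:LalphaBbeta} with $k=2$ then gives $\B\L\subseteq\L^{\alpha_1+\alpha_2}\B^{\beta_1+\beta_2}=\L^{1}\B^{1}=\L\B$.

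For strictness, it suffices to exhibit a single language in $\L\B\smallsetminus\B\L$, and I would get one for free from the (stronger) separation $\C\B\smallsetminus\B\L^{*}\neq\varnothing$ stated among our results. On one side, a \congest\/ round is just a \local\/ round in which the messages are restricted to $O(\log n)$ bits, so $\C^{t}\subseteq\L^{t}$ for every $t$, and this domination carries over round by round to hybrid executions; in particular $\C\B\subseteq\L\B$. On the other side, $\B\L\subseteq\B\L^{*}$ trivially. Hence any witness for $\C\B\smallsetminus\B\L^{*}\neq\varnothing$ also witnesses $\L\B\smallsetminus\B\L\neq\varnothing$, which, combined with the inclusion above, yields $\B\L\subsetneq\L\B$.

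The only non-routine ingredient here is the separation $\C\B\smallsetminus\B\L^{*}\neq\varnothing$ itself, whose proof is a reduction from a communication complexity lower bound (in the spirit of the $\xorindex$ bound announced in the abstract) and is deferred to a later section; that reduction is the genuine difficulty, whereas the corollary as such is just a matter of fitting the hybrid classes together correctly.
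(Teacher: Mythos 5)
Your proposal is correct and follows essentially the same route as the paper: the inclusion $\B\L\subseteq\L\B$ is read off from Theorem~\ref{theo:LalphaBbeta} (with the decomposition you give), and strictness is imported from the separation $\C\B\smallsetminus\B\L^{*}\neq\varnothing$ together with the observations $\C\B\subseteq\L\B$ and $\B\L\subseteq\B\L^{*}$. The paper leaves those last two containments implicit; you spell them out, which is the only (harmless) difference.
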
 

\begin{proof}
The fact that $\B\L\subseteq \L\B$ is a direct consequence of Theorem~\ref{theo:LalphaBbeta}. On the other hand, there is a distributed language in $\L\B\smallsetminus\B\L$ since, as shown by Theorem~\ref{theo:BCvsCB}, $\C\B\smallsetminus\B\L^*\neq \varnothing$.
\end{proof}

We now show a separation between the class $\B\L$ and the class $\B^*\cup\L^*$ of languages that can be decided in a constant number of rounds either in \bcc\/ or \local. The proof does not use communication complexity reduction, but a mere reduction to {\tt triangle-freeness}. 

\begin{theorem}\label{theo:BLvsByL}
$\B\L \smallsetminus (\B^* \cup \L^*) \neq \varnothing$
\end{theorem}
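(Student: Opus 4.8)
The plan is to exhibit a single distributed language that lies in $\B\L$ but not in $\B^*\cup\L^*$, using known separation facts about {\tt triangle-freeness} (or {\tt $C_4$-freeness}) together with {\tt planarity}. Recall that $\mbox{\tt $C_4$-freeness}\in \L\smallsetminus(\B^*\cup\C^*)$ by~\cite{DBLP:conf/podc/DruckerKO13}, and $\mbox{\tt planarity}\in \B\smallsetminus\L^*$ by~\cite{BeckerKMNRST15}. Neither of these languages alone is a witness: {\tt $C_4$-freeness}$\,\in\L\subseteq\B\L$ is excluded from $\L^*$? no — it is \emph{in} $\L^*$, so it fails the second requirement; and {\tt planarity} is in $\B^*$. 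The idea is therefore to \emph{combine} them so that the resulting language needs one \bcc\/ round (for the planarity component) followed by one \local\/ round (for the $C_4$-component), and provably cannot be decided by $O(1)$ rounds of \local\/ alone, nor by $O(1)$ rounds of \bcc\/ alone.

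First I would fix the encoding. Partition every instance into two independent ``channels'' using the label, e.g.\ a bit in $\ell(u)$ selecting whether node~$u$ participates in an {\tt $A$-part} or a {\tt $B$-part}, and define the language $\DL$ to consist of those $(G,\ell)$ whose $B$-part (on its induced subgraph, or on $G$ with extra labeling) is planar \emph{and} whose $A$-part is $C_4$-free. A cleaner route, avoiding subtleties about induced subgraphs in \bcc, is a \emph{disjoint union} construction: an instance is a pair of disjoint labeled graphs $(G_1,\ell_1)\uplus(G_2,\ell_2)$, presented together, and $(G_1\uplus G_2,\ell)\in\DL$ iff $G_1$ is planar and $G_2$ is $C_4$-free. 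Membership in $\B\L$: in the first \bcc\/ round, all nodes run the 1-round \bcc\/ planarity test of~\cite{BeckerKMNRST15} restricted to the $G_1$-part (nodes of $G_2$ broadcast a trivial ``accept'' token), so every node learns the planarity verdict; in the subsequent \local\/ round, nodes of $G_2$ run the 1-round \local\/ $C_4$-freeness test, and every node outputs the conjunction of the two verdicts (the planarity bit having been made common knowledge by the broadcast). Hence $\DL\in\B\L$.

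Next I would prove $\DL\notin\L^*$ and $\DL\notin\B^*$ by reduction from the two ingredient lower bounds. For $\DL\notin\L^*$: suppose $\DL\in\L^t$; then on instances whose $G_1$-part is trivially planar (say a single edge with a fixed label), a $t$-round \local\/ algorithm for $\DL$ yields a $t$-round \local\/ algorithm for {\tt $C_4$-freeness} on the $G_2$-part (each node of $G_2$ simulates $\DL$ by imagining the fixed planar gadget as a disconnected component it will never see within $t$ rounds), contradicting {\tt $C_4$-freeness}$\,\notin\L^*$. Symmetrically, for $\DL\notin\B^*$: fixing the $G_2$-part to a trivially $C_4$-free gadget and running a hypothetical $t$-round \bcc\/ algorithm for $\DL$ gives a $t$-round \bcc\/ algorithm for {\tt planarity}? no — wait, {\tt planarity}$\,\in\B$, so this direction needs the \emph{other} ingredient: fix the $G_1$-part trivially planar and reduce {\tt $C_4$-freeness}$\,\notin\B^*$ instead. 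That is, $\DL\in\B^t$ would, by the same padding argument, decide {\tt $C_4$-freeness} in $t$ \bcc\/ rounds, contradicting~\cite{DBLP:conf/podc/DruckerKO13}. So $\DL\notin\B^*\cup\L^*$, while $\DL\in\B\L$, completing the proof.

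The main obstacle I anticipate is making the padding/restriction arguments watertight in the disjoint-union setting: one must ensure that the ``trivial gadget'' component genuinely cannot help a $t$-round algorithm on the other component — for \local\/ this is immediate by distance (choose the gadget disconnected and of diameter irrelevant to within-$t$ views), but for \bcc\/ one must check that the broadcasts emitted by the gadget nodes carry no information (they do not, since the gadget is fixed and known to all), so that any \bcc\/ protocol for $\DL$ projects to a \bcc\/ protocol for the ingredient problem on the nontrivial component with the same round count and message size. A secondary subtlety is that the $n$ seen by the \bcc\/ model is the \emph{total} number of nodes; padding the gadget with $\Theta(|G_2|)$ nodes keeps the $O(\log n)$ bandwidth aligned with what the ingredient lower bound assumes, so the reduction is faithful.
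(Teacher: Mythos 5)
Your construction (a disjoint union of a {\tt planarity} instance and a {\tt $C_4$-freeness} instance) is a genuinely different witness from the one in the paper, and two of your three steps are sound, but the argument for $\DL\notin\L^*$ is wrong as stated. You fix the $G_1$-part to a trivial planar gadget and claim that a $t$-round \local\ algorithm for $\DL$ would decide {\tt $C_4$-freeness} in $t$ \local\ rounds, ``contradicting {\tt $C_4$-freeness}$\,\notin\L^*$''. But {\tt $C_4$-freeness} \emph{is} in $\L^*$ --- indeed it is in $\L$ (one \local\ round suffices; this is exactly the fact you use for membership of $\DL$ in $\B\L$). So no contradiction is obtained and this step collapses. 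You caught precisely this role-swap issue in the $\B^*$ direction (``planarity is in $\B$, so this direction needs the other ingredient'') but did not apply the same check here. The repair is the symmetric swap: fix the $G_2$-part to a trivial $C_4$-free gadget and observe that a $t$-round \local\ algorithm for $\DL$ would then decide {\tt planarity} in $t$ rounds (the disconnected constant-size gadget lies outside every $t$-ball of $G_1$, and its nodes' outputs are fixed and must be ``accept'' because planar yes-instances exist, so a rejecting node in a no-instance must lie in $G_1$), contradicting $\mbox{\tt planarity}\notin\L^*$. With that swap, and with the \bcc-simulation details you already flag (constant-size gadget so that the $O(\log n)$ bandwidth is preserved, gadget broadcasts computable by every node of $G_2$, ID bookkeeping), the proof goes through.

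For comparison, the paper avoids the two-component construction entirely: it uses the single language {\tt TOMDF} (no triangle may contain a maximum-degree node), which is in $\B\L$ because one \bcc\ round reveals the maximum degree and one \local\ round reveals the triangles; it lies outside $\L^*$ by indistinguishability (a node cannot locally tell whether its degree is maximum), and outside $\B^*$ via a padding reduction from {\tt triangle-freeness} (attach pendant vertices so every original node reaches degree $\Delta$). Your route, once corrected, is more modular in that it reuses the two known separations as black boxes, at the price of a somewhat artificial ``product'' language.
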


\begin{proof}
Let us consider the language \texttt{triangle-on-max-degree-freeness (TOMDF)} defined by the set of graphs $G$ such that, for every triangle $T$ in~$G$, all nodes in $T$ have a degree smaller than the maximum degree of $G$. 
Note that $\mbox{\tt TOMDF}\in \B\L$. Indeed, during the \bcc\/ round, every node can broadcast its degree. Thus, during the \local\/ round, each node can learn all triangles it belongs to. Every node rejects if it is of maximum degree, and it is contained in a triangle. Otherwise, it accepts. Moreover, $\mbox{\tt TOMDF}\notin\L^*$ because, for every $k\geq 0$,  in $k$ \local\/ rounds a node cannot distinguish an instance~$G$ in which it has maximum degree from an instance~$G'$ in which there is a node with a larger degree. It remains to show that $\mbox{\tt TOMDF}\notin\B^*$. 

Let us assume, for the purpose of contradiction, that there exists $k\geq 0$, such that {\tt TOMDF} can be decided by an algorithm $\mathcal{A}$ performing $k$ \bcc\/ rounds, i.e., $\mbox{\tt TOMDF} \in\B^k$. We can use $\mathcal{A}$ to decide {\tt triangle-freeness} in $k+1$ \bcc\/ rounds. Let $G$ be a graph. In the first \bcc\ round, every node $v$ broadcasts its identifier $\id(v)$ and its degree~$d(v)$, and hence learns the maximum degree $\Delta$ of~$G$. Then every node simulates $\mathcal{A}$ on the virtual graph $G'$ on $\frac{n \Delta}2$ nodes obtained from $G$ by adding a set $S_v$ of $\Delta - d(v)$ pending vertices to each vertex~$v$ of~$G$. Every node $v$ simulates $\mathcal{A}$ in $G'$ by simulating its execution on~$v$ and on all the nodes in~$S_v$. Specifically, after the first \bcc\/ round, $v$~knows the set of IDs used in~$G$, and thus the rank of its ID in this set. Therefore, it can compute the set $I$ composed of the smallest  $\frac{n \Delta}2-n$ positive integers that are not used as IDs in~$G$. Furthermore, it can assign IDs to its $\Delta - d(v)$ pending virtual neighbors in~$G'$, using its rank and the degrees of all the nodes with lower rank in~$G$, so that (1)~the ID of each virtual node is unique in~$G'$, and (2)~every node of $G$ knows the IDs assigned to the pending virtual neighbors of every other node in~$G$.  It follows that each node $v$ does not need to simulate the messages broadcasted in $\mathcal{A}$  by the nodes in~$S_v$. In fact, every node~$v$ can simulate the behavior of all the virtual nodes in $S=\cup_{u\in V(G)}S_u$ at each round of~$\mathcal{A}$. As a consequence, the simulation of $\mathcal{A}$ in $G'$ does not yield any overhead on the number of bits  to be broadcasted by each (real) node~$v$ running~$\mathcal{A}$. After the $k$ \bcc\/ rounds of $\mathcal{A}$ in $G'$ have been simulated, every node~$v$ accepts (on~$G$) if itself and all the nodes in $S_v$ accept in $\mathcal{A}$ on~$G'$. Now, by construction, $G'\in \mbox{\tt TOMDF}$ if and only if $G$ is triangle-free. Since $\mathcal{A}$ decides {\tt TOMDF}, we get that $\mbox{\tt triangle-freeness}\in \B^{k+1}$, a contradiction. 
  \end{proof}

\section{Hybrid Models Based on {\bcc} and {\congest}}

In this section, we consider the combination of \congest\/  and \bcc, and, in particular, we compare the two classes $\C\B$ and $\B\C$. The separation of these two classes uses the communication complexity problem {\tt XOR-Index}. In the next section, we'll establish that the 2-ways 1-round communication complexity of  {\tt XOR-Index} is $\Omega(n)$ bits. We use this lower bounds in the proofs of this section. 


We first show that not only $\C\B\smallsetminus\B\C\neq\varnothing$ but also $\C\B\smallsetminus\B\L^*\neq\varnothing$.

\begin{theorem}\label{theo:CBvsBL}
$\C\B\smallsetminus \B\L^* \neq \varnothing$. This result holds even for randomized algorithms performing one \bcc\/ round followed by a constant number of \local\/ rounds, which may err with probability $\epsilon$, for every $\epsilon<\nicefrac15$.
\end{theorem}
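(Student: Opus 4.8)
The plan is to exhibit a concrete distributed language that is decidable by a $1$-round $\congest$ algorithm followed by a $1$-round $\bcc$ algorithm, but that cannot be decided by any algorithm that performs one $\bcc$ round followed by $O(1)$ rounds of $\local$ --- even with randomization and error probability $\epsilon < \nicefrac15$. The language will be a graph-encoding of the \xorindex{} problem. Concretely, I would build instances on a graph consisting of two distinguished vertices $a$ (Alice) and $b$ (Bob) that are \emph{not} adjacent, together with two ``gadget'' structures: $a$ is attached to $n$ private neighbors encoding her vector $x\in\{0,1\}^n$ (the $k$-th neighbor carries label $x_k$), plus some structure encoding her index $i\in[n]$; symmetrically $b$ is attached to $n$ private neighbors encoding $y\in\{0,1\}^n$ and its index $j$. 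The crucial design point is that the index of one player must become physically accessible, in a single $\congest$ round, to the \emph{other} player's local neighborhood, while the vector bits stay private --- e.g.\ route Alice's index $i$ through a path/edge of length exactly $1$ into a vertex that is also a neighbor of the vertices holding $y$, so that after one $\congest$ round the node holding $y_i$ ``knows'' $i$. The language $\DL$ is then defined so that $(G,\ell)\in\DL$ iff $x_j \oplus y_i = 1$, and moreover the combinatorial gadget is arranged so that in a yes-instance all nodes can be made to accept, whereas in a no-instance some node detects $x_j = y_i$ and rejects.

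Next I would show membership in $\C\B$. In one $\congest$ round, each node sends $O(\log n)$ bits along its edges; this suffices to propagate the (names of the) indices $i$ and $j$ one hop, so that the relevant witness node learns which bit $x_j$, resp.\ $y_i$, is the ``selected'' one, and each of $a,b$ learns the identity of its own selected index. Then in one $\bcc$ round, the node holding $x_j$ broadcasts the single bit $x_j$, the node holding $y_i$ broadcasts the single bit $y_i$ (each an $O(\log n)$-bit --- in fact $O(1)$-bit --- message), and every node computes $x_j\oplus y_i$ and accepts iff it equals $1$. Care is needed so that every node can determine whether it is in a ``valid'' instance at all (i.e.\ the gadget is well-formed) purely from $O(1)$-radius $\congest$ information; I would add degree/ID consistency checks à la the \texttt{TOMDF} proof so that malformed instances are rejected locally. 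This direction is routine once the gadget is fixed.

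For the lower bound --- the hard part --- suppose $\DL$ were decided by an algorithm $\mathcal{A}$ doing one $\bcc$ round followed by $r=O(1)$ $\local$ rounds, with error $<\nicefrac15$. I would turn $\mathcal{A}$ into a $2$-way $1$-round communication protocol for \xorindex{} of cost $O(\log n)$, contradicting the $\Omega(n)$ bound stated for the next section. The point is: the $\bcc$ round is a single simultaneous broadcast, so all the ``global'' information ever available to any node is the vector of $O(n\log n)$ broadcast messages; but the only parts of that vector that depend on Alice's private input $x$ are the messages broadcast by nodes in Alice's gadget, and likewise for Bob. After the $\bcc$ round, the $r$ $\local$ rounds only let a node see its radius-$r$ ball; since $a$ and $b$ are far apart (distance $\gg r$) in the gadget and their private vectors sit at bounded depth, the node that must detect $x_j=y_i$ in a no-instance has in its radius-$r$ ball essentially only one player's vector plus the full transcript of broadcasts. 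So Alice simulates her whole gadget, sends Bob the broadcasts emitted by her gadget nodes (total $O(n\log n)$ bits --- too much!).

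This naïve reduction costs $O(n\log n)$, so the real work is to make the gadget ``thin'': Alice's gadget must broadcast only $O(\log n)$ total bits across all its nodes in the $\bcc$ round, i.e.\ the gadget should have $O(1)$ non-trivial broadcasters, or the broadcast content should be compressible; equivalently, use a gadget where only the single node carrying $x_j$ can possibly say anything useful. A cleaner route, which I expect to be the one to push through, is to keep $a$ (resp.\ $b$) as the \emph{only} node in Alice's (resp.\ Bob's) gadget that ever needs to transmit or decide, by making all vector-bit nodes private leaves of $a$ that learn nothing and say nothing, so $a$'s $\bcc$ message ($O(\log n)$ bits) and $b$'s $\bcc$ message ($O(\log n)$ bits) constitute the entire exchanged information; Alice plays $a$, Bob plays $b$, they exchange these two messages (one round, $2$-way, $O(\log n)$ bits total), each then simulates the $r$ $\local$ rounds on its own side --- which it can, since after the broadcast each side's radius-$r$ ball lies entirely within its own gadget plus the two broadcast strings --- and finally $\mbox{out}_A$ is $a$'s decision and $\mbox{out}_B$ is $b$'s decision. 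Correctness of $\mathcal{A}$ on $(G,\ell)$ translates exactly to $\mbox{out}_A\land\mbox{out}_B = x_j\oplus y_i$ with error $<\nicefrac15$, and the $O(\log n)$ cost contradicts $\Omega(n)$. The main obstacle, then, is engineering the gadget so that (i) one $\congest$ round genuinely suffices to move each index to where the other side's decisive bit lives, (ii) the decisive bit and decision can be concentrated on the two far-apart vertices $a,b$ so the $\bcc$ transcript is only $O(\log n)$ bits, and (iii) validity of the instance is $O(1)$-round $\local$/$\congest$ checkable so that adversarial malformed graphs cannot be used to cheat; balancing (i) against (ii)--(iii) is the delicate part.
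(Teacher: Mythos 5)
There is a genuine gap, and it sits exactly at the point you flag as ``the delicate part.'' Your encoding places the $n$ bits of $x$ (resp.\ $y$) as labels on $n$ private leaf-neighbours of $a$ (resp.\ $b$). But in the lower bound you do not get to choose what the nodes broadcast: the hypothetical algorithm performing one \bcc\ round followed by $O(1)$ \local\ rounds is adversarial, and it will simply have each leaf broadcast its own bit. A single \bcc\ round then puts the entire quadruple $(x,i,y,j)$ (together with all degrees, for well-formedness checks) into the common transcript, so your language lies in $\B^1\subseteq\B\L^*$ and no separation can possibly follow from it. Your proposed remedy --- ``make the vector-bit nodes learn nothing and say nothing'' --- conflates the protocol you design for the $\C\B$ upper bound with the arbitrary algorithm the lower bound must rule out; the latter cannot be forced to stay silent. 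Relatedly, even your ``na\"ive'' $O(n\log n)$-bit simulation is not merely inefficient: for an \xorindex\ instance of size $n$ it exceeds the $\Omega(n)$ bound, so no contradiction is available at all. The missing idea is that the players' inputs must be information-theoretically larger than the whole \bcc\ transcript of the host graph.

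The paper achieves this by encoding each $m$-bit vector as the \emph{edge set} of an $n$-clique with $m=\binom{n}{2}$ (language {\tt one-marked-edge}: two cliques joined by a $4k$-node path, the indices $i,j$ realised as markings of the endpoints of one potential edge in each clique). The graph has $O(n)=O(\sqrt{m})$ nodes, so the full broadcast transcript is only $O(\sqrt{m}\log m)$ bits and cannot reveal the inputs; simulating the algorithm then yields an $O(\sqrt{m}\log m)$-bit one-round protocol for \xorindex, contradicting Theorem~\ref{theo:xorindex}. Two further ingredients you would still need even after repairing the encoding: (i) since the marked nodes on Alice's side are selected by Bob's index $j$ (which Alice does not know when she composes her simultaneous message), Alice must send \emph{two} candidate message sets --- one computed as if her nodes were unmarked, one as if they were all marked --- and the players splice the correct transcript together after exchanging $i$ and $j$; and (ii) the connecting path must have length $\Theta(k)$ so that the $k$ subsequent \local\ rounds keep each player's simulation confined to the nodes it owns. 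Finally, note that the reduction must output the conjunction of the decisions of \emph{all} nodes a player owns, not just the decision of $a$ or $b$, since a no-instance need only be rejected by some node.
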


\begin{proof}
Let us consider the distributed language denoted {\tt one-marked-edge} defined as 
\begin{align*}
\mbox{\tt one-marked-edge} = \Big \{(G, \ell) :&\;  \big (\ell:V(G)  \rightarrow \{0,1\} \big) \\
& \land \big(\big|\big\{\{u,v\}\in E(G):\ell(u)=\ell(v)=1\big\}\big| = 1\big)\Big \}.
\end{align*}
In words, the language corresponds to the graphs $G$ with a potential mark on each node, satisfying that exactly one edge of $G$ has its two endpoints marked. We have ${\mbox{\tt one-marked-edge}\in\C\B}$. Indeed, a simple algorithm consists, for each node, to learn which of its neighbors are marked, in one {\congest} round, and to broadcast its number of marked incident edges, in one {\bcc} round. The nodes reject if the total sum of marked edges is different from~2 (i.e., exactly two nodes are incident to a unique marked edge). They accept otherwise. 

We now prove that $\mbox{\tt one-marked-edge}\notin \B\L^*$. We show that this result holds even for a randomized algorithm which may err with probability $\epsilon<\nicefrac15$.  For the purpose of contradiction, let us assume that, for some $k \geq 0$, there exists an $\epsilon$-error algorithm $\mathcal{A}$ solving {\tt one-marked-edge} using one {\bcc} round followed by $k$ consecutive {\local} rounds. We show how to use $\mathcal{A}$ for designing  an $\epsilon$-error 1-round protocol $\Pi$ solving $\xorindex$ by communicating only $\cO(\sqrt{m})$ bits on $m$-bit instances, contradicting the fact that $\xorindex$ has communication complexity~$\Omega(m)$. 


Let $(x,i)\in \{0,1\}^m \times [m]$ and $(y,j) \in \{0,1\}^m \times [m]$ be an instance of $\xorindex$. Without loss of generality, we assume that $ m = {n \choose 2}$ for some $n\in \mathbb{N}$. Let us consider a graph $G$ on $2n + 4k$ nodes, composed of two disjoint copies of a clique of size~$n$, plus a path $P$ of $4k$ nodes. Let us denote by $G^A$ and $G^B$ the two cliques. The IDs assigned to the nodes of $G^A$ are picked in $[1,n]$, while the IDs assigned to the nodes of $G^B$ are picked in $[n+1,2n]$. One extremity of $P$ is connected to all nodes in~$G^A$, and  the other extremity of $P$ is connected to all nodes in~$G^B$. Let us denote by $P^A$ the $2k$ nodes of $P$ closest to $G^A$, and by $P^B$ the $2k$ nodes of $P$ closest to $G^B$. These nodes are assigned IDs $2n+1,\dots,2n+4k$, consecutively, starting from the extremity of~$P$ connected to~$G^A$. 

We enumerate the $ m = {n \choose 2}$ edges in $G^A$ and $G^B$ from $1$ to $m$. Then, in $\Pi$, the players interpret their input vectors $x$ and $y$ as indicators of the edges of $G^A$ and $G^B$ respectively. We denote by $G_{xy}$ the  subgraph of $G$ such that, for every $r\in [m]$, the $r$-th edge $e$ of $G^A$ (resp.,~$G^B$) is in $G_{xy}$ if and only if $x_r = 1$ (resp., $y_r = 1$). Also, all edges incident to nodes of $P$ are in~$G_{xy}$.  Let $\{u^i_A,v^i_A\}$ be the endpoints of the $i$-th edge of $G_A$, and let $\{u^j_B, v^j_B\}$ represent the endpoints of the $j$-th edge of $G_B$. (These edges may or may not be in~$G_{xy}$ depending on the values of $x_j$ and $y_i$.) We define $\ell_{ij} : V(G) \rightarrow \{0,1\}$ as the marking function such that $\ell_{ij}(w) = 1$ if and only if $w \in \{u^i_A, v^i_A, u^j_B , v^j_B\}$. By construction, we have that $(G_{xy}, \ell_{ij})\in \mbox{\tt one-marked-edge}$ if and only if $((x,i),(y,j))$ is a yes-instance of \xorindex, i.e., $x_j\neq y_i$. We say that Alice owns all nodes in $V(G^A)\cup V(P^A)$, and Bob owns all nodes in $V(G^B)\cup V(P^B)$. Observe that the edges of $G_{xy}$ incident to nodes owned by Alice depend only on~$x$, while the edges of $G_{xy}$  incident to nodes owned by Bob only depend on~$y$. 

We are now ready to describe $\Pi$. First, Alice and Bob simulate the {\bcc} round of algorithm $\mathcal{A}$ on all the nodes  of $G_{xy}$ owned by them, respectively, considering that \emph{no vertices are marked}. This simulation results in each player constructing a set of $n+2k$ messages, one for each node of the  clique owned by the player, plus one message for each of the $2k$ nodes  in the sub-path owned by the player. We denote by $M^A_0$ and $M^B_0$ the set of messages produced by Alice and Bob, respectively. Next, the players repeat the same procedure, but considering now that \emph{all vertices are marked}, from which it results sets of messages denoted by $M^A_1$ and $M^B_1$, respectively. Finally, Alice sends the pair $(M^A_0,M^A_1)$ to Bob, as well as her input index~$i$. Similarly, Bob sends the pair $(M^B_0,M^B_1)$ to Alice, as well as  his input index~$j$. Observe that the size of these messages is $\cO((n+k) \log n)$ bits. 

After the communication, Alice and Bob decide their outputs as follows. First, each player extracts from $M_1^A$ the messages produced by $u^j_A$ and~$v^j_A$, and extract from $M_1^B$ the messages produced by $u^i_B$ and~$v^i_B$. Then, they extract from $M_0^A$ and $M_0^B$ the messages of every other node. Let us call $M$ the resulting set of messages. Observe that $M$ corresponds exactly to the set of messages communicated during the {\bcc} round of $\mathcal{A}$ on input $(G_{xy}, \ell_{ij})$. Then, Alice and Bob simulate the $k$ {\local} rounds of $\mathcal{A}$ on all the vertices they own. This is possible as the nodes of $P$ are not marked, for every instance of \xorindex. Each player accepts if all the nodes owned by this player accept. 
Since $(G_{xy}, \ell_{ij})\in \mbox{\tt one-marked-edge}$ if and only if $((x,i),(y,j))$ is a yes-instance of {\xorindex}, we get that $\Pi$ is an $\epsilon$-error protocol solving $\xorindex$ on inputs of size~$m$ by  communicating only $\cO((n+k) \log n) = \cO(\sqrt{m})$ bits, which is a contradiction with Theorem~\ref{theo:xorindex}.
\end{proof}


We now show that $\B\C\smallsetminus\C\B\neq\varnothing$.

\begin{theorem}\label{theo:BCvsCB}
 $\B\C \smallsetminus \C\B \not = \varnothing$. This result holds even for randomized algorithms performing one \congest\/ round followed by one \bcc\/ round, which may err with probability $\epsilon$, for every~$\epsilon<\nicefrac15$.
\end{theorem}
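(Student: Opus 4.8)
The plan is to mirror the structure of the proof of Theorem~\ref{theo:CBvsBL}, but with the roles of the two models swapped, and to exhibit a language in $\B\C$ whose hardness for $\C\B$ again follows by reduction from $\xorindex$. Concretely, I would design a distributed language that is naturally decided by one \bcc\/ round (to aggregate some global quantity) followed by one \congest\/ round (to let each node check a local neighbourhood condition against that global quantity), and which is hard for any algorithm that does \congest\/ first. A good candidate is again a marking-type language on a graph built from two cliques $G^A$ and $G^B$ joined through a short path, where Alice owns $G^A$ (and half the path) and Bob owns $G^B$. The key asymmetry to exploit is that in a $\C\B$ algorithm, Alice and Bob must each commit to a single broadcast message \emph{before} they know the other's index, whereas the (shifting) cut between them crosses only the $O(1)$-node connecting path, so the \congest\/ round leaks only $O(\log n)$ bits across the cut. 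As in Theorem~\ref{theo:CBvsBL}, the trick is to let the players simulate the \congest\/ round on all possible ``boundary inputs'' (here, on the $O(1)$ possible messages that could arrive along the path), so that the total communication stays $O(\sqrt{m}\,\mathrm{polylog})$ bits.

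The steps, in order, would be: (1)~fix the language $\DL\in\B\C$, verify membership by describing the trivial two-round algorithm (one \bcc\/ round to compute a global value, one \congest\/ round to check a local condition); (2)~assume for contradiction a (possibly randomized, $\epsilon<\nicefrac15$) algorithm $\mathcal{A}$ deciding $\DL$ with one \congest\/ round followed by one \bcc\/ round; (3)~given an $\xorindex$ instance $((x,i),(y,j))$ with $m=\binom n2$, build the graph $G_{xy}$ exactly as in the previous proof --- two $n$-cliques encoding $x$ and $y$, a connecting path, and the marking $\ell_{ij}$ on the four endpoints of the $i$-th and $j$-th edges --- so that $(G_{xy},\ell_{ij})\in\DL$ iff $x_j\neq y_i$; (4)~have Alice and Bob each locally simulate the \congest\/ round of $\mathcal{A}$ on all their own nodes; since the only edges of $G_{xy}$ crossing between Alice's part and Bob's part are the $O(1)$ path edges, each player can compute the \congest\/ messages it needs, \emph{except} for the $O(1)$ messages sent across the path by the other side, which are only $O(\log n)$ bits total and can simply be exchanged; (5)~have each player then simulate the \bcc\/ round of $\mathcal{A}$ on its own nodes --- but, just as before, the \bcc\/ messages of the two ``pivot'' nodes adjacent to path/clique structure depend on the marking and hence on the other player's index, so the players precompute these messages for \emph{both} values of the relevant bit ($0$ and $1$ marked), send both versions along with their indices $i,j$, and after communication select the correct ones; (6)~finally each player simulates the (post-\bcc) decision of every node it owns and accepts iff all such nodes accept, giving an $\epsilon$-error protocol for $\xorindex$ with $O((n+k)\log n)=O(\sqrt m)$ communication, contradicting Theorem~\ref{theo:xorindex}.

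The main obstacle I expect is step~(4)/(5): making precise \emph{which} messages actually depend on information held only by the other player, and bounding how many of them there are. In the $\C\B$-hard direction the \congest\/ round happens first and is the round that could, in principle, propagate a marked-edge indicator from one clique toward the cut; I need the graph gadget to be arranged so that the marked endpoints $u^i_A,v^i_A$ (owned by Alice) and $u^j_B,v^j_B$ (owned by Bob) are interior to the respective cliques, so that after one \congest\/ round the only nodes whose state depends on ``the other side'' are the $O(1)$ path nodes at the boundary --- and those carry only $O(\log n)$ bits. Then in the subsequent \bcc\/ round, the messages that still depend on the other player's secret are exactly those of the boundary/pivot nodes, a constant number of them, so the ``simulate for all $O(1)$ boundary configurations'' device keeps the blow-up to a constant factor. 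Once the gadget is chosen so that exactly these locality properties hold, the rest is the same bookkeeping as in the proof of Theorem~\ref{theo:CBvsBL}, and the contradiction with the $\Omega(m)$ lower bound for $\xorindex$ closes the argument; the randomized case is handled verbatim since the reduction is deterministic and preserves error probability.
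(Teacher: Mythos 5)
Your plan for the non-membership half would essentially work: on a two-clique-plus-path gadget, any $\C\B$ algorithm can be simulated by the two players with $O(n\log n)=O(\sqrt m\log m)$ bits (only $O(\log n)$ bits cross the cut during the \congest\ round, and each player then forwards the $O(n)$ broadcasts of its own nodes, all of which may depend on its input vector --- note this is $\Theta(n)$ messages, not just the ``pivot'' nodes as you suggest, though the bound is still $o(m)$). The genuine gap is step~(1), which you assert but never carry out: no language of the required form on that gadget lies in $\B\C$, so the construction cannot produce a witness. The very same simulation you deploy against $\C\B$ applies verbatim to $\B\C$ --- exchange the $O(n\log n)$ bits of broadcasts first, then only $O(\log n)$ bits cross the path cut in the \congest\ round --- so \emph{any} language on this gadget whose two-party complexity is $\Omega(m)$ lies outside both classes; indeed, Theorem~\ref{theo:CBvsBL} itself proves $\mbox{\tt one-marked-edge}\notin\B\L^*\supseteq\B\C$. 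Concretely, after the \bcc\ round the two clique nodes that know whether their respective marked edge is present are still separated by the connecting path, so one \congest\ round cannot bring the two bits to a common node. The encoding of $x$ and $y$ as edge sets of $n$-cliques, which was essential in Theorem~\ref{theo:CBvsBL} precisely to compress $m$ input bits into $O(\sqrt m)$ broadcasters, is exactly what kills membership in $\B\C$ here.

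The paper's proof uses a structurally different gadget: a bare path $P_{2n+1}=a_1,\dots,a_n,c,b_n,\dots,b_1$ in which the \emph{entire} vectors $x$ and $y$ are stored at the single nodes $a_n$ and $b_n$, both adjacent to the center $c$, while the indices $i,j$ are stored at the far endpoints $a_1,b_1$. Membership in $\B\C$ is then immediate: the \bcc\ round globally disseminates the $O(\log n)$-bit indices (the long vectors need not be broadcast), after which $a_n$ and $b_n$ each evaluate the relevant bit of their own vector \emph{locally} and forward a single bit to $c$ in the \congest\ round. For non-membership in $\C\B$, only $O(1)$ nodes on each side carry input-dependent messages, so the two-player simulation costs $O(\log n)$ bits, against the $\Omega(n)$ bound of Theorem~\ref{theo:xorindex}. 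To salvage your approach you would have to concentrate each vector at a node within one \congest\ hop of a common vertex and re-verify membership; as written, the gadget you propose separates nothing.
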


\begin{proof}
 For every $n\geq 2$, let us consider the path $P_{2n+1}$, i.e., the path with $2n+1$ nodes, denoted consecutively $a_1,\dots,a_n,c,b_n,\dots,b_1$. Let $x\in \{0,1\}^n$,  $y\in \{0,1\}^n$, $i\in[n]$, and $j\in[n]$. We define the labeling $\ell_{x,y,i,j}$ of the nodes of~$P_n$ as follows: 
\[
\ell_{x,y,i,j}(a_1) = i, \; \;
\ell_{x,y,i,j}(a_n) = x, \; \; 
\ell_{x,y,i,j}(b_n) = y, \; \;
\ell_{x,y,i,j}(b_1) = j,
\]
and, for every $v\notin \{a_1,a_n,b_1,b_n\}$, $\ell_{x,y,i,j}(v) = \bot$. We define the distributed language 
\[
\mbox{\tt XOR-index-path} = \{(P_{2n+1},\ell_{x,y,i,j}): (n\geq 2) \land (x, y \in \{0,1\}^n)\land (i,j \in [n])\land  (x_j\neq y_i)\}. 
\]
First, we show that $\mbox{\tt XOR-index-path}\in\B\C$.  During the \bcc\/ round, every node broadcasts its ID, and the IDs of its neighbors (a node with more than two neighbors simply rejects). Also, degree-1 nodes broadcasts their labels. Note that the $2n+1$ nodes can then check whether they are vertices of the path~$P_{2n+1}$, and, if this is not the case, they reject. Let $i$ and $j$ be the labels broadcasted by the two extremities of the path. Based on the information broadcasted by all the nodes, each of the two nodes $a_n$ and $b_n$ adjacent to the middle node~$c$ of the path knows which of the two labels $i$ or $j$ correspond to the index broadcasted by its farthest extremity in the path, $b_1$ and $a_1$, respectively. Thus, during the \congest\/ round, $a_n$ and $b_n$ can send the bits $x_j$ and $y_i$ to the center~$c$ of the path, which checks whether $x_j\neq y_i$, and accepts or rejects accordingly. 

Now, we show that $\mbox{\tt XOR-index-path} \notin \C\B$.  Let us assume for the purpose of contradiction that there exists a $2$-round algorithm $\mathcal{A}$ deciding $\mbox{\tt XOR-index-path}$ by performing one \congest\/ round followed by one \bcc\/ round. To solve an instance $((x,i),(y,j))$ of {\tt XOR-Index}, Alice and Bob simulate $\mathcal{A}$ on the path $P_{2n+1}$ with consecutive IDs $1,\dots,2n+1$. Specifically, Alice simulates the $n+1$ nodes $a_1,\dots,a_n,c$, while Bob simulates the $n+1$ nodes $b_1,\dots,b_n,c$, with the nodes labeled with~$\ell_{x,y,i,j}$. For simulating the \congest\/ round, Alice sends to Bob the message $m_{a_n}$ sent from $a_n$ to~$c$ during that round, and Bob sends to Alice the message $m_{b_n}$ sent from~$b_n$ to~$c$ during that round. The \bcc\/ round is actually simulated simultaneously. More precisely, Alice and Bob can both construct the messages broadcasted by all nodes $a_3,\dots,a_{n-2}$ and $b_3,\dots,b_{n-2}$, merely because they know their IDs and their labels (equal to~$\bot$), and they can therefore infer the messages these nodes receive during the \congest\/ round. So, these messages do not need to be communicated between the players. Moreover, Alice knows a priori what messages $m'_{a_1},m'_{a_2}$, and $m'_{a_n}$ are to be broadcasted by $a_1,a_2$ and $a_n$ during the \bcc\/ round, and can send them to Bob. Symmetrically,  Bob knows a priori what messages $m'_{b_1},m'_{b_2}$, and $m'_{b_n}$ are to be broadcasted by $b_1,b_2$ and $b_n$ during the \bcc\/ round, and can send them to Alice. As for node~$c$, thanks to the messages $m_{a_n}$ and $m_{b_n}$ sent by Alice to Bob, and by Bob to Alice, respectively, both players can construct the message to be sent by~$c$ during the \bcc\/ round. So, in total, for simulating~$\mathcal{A}$, Alice (resp., Bob) just needs to send the messages $m_{a_n}, m'_{a_1},m'_{a_2},m'_{a_n}$  to Bob (resp., the messages $m_{b_n}, m'_{b_1},m'_{b_2},m'_{b_n}$  to Alice), which consumes $O(\log n)$ bits of communication in total. Each player accepts if all the nodes he or she simulates accept, and rejects otherwise. Alice and Bob are thus able to solve {\tt XOR-index} by exchanging $\mathcal{O}(\log n)$ bits only, which contradicts Theorem \ref{theo:xorindex}. 
\end{proof}

As a direct consequence of the previous two theorems, we get: 

\begin{corollary}
 The sets $\C\B$ and $\B\C$ are incomparable.
\end{corollary}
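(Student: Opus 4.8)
The plan is simply to combine the two immediately preceding theorems, so the corollary is a one-line consequence; the substantive work has already been carried out in Theorem~\ref{theo:CBvsBL} and Theorem~\ref{theo:BCvsCB} (and, ultimately, in the $\Omega(n)$ communication lower bound for \xorindex). To establish that $\C\B$ and $\B\C$ are incomparable, I must exhibit a language in $\C\B\smallsetminus\B\C$ and a language in $\B\C\smallsetminus\C\B$. The second is delivered verbatim by Theorem~\ref{theo:BCvsCB}, which gives $\B\C\smallsetminus\C\B\neq\varnothing$ (even for randomized protocols erring with probability $\epsilon<\nicefrac15$).

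For the first direction, I would invoke Theorem~\ref{theo:CBvsBL}, which yields the stronger statement $\C\B\smallsetminus\B\L^*\neq\varnothing$. The key observation is that a \congest\/ round is a restricted form of a \local\/ round (a node simply ignores the bandwidth it is not allowed to use), hence $\C^t\subseteq\L^t$ for every $t$, and in particular $\B\C\subseteq\B\L\subseteq\B\L^*$. Consequently any language witnessing $\C\B\smallsetminus\B\L^*\neq\varnothing$ also witnesses $\C\B\smallsetminus\B\C\neq\varnothing$. Putting the two directions together, neither of $\C\B$ and $\B\C$ is contained in the other, which is precisely the assertion that they are incomparable.

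There is no genuine obstacle at this level: the only point requiring a (trivial) remark is the containment $\B\C\subseteq\B\L^*$ used to transfer the separation of Theorem~\ref{theo:CBvsBL}. If one wanted the bare statement $\C\B\smallsetminus\B\C\neq\varnothing$ without appealing to $\B\L^*$, one could alternatively re-run the argument of Theorem~\ref{theo:CBvsBL} with the simpler target class $\B\C$ in place of $\B\L^*$, but this gives nothing new. I would therefore keep the proof as the two-sentence reduction to Theorems~\ref{theo:CBvsBL} and~\ref{theo:BCvsCB}.
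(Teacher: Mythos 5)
Your proposal is correct and matches the paper's intent exactly: the corollary is stated there as ``a direct consequence of the previous two theorems,'' i.e.\ Theorem~\ref{theo:BCvsCB} gives $\B\C\smallsetminus\C\B\neq\varnothing$ directly, and Theorem~\ref{theo:CBvsBL} gives $\C\B\smallsetminus\B\L^*\neq\varnothing$, which transfers to $\C\B\smallsetminus\B\C\neq\varnothing$ via the containment $\B\C\subseteq\B\L^*$ that you correctly spell out.
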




\section{The Communication Complexity of \xorindex}

This section is dedicated to the analysis of the following communication problem. 

\medbreak 
\fbox{
\begin{minipage}{13cm}
\noindent \xorindex: 
\begin{description}
\item[Input:] Alice receives $x\in \{0,1\}^n$ and $i \in [n]$; Bob receives $y \in \{0,1\}^n$ and $j \in [n]$. 
\item [Task:] Alice outputs a boolean $\textrm{out}_A$ and Bob outputs a boolean $\textrm{out}_B$ such that $\textrm{out}_A \wedge \textrm{out}_B = x_i  \oplus  y_i.$
\end{description}
\end{minipage}
}
\medbreak 

We focus on 2-way 1-round protocols, that is, each player sends only one message to the other player, both players send their messages simultaneously, and each player must decide his or her output upon reception of the message sent by the other player.
For every 2-player communication problem~$P$, and for every $\epsilon>0$, let us denote by $CC^1(P,\epsilon)$ the communication complexity of the best 2-way 1-round randomized protocol solving $P$ with error probability at most~$\epsilon$.  

\begin{theorem} \label{theo:xorindex}
For every non-negative $\epsilon<\nicefrac15$, $CC^1(\mbox{\rm \xorindex},\epsilon) = \Omega(n)$ bits.
\end{theorem}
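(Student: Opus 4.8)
The plan is to pass to distributional complexity over the uniform input distribution and then run a direct information-theoretic argument that isolates Alice's behaviour from Bob's by conditioning on everything except the two players' private vectors. By Yao's minimax principle for public-coin protocols, it suffices to fix a \emph{deterministic} one-round two-way protocol and show that if its error on the uniform distribution $\mu$ --- where $x,y$ are independent and uniform in $\{0,1\}^n$ and $i,j$ are independent and uniform in $[n]$ --- is at most $\epsilon<\tfrac15$, then it communicates $\Omega(n)$ bits. I write $M_A=M_A(x,i)$ and $M_B=M_B(y,j)$ for the two (simultaneously sent) messages, $\mathrm{out}_A=f_A(x,i,M_B)$ and $\mathrm{out}_B=f_B(y,j,M_A)$ for the outputs, and assume w.l.o.g.\ that $|M_A|+|M_B|\le C$, where $C$ is the communication cost.

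The crucial step is \emph{decorrelation}: I condition on the whole tuple $(M_A,M_B,i,j)$. Since $x\perp y$ to begin with and $M_A$ (resp.\ $M_B$) is a function of $(x,i)$ (resp.\ $(y,j)$), under this conditioning $x$ and $y$ are independent; the conditional law of $x$ is exactly that of $x\mid M_A,i$ (it depends on neither $M_B$ nor $j$), and symmetrically for $y$; and $\mathrm{out}_A$ is now a deterministic function of $x$ while $\mathrm{out}_B$ is a deterministic function of $y$. Let $q_u=\Pr[x_j=u\mid M_A,i]$, $p^u_A=\Pr[\mathrm{out}_A=1\mid x_j=u,M_A,M_B,i,j]$, and define $r_v,p^v_B$ symmetrically. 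A four-way case split over $(x_j,y_i)\in\{0,1\}^2$, using the independence of $x$ and $y$ to factor each joint probability, gives that the conditional error equals $\Pr[x_j\neq y_i\mid M_A,i,j]+S_AS_B$, where $S_A:=q_0p^0_A-q_1p^1_A$ and $S_B:=r_0p^0_B-r_1p^1_B$. Taking expectations over $(M_A,M_B,i,j)$ and using that $M_A\perp M_B$ given $(i,j)$ while $x_j$ and $y_i$ are each uniform, the first term averages to $\tfrac12$, so the total error of the protocol is exactly $\tfrac12+\mathbb{E}[S_AS_B]$.

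It then remains to show $\mathbb{E}[S_AS_B]\ge-\tfrac14-o(1)$ when $C=o(n)$. First, $|S_A|=|q_0p^0_A-q_1p^1_A|\le\max(q_0,q_1)=\tfrac12+\beta_j$, where $\beta_j:=|q_1-\tfrac12|$ is the bias of the $j$-th coordinate given Alice's message and index. Because $x$ has i.i.d.\ uniform coordinates, the chain rule gives $\sum_j I(x_j;M_A\mid i)\le I(x;M_A\mid i)=H(M_A\mid i)\le|M_A|$, and $I(x_j;M_A\mid i)\ge 2\,\mathbb{E}[\beta_j^2]$ follows from $1-H(q)\ge 2(q-\tfrac12)^2$; hence $\mathbb{E}_{i,j,M_A}[\beta_j^2]\le C/(2n)$, and by Cauchy--Schwarz $\mathbb{E}[\beta_j]\le\sqrt{C/(2n)}$. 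Therefore $\mathbb{E}[S_A^2]\le\mathbb{E}\big[(\tfrac12+\beta_j)^2\big]\le\tfrac14+O(\sqrt{C/n})$, and symmetrically $\mathbb{E}[S_B^2]\le\tfrac14+O(\sqrt{C/n})$. Plugging $S_AS_B\ge-\tfrac12(S_A^2+S_B^2)$ into the identity above, the error is at least $\tfrac14-O(\sqrt{C/n})$. If the error is at most $\epsilon$ this forces $\sqrt{C/n}=\Omega(\tfrac14-\epsilon)$, i.e.\ $C=\Omega\big((\tfrac14-\epsilon)^2n\big)=\Omega(n)$ for every fixed $\epsilon<\tfrac14$, in particular for every $\epsilon<\tfrac15$, which proves the theorem.

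The hard part will be getting the conditioning in the decorrelation step exactly right, because of an asymmetry: $\mathrm{out}_A$ genuinely depends on Bob's message $M_B$, so $M_B$ must be frozen before $x$ and $y$ decorrelate and before $\mathrm{out}_A$ becomes a function of $x$ alone --- yet one must simultaneously observe that the conditional distribution of $x$, and hence the coordinate biases $\beta_j$, depend only on $M_A$ and $i$, so that the per-coordinate information inequality can be applied to the (still product-like) distribution of $x$. This bookkeeping is precisely what collapses the error into the clean identity ``error $=\tfrac12+\mathbb{E}[S_AS_B]$''; the remaining steps are convexity and standard information inequalities. A slightly lossier variant that reaches the stated $\tfrac15$ without the exact identity would be, for each fixed $(M_A,M_B,i,j)$, to lower bound Alice's error by choosing the value of $y_i$ that maximizes her probability of wrongly rejecting a yes-instance, to do the symmetric thing for Bob with $x_j$, and to recombine the two estimates.
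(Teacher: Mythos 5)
Your proof is correct, and while it rests on the same decorrelation idea as the paper (condition on $(m_A,m_B,i,j)$, observe that $x$ and $y$ stay independent and that each output becomes a function of one player's private vector, then bound coordinate biases by an information/Pinsker argument), the key lemma is genuinely different. The paper defines the maximum-likelihood guesses $a^*$ for $X_J$ given $(m_A,j)$ and $b^*$ for $Y_I$ given $(m_B,i)$, and proves $\Pr[\mathcal{F}\mid\cdot]\ge(1-R_A)(1-R_B)$ by optimizing over the four acceptance probabilities parametrizing the players' decision rules via the KKT conditions, enumerating all solutions in an appendix table; it then factors the expectation over the two conditioning events. You replace all of that by the exact identity ``$\mathrm{error}=\tfrac12+\mathbb{E}[S_AS_B]$'', which follows in a few lines from the conditional-independence factorization $\Pr[x_j=u,\,y_i=v,\,\mathrm{out}_A=\mathrm{out}_B=1\mid\cdot]=q_u p_A^u r_v p_B^v$, and you close with the pointwise bound $|S_A|\le\max(q_0,q_1)=\tfrac12+\beta_j$, the same superadditivity-plus-Pinsker estimate ($1-H(\tfrac12+\beta)\ge\tfrac{2}{\ln 2}\beta^2\ge 2\beta^2$, so your constant is valid), and $S_AS_B\ge-\tfrac12(S_A^2+S_B^2)$. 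This buys a cleaner argument: no case analysis or optimization to verify, and the AM--GM recombination separates the two second moments so you never need the joint law of the two conditioning events to factor. Quantitatively the two routes land in the same place --- error at least $\tfrac14-O(\sqrt{C/n})$, so both in fact establish the theorem for every $\epsilon<\tfrac14$, which covers the stated $\epsilon<\tfrac15$. Your closing worry about the conditioning is already resolved by your own observation that the conditional law of $x$ (hence $\beta_j$) depends only on $(M_A,i)$, while $\mathrm{out}_A$ additionally depends on the frozen $M_B$; nothing further is needed there.
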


The rest of the section is entirely dedicated to the proof of Theorem~\ref{theo:xorindex}. Let $0\leq \epsilon<\nicefrac15$, and let $\Pi$ randomized protocol  solving \xorindex\/ with error probability at most~$\epsilon$, where Alice communicates $k_A$ bits to Bob, and Bob communicates $k_B$ bits to Alice. Without loss of generality, we can assume that, in $\Pi$, Alice (resp., Bob) sends explicitly the value of~$i$ (resp.,~$j$) to Bob (resp., Alice). Indeed, this merely increases the communication complexity of~$\Pi$ by an additive factor $O(\log n)$, which has no consequence, as we shall show that $k_A+k_B=\Omega(n)$. 

Let us consider the probabilistic distribution over the inputs of Alice and Bob, where $x$ and $y$ are drawn uniformly at random from $\{0,1\}^n$, and $i$ and $j$ are drawn uniformly at random from~$[n]$. Let us denote $X$ and $I$ the random variables equal to the inputs of Alice, and $Y$ and $J$ the random variables equal to the inputs of Bob. Let $M_A$ (resp.,~$M_B$) be the random variable equal to the message sent by Alice (resp., Bob) in $\Pi$ on input $(X,I)$ (resp., $(Y,J)$). Note that $M_A$ and $M_B$ have values in $\Omega_A=\{0,1\}^{k_A}$ and $\Omega_B=\{0,1\}^{k_B}$, respectively, of respective size $2^{k_A}$ and  $2^{k_B}$.

Let us fix $i, j \in [n]$, $m_A \in \Omega_A$, and $m_B \in \Omega_B$.  Let  $\mathcal{E}^A_{m_A, j}$ be the event corresponding to Bob receiving $J=j$ as input, and Alice sending $M_A = m_A$ to Bob in the communication round. Similarly, let $\mathcal{E}^B_{m_B, i}$ be the event corresponding to Alice receiving  $I=i$ as input, and Bob sending $M_B = m_B$ to Alice in the communication round. For $a,b\in\{0,1\}$, we set: 
\[
p(a,m_A,j) = \Pr[X_J = a  \mid \mathcal{E}^A_{m_A,j} ], \;\mbox{and}\;
q(b,m_B,i) = \Pr[ Y_I = b  \mid \mathcal{E}^B_{m_B,i}], 
\]
and 
\[
p(a,j) = \Pr[ X_J = a  \mid  J = j ], \;\mbox{and}\;
q(b,i) = \Pr[ Y_I = b  \mid I = i ]. 
\]
Observe that $p(a,j) = q(b,j) = 1/2$. Let $a^*$ and $b^*$ be the most probable values of $X_j$ given $(m_A, j)$, and of  $Y_i$ given $(m_B, i)$, respectively. Formally,
\[
a^* = \textrm{argmax}_{a\in \{0,1\}} p(a,m_A,j), \; \;\mbox{and}\; \;
b^* = \textrm{argmax}_{b\in \{0,1\}} q(b,m_B,i).
\]
Observe that $p(a^*,m_A, j) \geq 1/2$ and $q(b^*,m_A, j) \geq 1/2$. We first establish the following technical lemma. 

\begin{lemma}\label{lem:xorindextechnical}
Let $\mathcal{F}$ the the event that $\Pi$ fails. We have 
\[
\Pr[ \mathcal{F} \mid \mathcal{E}_{m_A,j}^A, \mathcal{E}_{m_B,i}^B]  \geq 
 \Pr[ a^* \neq  X_J \mid \mathcal{E}_{m_A,j}^A, \mathcal{E}_{m_B,i}^B] \cdot  
 \Pr[b^* \neq Y_I  \mid \mathcal{E}_{m_A,j}^A, \mathcal{E}_{m_B,i}^B].
 \] 
\end{lemma}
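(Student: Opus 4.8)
The plan is to establish the inequality by carefully analyzing when the protocol $\Pi$ fails, conditioned on the event $\mathcal{E}_{m_A,j}^A \cap \mathcal{E}_{m_B,i}^B$. Throughout this analysis, we work in the conditional probability space where Bob's input index is $j$, Alice's message is $m_A$, Alice's input index is $i$, and Bob's message is $m_B$. A crucial structural observation is that, conditioned on these four events, the messages $m_A$ and $m_B$ are fixed, so \emph{both players' outputs are determined}: Alice's output $\mathrm{out}_A$ is a deterministic function of $(x, i, m_B)$ — but since $i$ and $m_B$ are fixed, it depends only on $x$ (and, more precisely, we will want it to depend only on $x_j$ in the relevant way), and symmetrically Bob's output $\mathrm{out}_B$ depends only on $y_i$ (with $m_A, j$ fixed). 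Actually the key decorrelation point is subtler: conditioned on $\mathcal{E}_{m_A,j}^A$, Alice has already sent $m_A$, so her remaining randomness and the value $X_J = x_j$ are what matter for her future output; conditioned additionally on $\mathcal{E}_{m_B,i}^B$, the value $Y_I = y_i$ enters. The target value to be computed is $x_j \oplus y_i$.

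The first step is to observe that the distributions of $X_J$ and $Y_I$ are conditionally independent given $\mathcal{E}_{m_A,j}^A \cap \mathcal{E}_{m_B,i}^B$. This should follow from a rectangle-type argument: the event $\mathcal{E}_{m_A,j}^A$ depends only on Alice's side $(X, I)$ and her private coins (plus the shared coins), while $\mathcal{E}_{m_B,i}^B$ depends only on Bob's side $(Y, J)$ and his private coins (plus the shared coins); since the inputs $X, I, Y, J$ are drawn independently, and conditioning on the fixed shared-coin value preserves independence across the two sides, $X_J$ and $Y_I$ remain independent after conditioning. (One needs to be slightly careful about the shared randomness, but fixing it and then averaging is standard.) The second step is to identify a ``bad'' event that forces failure and whose probability we can lower bound by the product. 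Consider the strategy where we imagine Alice outputting as if $X_J$ had value $a^*$ and Bob outputting as if $Y_I$ had value $b^*$ — but this is not quite the framing. Instead: the point of choosing $a^*$ and $b^*$ as the \emph{most probable} values is that the protocol, to succeed, must in particular handle the input configuration consistent with $(a^*, b^*)$; if the actual $(X_J, Y_I)$ differs from $(a^*, b^*)$ in a way that flips the XOR but the outputs cannot tell, failure occurs. Concretely, I expect the argument to run: the pair of outputs $(\mathrm{out}_A, \mathrm{out}_B)$ is fixed once we condition on the four events AND on the actual values of $X_J, Y_I$; but $\mathrm{out}_A$ cannot depend on $Y_I$ (Bob's message $m_B$ was generated before seeing nothing extra, and in fact $m_B$ carries information about $y$ but we have conditioned on it), and symmetrically $\mathrm{out}_B$ cannot depend on $X_J$. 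Hence $\mathrm{out}_A$ is a function of $X_J$ alone (in this conditional world) and $\mathrm{out}_B$ is a function of $Y_I$ alone. Then $\mathrm{out}_A \wedge \mathrm{out}_B$ as a function of $(X_J, Y_I)$ must equal $X_J \oplus Y_I$; a short case check on the $2\times 2$ truth table shows this is impossible to satisfy for all four inputs, and in fact whenever $X_J \neq a^*$ AND $Y_I \neq b^*$ the output is forced to be wrong — this is precisely the event whose probability, by conditional independence, factors as $\Pr[a^* \neq X_J \mid \cdots]\cdot \Pr[b^* \neq Y_I \mid \cdots]$.

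So the chain of reasoning is: (i) reduce to the conditional world where outputs are deterministic functions of $(X_J)$ and $(Y_I)$ respectively; (ii) do the $2\times2$ truth-table analysis to see that $\{X_J \neq a^*\} \cap \{Y_I \neq b^*\} \subseteq \mathcal{F}$ in this conditional world; (iii) apply conditional independence of $X_J$ and $Y_I$ to factor the probability; (iv) conclude $\Pr[\mathcal{F} \mid \mathcal{E}_{m_A,j}^A, \mathcal{E}_{m_B,i}^B] \geq \Pr[X_J \neq a^* \mid \cdots] \cdot \Pr[Y_I \neq b^* \mid \cdots]$. The main obstacle I anticipate is step (i)–(ii): justifying rigorously that in the conditioned space Alice's output really is a function of $X_J$ only and not of the full $X$ or of Bob's data — this requires noting that since Alice announces $i$ and Bob's output depends on $(y, j, m_A)$ with $m_A, j$ fixed, while Alice's output depends on $(x, i, m_B)$ with $i, m_B$ fixed, and crucially that the \emph{only} coordinate of $x$ relevant to the correctness constraint at indices $(i,j)$ is $x_j$ — one must handle the dependence of $\mathrm{out}_A$ on other coordinates $x_r$, $r \neq j$, by an averaging/extremal argument, which is presumably why $a^*$ is defined as an argmax rather than by a deterministic identity. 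I would expect the clean way is: define $a^*$ after conditioning so that $\Pr[X_J \neq a^* \mid \mathcal{E}_{m_A,j}^A] \leq 1/2$, set up the failure event as $\{\mathrm{out}_A \wedge \mathrm{out}_B \neq X_J \oplus Y_I\}$, and then lower-bound its conditional probability by restricting to the sub-event where the ``wrong guesses'' $a^*, b^*$ are both falsified, using independence; the subtlety of other coordinates of $x, y$ is absorbed because the protocol's outputs, whatever functions they are, still must be correct on average over those coordinates, and the extremal choice $a^*, b^*$ is exactly what makes the product bound go through.
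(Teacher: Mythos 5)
There is a genuine gap at your step (ii). The inclusion $\{X_J \neq a^*\} \cap \{Y_I \neq b^*\} \subseteq \mathcal{F}$ is false in general. Your $2\times 2$ truth-table argument correctly shows that no pair of functions $f,g$ with $\mathrm{out}_A=f(X_J)$, $\mathrm{out}_B=g(Y_I)$ can satisfy $f(a)\wedge g(b)=a\oplus b$ on all four configurations, so the protocol must fail \emph{somewhere} — but it does not force the failure to sit on the particular cell $(1-a^*,1-b^*)$. Concretely, take $a^*=0$, $b^*=1$; then your event is $(X_J,Y_I)=(1,0)$, where the required output is $1\oplus 0=1$, and the always-accept rule $f\equiv g\equiv 1$ is \emph{correct} there. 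That rule fails exactly on $\{X_J=Y_I\}$, a set disjoint from your claimed failure event. The lemma's bound still holds for this rule, but only because $\Pr[X_J=Y_I\mid\cdots]=R_A(1-R_B)+(1-R_A)R_B\geq (1-R_A)(1-R_B)$ — a different accounting than the one your inclusion would give. The whole difficulty of the lemma is that the protocol gets to choose \emph{where} to place its unavoidable failure (and can randomize that choice, and can also use $x_j$, which Alice knows exactly once Bob sends $j$), so one must show that the \emph{minimum} over all decision rules of the conditional failure probability is at least $(1-R_A)(1-R_B)$; a fixed-event argument cannot do this.

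The paper's proof fills exactly this hole: it parametrizes (WLOG) every decision rule by four acceptance probabilities $p_A,q_A,p_B,q_B$ (Alice's behavior depending on whether $b^*=x_j$, Bob's on whether $a^*=y_i$), uses the conditional independence you correctly identified to write the success probability as an explicit bilinear function $\frac12\bigl(R_A(p_A+q_A)(p_B-q_B)+R_B(p_A-q_A)(p_B+q_B)+1-p_Ap_B+q_Aq_B\bigr)$, and then maximizes over the cube $[0,1]^4$ via the KKT conditions (Table~\ref{table:KKT}), checking that every candidate optimum is at most $1-(1-R_A)(1-R_B)$. Your decorrelation step (conditional independence of the two players' outputs given $\mathcal{E}^A_{m_A,j},\mathcal{E}^B_{m_B,i}$) and the overall plan of reducing to a product bound match the paper, but without the optimization over decision rules the proof does not go through.
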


\begin{proof}
Without loss of generality, we assume that, in $\Pi$, after having communicated the pair $(m_A,i)$, Alice computes $b^*$, and decides her output as follows. If $b^* \neq x_j$, then Alice accepts with some fixed probability~$p_A$, and if $b^* = x_j$ then Alice accepts with some fixed probability~$q_A$. The probabilities $p_A$ and $q_A$ determines the actions of Alice.  Similarly, we can assume that, after having communicated $(m_B,j)$, Bob computes $a^*$, and decides as follows. If $a^* \neq y_i$ then he accepts with some fixed probability $p_B$, and if $a^* = y_i$ then he accepts with some fixed probability~$q_B$. Note that, in the case where the players do not take in account the value of $a^*$ and $b^*$, then one can simply choose $p_A = q_A$ and $p_B = q_B$. 
Let us denote 
\[
R_A = \Pr[a^* = X_J \mid \mathcal{E}_{m_A,j}^A, \mathcal{E}_{m_B,i}^B], \;\; \mbox{and} \;\;
R_B = \Pr[b^* = Y_I \mid \mathcal{E}_{m_A,j}^A, \mathcal{E}_{m_B,i}^B].
\]
 Observe that 
\[
\Pr[\overline{\mathcal{F}} \mid \mathcal{E}_{m_A,j}^A, \mathcal{E}_{m_B,i}^B]
= \nicefrac{1}{2} \Pr[\overline{\mathcal{F}} \mid \mathcal{E}_{m_A,j}^A, \mathcal{E}_{m_B,i}^B, X_J \neq Y_I]
+ \nicefrac{1}{2}  \Pr[\overline{\mathcal{F}} \mid \mathcal{E}_{m_A,j}^A, \mathcal{E}_{m_B,i}^B, X_J = Y_I].
\]
Now, conditioned on $X_J \neq Y_I$, the event $\overline{\mathcal{F}}$ corresponds to the event when Alice accepts and Bob accepts. Observe that, conditioned on $\mathcal{E}_{m_A,j}^A, \mathcal{E}_{m_B,i}^B$,  these two latter events are independent.  
Moreover, conditioned on  $X_J \neq Y_I$, the event $a^* \neq Y_I$ is equal to the event $a^* = X_J$. Similarly, conditioned on $X_J \neq Y_I$, the event $b^* \neq X_J$ is equal to the event $b^* = X_I$. It follows that 
\[
\left\{\begin{array}{ll}
\Pr[ \textrm{Alice accepts} \mid \mathcal{E}_{m_A,j}^A, \mathcal{E}_{m_B,i}^B , X_J \neq Y_I]= R_B\, p_A + (1-R_B)\, q_A;\\
\Pr[\textrm{Bob accepts}\mid \mathcal{E}_{m_A,j}^A, \mathcal{E}_{m_B,i}^B , X_J \neq Y_I] = R_A\,p_B + (1-R_A)\, q_B.
\end{array}\right.
\]
This implies that 
\begin{align*} 
\Pr[&\overline{\mathcal{F}} \mid \mathcal{E}_{m_A,j}^A, \mathcal{E}_{m_B,i}^B, X_J \neq Y_I]\\ 
& = \Pr[\textrm{Alice accepts and Bob accepts } \mid \mathcal{E}_{m_A,j}^A, \mathcal{E}_{m_B,i}^B, X_J \neq Y_I] \\
& = \Pr[ \textrm{Alice accepts} \mid \mathcal{E}_{m_A,j}^A, \mathcal{E}_{m_B,i}^B , X_J \neq Y_I] 
      \cdot \Pr[\textrm{Bob accepts} \mid \mathcal{E}_{m_A,j}^A, \mathcal{E}_{m_B,i}^B , X_J \neq Y_I]\\
& = \big (R_B\, p_A + (1-R_B)\, q_A\big ) \cdot \big (R_A\, p_B + (1-R_A)\, q_B\big ).
\end{align*}
let us now consider the case when conditioning on $X_J = Y_I$.  In this case, the event $\overline{\mathcal{F}}$ corresponds to the complement of the event when Alice accepts and Bob accepts. Observe that, conditioned on $X_J = Y_I$, the event $a^* \neq Y_I$ is equal to the event $a^*\neq X_J$, and the event $b^* \neq X_J$ is equal to the event $b^* \neq Y_I$. It follows that 
\[
\left\{\begin{array}{ll}
\Pr[\textrm{Alice accepts} \mid \mathcal{E}_{m_A,j}^A, \mathcal{E}_{m_B,i}^B , X_J = Y_I] = (1-R_B)\, p_A + R_B\, q_A;\\
\Pr[\textrm{Bob accepts} \mid \mathcal{E}_{m_A,j}^A, \mathcal{E}_{m_B,i}^B , X_J = Y_I] = (1-R_A)\, p_B + R_A\, q_B/
\end{array}\right.
\]
This implies that 
\begin{align*} 
\Pr[&\overline{\mathcal{F}} \mid \mathcal{E}_{m_A,j}^A, \mathcal{E}_{m_B,i}^B, X_J = Y_I)]\\ 
& = 1 - \Pr[\textrm{Alice accepts and Bob accepts } \mid \mathcal{E}_{m_A,j}^A, \mathcal{E}_{m_B,i}^B, X_J = Y_I]\\
& = 1 - \Pr[\textrm{Alice accepts} \mid \mathcal{E}_{m_A,j}^A, \mathcal{E}_{m_B,i}^B , X_J = Y_I] 
      \cdot \Pr[\textrm{Bob accepts} \mid \mathcal{E}_{m_A,j}^A, \mathcal{E}_{m_B,i}^B , X_J = Y_I]\\
& = 1-  \big ((1-R_B)\, p_A + R_B\, q_A\big ) \cdot \big((1-R_A)\, p_B + R_A\, q_B\big ). 
\end{align*}
Therefore, by combining the two cases, we get that
\begin{align*}
\Pr[&\overline{\mathcal{F}} \mid \mathcal{E}_{m_A,j}^A, \mathcal{E}_{m_B,i}^B]\\ 
&= \frac{1}{2} \big( R_A\, (p_A+q_A)\, (p_B-q_B) + R_B\, (p_A-q_A)\, (p_B+q_B) + 1 - p_A\, p_B + q_A\, q_B  \big).
\end{align*}

Conditioned to the events $\mathcal{E}_{m_A,j}^A, \mathcal{E}_{m_B,i}^B$, the best protocol $\Pi$ corresponds to the one that picks the values of $p_A, q_A, p_B, q_B$ that maximize the previous quantity, restricted to the fact that  $p_A, q_A, p_B, q_B, R_A$ and $R_B$ must be values in $[0,1]$, and that $R_A$ and $R_B$ must be at least $\nicefrac12$. The maximum can be found using the Karush-Kuhn-Tucker (KKT) conditions~\cite{sundaram1996first}. In fact, as the restrictions are affine linear functions, the optimal value is one solution of the following system of equations:

$$\begin{array}{ccc}
{\left(R_{A} + R_{B} - 1\right)} p_{B} - {\left(R_{A} - R_{B}\right)} q_{B} - 2\mu_{1} + 2\mu_{5} &=& 0\\
 {\left(R_{A} + R_{B} - 1\right)} p_{A} +  {\left(R_{A} - R_{B}\right)} q_{A} - 2\mu_{2} + 2\mu_{6} &=& 0\\
 {\left(R_{A} - R_{B}\right)} p_{B} -  {\left(R_{A} + R_{B} - 1\right)} q_{B} - 2\mu_{3} + 2\mu_{7} &=& 0\\
- {\left(R_{A} - R_{B}\right)} p_{A} - {\left(R_{A} + R_{B} - 1\right)} q_{A} - 2\mu_{4} + 2\mu_{8} &=& 0\\
\mu_{1} {\left(p_{A} - 1\right)} &=& 0 \\
\mu_{2} {\left(p_{B} - 1\right)} &=& 0 \\
\mu_{3} {\left(q_{A} - 1\right)} &=& 0 \\
\mu_{4} {\left(q_{B} - 1\right)} &=& 0 \\
-\mu_{5} p_{A} &=& 0 \\
-\mu_{6} p_{B} &=& 0 \\
-\mu_{7} q_{A} &=& 0 \\
-\mu_{8} q_{B} &=& 0 \\
\end{array}$$
The set of all solutions to this system is given in Table \ref{table:KKT}, together with the corresponding evaluation of $\Pr[\overline{\mathcal{F}}\mid \mathcal{E}_{m_A,j}^A, \mathcal{E}_{m_B,i}^B)$. 
It follows from Table \ref{table:KKT} that the value of $\Pr[\overline{\mathcal{F}}\mid \mathcal{E}_{m_A,j}^A, \mathcal{E}_{m_B,i}^B)$ is upper bounded by $R_A + R_B - R_AR_B$. Indeed, assuming, w.l.o.g., that $R_A \geq R_B$, we have:
\begin{align*}
\frac{1}{2}( 1 - (R_A + R_B)) &  \leq 1-R_A \leq 1 - \frac{R_A + R_B}{2} \leq 1 - R_B \leq \frac{1}{2}\\
& \leq \min\{R_B, \frac{1}{2}(1 + R_A - R_B)\} \leq \max\{R_B, \frac{1}{2}(1 + R_A - R_B)\}\\
& \leq \frac{R_A + R_B}{2} \leq R_A \leq 1 - (1-R_A)(1-R_B). 
\end{align*}
Finally, observe that 
\[
 (1-R_A)(1-R_B) 
 =  \Pr[a^* \neq  X_J \mid \mathcal{E}_{m_A,j}^A, \mathcal{E}_{m_B,i}^B] 
\cdot  \Pr[b^* \neq Y_I \mid \mathcal{E}_{m_A,j}^A, \mathcal{E}_{m_B,i}^B], 
\]
from which we get that  
\[
\Pr[\mathcal{F} \mid  \mathcal{E}_{m_A,j}^A, \mathcal{E}_{m_B,i}^B]  
\geq  \Pr[ a^* \neq  X_J |\mathcal{E}_{m_A,j}^A] \cdot  \Pr[b^* \neq Y_I \mathcal{E}_{m_B,i}^B],  
\]
as claimed. 
\end{proof}

We now show that, whenever the messages sent by Alice and Bob are too small, the distributions of $a^*$ and of $b^*$ is not far from the uniform. We make use of some basic definitions and tools on information complexity, and we refer to~\cite{rao2020communication} for more details. Let $(\Omega, \mu)$ be a discrete probability space. Given a random variable $X$ we denote by $p_X: \Omega \mapsto \mathbb{R}$ the discrete density function of $X$, i.e., $p_X(\omega) = \Pr[X =\omega]$. We denote by $\mathbb{H}: \Omega \mapsto \mathbb{R}^{+}$ the entropy function, defined as $\mathbb{H}(X) = \sum_{\omega \in \Omega} p_X(\omega) \frac{1}{\log p_X(\omega)}.$ Recall that, given two random variables $X,Y$ on $\Omega$, the entropy of $X$ conditioned to $Y$ is 
\[
\mathbb{H}(X\mid Y) = \mathbb{E}_{p_Y(y)}(H(X\mid Y=y)).
\] 
Moreover, let $\mu$ and $\nu$ be two probability measures on $\Omega$. The total variation distance between $\mu$ and $\nu$ is defined as $|u-v|_{\text{\tiny TV}} = \sup_{E\subseteq \Omega} |\mu(E) - \nu(E)|$. It is known that  $|u-v|_{\text{TV}} = \frac{1}{2}\sum_{\omega \in \Omega} |\mu(\omega)-\nu(\omega)|$. In addition,  the Kullback-Liebler divergence between $\mu$ and $\nu$ is defined as $\mathbb{D}(\mu||\nu) = \sum_{\omega \in \Omega} \mu(\omega) \log \frac{\mu(\omega)}{\nu(\omega)}.$ 
Given two random variables $X$ and $Y$, their mutual information is defined as $\mathbb{I}(X;Y) = \mathbb{D}(p_{X,Y}||p_{X}p_{Y}).$ It is known that 
\[
\mathbb{I}(X;Y) = \mathbb{H}(X) - \mathbb{H}(X\mid Y) = \mathbb{H}(Y) - \mathbb{H}(Y\mid X) =\mathbb{I}(Y;X).
\]
Finally, the mutual information of $X,Y$ conditioned on a random variable $Z$ is defined as the function $\mathbb{I}(X;Y\mid Z) = \mathbb{E}_{p_{Z}(z)}[\mathbb{I}(X;Y \mid Z=z)].$ Having all these notions at hand, we shall use the following technical lemmas:

\begin{lemma}[Theorem 6.12 in \cite{rao2020communication}]
Let $A_1, \hdots, A_n$ be independent random variables, and let $B$ be
 jointly distributed. We have 
$\sum_{i=1}^{n} \mathbb{I}(A_i;B) \leq \mathbb{I}(A_1, \hdots, A_n; B)$. 
\end{lemma}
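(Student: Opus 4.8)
The plan is to derive the inequality from two textbook facts about Shannon entropy: the chain rule for mutual information, and the fact that conditioning never increases entropy. Applying the chain rule to the tuple $(A_1,\dots,A_n)$ gives
\[
\mathbb{I}(A_1,\dots,A_n;B) \;=\; \sum_{i=1}^n \mathbb{I}(A_i;B \mid A_1,\dots,A_{i-1}),
\]
so it suffices to prove, term by term, that $\mathbb{I}(A_i;B\mid A_1,\dots,A_{i-1}) \ge \mathbb{I}(A_i;B)$ for every $i$.

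Fix $i$ and abbreviate $A_{<i}=(A_1,\dots,A_{i-1})$. Expanding the conditional mutual information into entropies, $\mathbb{I}(A_i;B\mid A_{<i}) = \mathbb{H}(A_i\mid A_{<i}) - \mathbb{H}(A_i\mid B,A_{<i})$. This is the only place the hypothesis enters: since $A_1,\dots,A_n$ are mutually independent, $A_i$ is independent of the block $A_{<i}$, hence $\mathbb{H}(A_i\mid A_{<i}) = \mathbb{H}(A_i)$. Meanwhile, enlarging the conditioning can only lower entropy, so $\mathbb{H}(A_i\mid B,A_{<i}) \le \mathbb{H}(A_i\mid B)$. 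Combining the two,
\[
\mathbb{I}(A_i;B\mid A_{<i}) \;=\; \mathbb{H}(A_i) - \mathbb{H}(A_i\mid B,A_{<i}) \;\ge\; \mathbb{H}(A_i) - \mathbb{H}(A_i\mid B) \;=\; \mathbb{I}(A_i;B).
\]

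Summing this over $i=1,\dots,n$ and substituting into the chain-rule identity yields $\sum_{i=1}^n \mathbb{I}(A_i;B) \le \mathbb{I}(A_1,\dots,A_n;B)$, as desired. I do not anticipate a genuine obstacle here: the argument is a short composition of standard entropy inequalities. The one point worth flagging is that independence is truly essential and not a mere convenience — already for $n=2$ with $A_2=A_1$ and $B=A_1$ a uniform bit, one has $\mathbb{I}(A_1;B)+\mathbb{I}(A_2;B)=2\,\mathbb{H}(A_1)$ while $\mathbb{I}(A_1,A_2;B)=\mathbb{H}(A_1)$, so the inequality reverses; in the proof above this is exactly the breakdown of the identity $\mathbb{H}(A_i\mid A_{<i})=\mathbb{H}(A_i)$.
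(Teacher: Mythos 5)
Your proof is correct: the chain rule $\mathbb{I}(A_1,\dots,A_n;B)=\sum_i \mathbb{I}(A_i;B\mid A_1,\dots,A_{i-1})$ combined with $\mathbb{H}(A_i\mid A_{<i})=\mathbb{H}(A_i)$ (from independence) and $\mathbb{H}(A_i\mid B,A_{<i})\le\mathbb{H}(A_i\mid B)$ yields the claim, and this is exactly the standard argument behind the textbook result the paper cites without proof. The counterexample showing that independence is essential is a nice (if optional) sanity check.
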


\begin{lemma}[Pinsker's Inequality, Lemma 6.13 in \cite{rao2020communication}]
Let $\mu, \nu$ be two probability measures over~$\Omega$. We have 
$|u-v|_{\text{\tiny TV}}^2 \leq \frac{2}{\ln 2} \, \mathbb{D}(\mu||\nu)$. 
\end{lemma}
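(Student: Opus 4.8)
The plan is to prove the inequality by the classical two-step route: first reduce from an arbitrary sample space to a two-point (binary) space, and then settle the binary case by elementary calculus. Throughout I would keep track of the logarithm base, since the divergence $\mathbb{D}$ of the statement is in base two while the cleanest calculus is done in natural logarithm.

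First I would reduce to the binary case. Let $E = \{\omega \in \Omega : \mu(\omega) \geq \nu(\omega)\}$, and set $p = \mu(E)$ and $q = \nu(E)$. For this particular event the supremum defining the total variation distance is attained: splitting $\sum_\omega |\mu(\omega)-\nu(\omega)|$ according to $E$ and $E^c$ and using $\mu(E^c)=1-\mu(E)$, $\nu(E^c)=1-\nu(E)$, one gets $\sum_\omega |\mu(\omega)-\nu(\omega)| = 2(p-q)$, hence $|\mu-\nu|_{\text{\tiny TV}} = p-q$ via the identity $|\mu-\nu|_{\text{\tiny TV}} = \frac{1}{2}\sum_\omega |\mu(\omega)-\nu(\omega)|$ recalled in the text. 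The key point is that coarse-graining $(\mu,\nu)$ to the pair of Bernoulli laws on $\{E,E^c\}$ cannot increase the divergence: writing $d(p\|q) = p\log\frac{p}{q} + (1-p)\log\frac{1-p}{1-q}$, the log-sum inequality (equivalently convexity of $t\mapsto t\log t$ plus Jensen) applied separately to the groups $E$ and $E^c$ yields $\mathbb{D}(\mu\|\nu) \geq d(p\|q)$. Thus it suffices to prove the binary estimate $(p-q)^2 \leq \frac{1}{2}\,d_e(p\|q)$, where $d_e$ is the same quantity in natural logarithm.

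Next I would handle the binary case. Fix $q\in(0,1)$ and view $d_e(p\|q)=p\ln\frac{p}{q}+(1-p)\ln\frac{1-p}{1-q}$ as a function of $p\in(0,1)$, and set $\varphi(p)=d_e(p\|q)-2(p-q)^2$. A direct computation gives $\varphi'(p)=\ln\frac{p}{q}-\ln\frac{1-p}{1-q}-4(p-q)$, so $\varphi(q)=0$ and $\varphi'(q)=0$. Differentiating once more, $\varphi''(p)=\frac{1}{p(1-p)}-4\geq 0$, since $p(1-p)\leq \frac14$ on $(0,1)$. Hence $\varphi$ is convex with a stationary point at $p=q$, which is therefore a global minimum, giving $\varphi\geq 0$, i.e.\ $d_e(p\|q)\geq 2(p-q)^2$.

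Finally I would assemble the pieces and convert the base. Combining the two reductions gives $|\mu-\nu|_{\text{\tiny TV}}^2=(p-q)^2\leq \frac12 d_e(p\|q)\leq \frac12\,\mathbb{D}_e(\mu\|\nu)$ in natural logarithm, where $\mathbb{D}_e$ is the natural-log divergence. Since $\mathbb{D}_e=(\ln 2)\,\mathbb{D}$, this reads $|\mu-\nu|_{\text{\tiny TV}}^2\leq \frac{\ln 2}{2}\,\mathbb{D}(\mu\|\nu)$, and since $\frac{\ln 2}{2}\leq \frac{2}{\ln 2}$ we obtain a fortiori $|\mu-\nu|_{\text{\tiny TV}}^2\leq \frac{2}{\ln 2}\,\mathbb{D}(\mu\|\nu)$ as claimed (the stated constant is not tight). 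I expect the only genuinely delicate step to be the data-processing reduction of the first paragraph: one must verify both that the extremal event $E$ makes the total variation distance exactly $p-q$ and that coarse-graining cannot increase the divergence. The binary derivative computation and the base change are then routine.
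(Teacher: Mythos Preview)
Your proof is correct and follows the standard route to Pinsker's inequality: reduce to the binary case via the data-processing (log-sum) inequality, then handle the binary case by a convexity argument. The paper itself does not prove this lemma at all; it merely quotes it from the cited reference, so there is nothing to compare against. Your observation that the constant $\tfrac{2}{\ln 2}$ in the stated version is not tight (the sharp constant for base-2 divergence being $\tfrac{\ln 2}{2}$) is also correct and worth noting.
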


Back into our problem, we observe that:
\[
\left\{\begin{array}{lclclclcl}
\mathbb{I}(X_J ; M_A \mid J) & = & \frac{1}{n} \sum_{j \in [n]} \mathbb{I}(X_j; M_A )&  \leq & \frac{\mathbb{I}(X; M_A)}n & \leq & \frac{\mathbb{H}(M_A)}n & \leq&  \frac{k_A}{n}\\
\mathbb{I}(Y_I ; M_B \mid  I) & =&  \frac{1}{n} \sum_{i \in [n]} \mathbb{I}(Y_i; M_B ) & \leq & \frac{\mathbb{I}(Y; M_B)}n &\leq &\frac{\mathbb{H}(M_B)}n &\leq &\frac{k_B}{n}.
\end{array}\right.
\]
By Pinsker's inequality, it follows that:
\[
\left\{\begin{array}{lcl}
\mathbb{E}_{(m_A,j)} \left( \Vert p(\cdot ,m_A,j) - p(\cdot ,j) \Vert \right) & \leq & \sqrt{\frac{k_A}{n}} \\
\mathbb{E}_{(m_B,i)} \left( \Vert q(\cdot ,m_B,i) - q(\cdot ,i) \Vert \right) & \leq & \sqrt{\frac{k_B}{n}}
\end{array}\right.
\]
These latter bounds imply  that 
\[
\left\{\begin{array}{lcl}
\mathbb{E}_{(m_A,j)}  \left(p(a^* ,m_A,j) \right) & \leq & \frac{1}{2} + \sqrt{\frac{k_A}{n}} \\
\mathbb{E}_{(m_B,i)}  \left(q(b^* ,m_B,i) \right) & \leq & \frac{1}{2} + \sqrt{\frac{k_B}{n}}
\end{array}\right.
\]
Now, from Lemma~\ref{lem:xorindextechnical}, we have that
\begin{align*}
\Pr[ \mathcal{F} \mid  \mathcal{E}_{m_A,j}^A, \mathcal{E}_{m_B,i}^B] 
& \geq p(1 - a^*, m_A, j) \cdot q(1- b^*, m_B, i)\\
& = \big(1 -  p(a^*, m_A, j) \big) \cdot \big( 1- q(b^* , m_B, i)\big).
\end{align*}
As a consequence, we have 
\begin{align*}
\Pr[\mathcal{F}] &= \mathbb{E}_{m_A, m_B, i, j} (\Pr[\mathcal{F} \mid \mathcal{E}_{m_A,j}^A, \mathcal{E}_{m_B,i}^B])\\
&= \sum_{m_A, m_B, i,j } \Pr[\mathcal{F} \mid \mathcal{E}_{m_A,j}^A, \mathcal{E}_{m_B,i}^B]\cdot 
\Pr[\mathcal{E}_{m_A,j}^A, \mathcal{E}_{m_B,i}^B]\\
&\geq  \sum_{m_A, m_B, i,j } \left(1 -  p(a^*_{(m_A,j)}, m_A, j) \right) \left( 1- q(b^*_{(m_B,i)} , m_B, i)\right)
\Pr[\mathcal{E}_{m_A,j}^A, \mathcal{E}_{m_B,i}^B]\\
&= \sum_{m_A, m_B, i,j } \left(1 -  p(a^*_{(m_A,j)}, m_A, j) \right) \left( 1- q(b^*_{(m_B,i)} , m_B, i)\right)
\Pr[\mathcal{E}_{m_A,j}^A]\cdot \Pr[\mathcal{E}_{m_B,i}^B]\\
&= \sum_{m_A, j } \left(1 -  p(a^*_{(m_A,j)}, m_A, j) \right)\Pr[\mathcal{E}_{m_A,j}^A] \cdot 
\sum_{m_B, i } \left( 1- q(b^*_{(m_B,i)} , m_B, i)\right)\Pr[\mathcal{E}_{m_B,i}^B]\\
&= \left (1 - \mathbb{E}_{(m_A,j)}\big(p(a_{(m_a,j)}^* ,m_A,j)\big )\right)  \cdot 
\left (1- \mathbb{E}_{(m_B,i)}\big (q(b_{(m_B,i)}^* ,m_B,i) \big )\right)\\
&\geq \left( \frac{1}{2} - \sqrt{\frac{k_A}{n}} \right) \cdot \left(\frac{1}{2} - \sqrt{\frac{k_B}{n}}\right). 
\end{align*}
Since $\Pr[\mathcal{F}] \leq \epsilon$, we must have $( \frac{1}{2} - \sqrt{\frac{k_A}{n}}) \cdot (\frac{1}{2} - \sqrt{\frac{k_B}{n}}) \leq \varepsilon \leq \nicefrac{1}{5}$, implying that $k_A = \Omega(n)$ or $k_B = \Omega(n)$.

\section{Conclusion}

In this paper, we have performed an extensive study of 2-round hybrid models resulting from mixing \local, \congest, and \bcc, and we obtained a complete picture of the relative power of these models (see Figure~\ref{fig:lattice}). This is a first step toward approaching the minimization problem expressed in Eq.~\eqref{eq:min}, which asks for identifying the best combination of these three models for which there is an algorithm that solves a given distributed decision problem $\DL\in\L^*\B^*$ with a minimum number of rounds, or at minimum cost. Solving this minimization problem appears to be currently out of reach, but this paper provides some knowledge about the computational power of hybrid models. Concretely, a step forward in the direction of solving the problem of Eq.~\eqref{eq:min} would be to determine whether most hybrid models remain incomparable when allowing $t$~rounds for $t>2$. In particular, in the case of hybrid models mixing \local\/ and \bcc, we have shown that one can systematically assume that all \local\/ rounds are performed before all the \bcc\/ rounds. This does not holds for \congest\/ and \bcc, for 2-round algorithms. However, we do not know whether the classes $\prod_{i=1}^k\B^{\beta_i}\C^{\gamma_i}$ and $\prod_{i=1}^k\B^{\beta'_i}\C^{\gamma'_i}$ are systematically incomparable for all distinct sequences $((\beta_i,\gamma_i):i=1,\dots,k)$ and $((\beta'_i,\gamma'_i):i=1,\dots,k)$  such that $\sum_{i=1}^k(\beta_i+\gamma_i)=\sum_{i=1}^k(\beta'_i+\gamma'_i)$. 

The line of research investigated in this paper could obviously be carried out by considering other models as well, in particular other congested clique models like \ucc\/ and \ncc. It is easy to see that $\mathbf{U}$, the class of distributed languages that can be decided in one round in the unicast congested clique, is incomparable with the largest class of models considered in this paper. Namely, $\mathbf{U}\smallsetminus \L^*\B^*\neq \varnothing$ and $ \L^*\B^* \smallsetminus \mathbf{U} \neq \varnothing$. Also, previous work on the hybrid model combining \local\/ and \ncc\/ reveals that computing the diameter of the network cannot be done in a constant number of rounds in this model. Taking this under consideration, it could be interesting to study the class $(\L\mathbf{N})^*$ of distributed languages that can be decided in a constant number of rounds in the hybrid model combining \local\/ and \ncc, where $(\L\mathbf{N})^*=\cup_{t\geq 0}(\L\mathbf{N})^t$, and $\mathbf{N}$ denotes the class of languages decidable in one round in the node-capacitated clique.

\bibliographystyle{plain} 
\bibliography{biblio-BCL}
\newpage 
\appendix

\section{Bounding the probability of failure}

\renewcommand{\arraystretch}{1.2}
\begin{table}[h]
\begin{center}
\begin{tabular}{|c|c|c|c|c|}
\hline
$p_A$ & $p_B$ & $q_A$ & $q_B$ & 	$ \Pr[\mathcal{F}\mid  \mathcal{E}_{m_A,j}^A, \mathcal{E}_{m_B,i}^B]$ \\
\hline\hline
$1$ & $1$ & $1$ & $1$ & $\nicefrac{1}{2}$ \\ \hline
$0$ & $0$ & $0$ & $0$ & $\nicefrac{1}{2}$ \\ \hline
$0$ & $1$ & $0$ & $\frac{R_{A} - R_{B}}{R_{A} + R_{B} - 1}$ & $\nicefrac{1}{2}$ \\ \hline
$0$ & $\frac{R_{A} + R_{B} - 1}{R_{A} - R_{B}}$ & $0$ & $1$ & $\nicefrac{1}{2}$ \\ \hline
$0$ & $1$ & $1$ & $1$ & $1-R_{B}$ \\ \hline
$1$ & $0$ & $\frac{R_{B} - R_{A}}{R_{A} + R_{B} - 1}$ & $0$ & $\nicefrac{1}{2}$ \\ \hline
$\frac{R_{A} + R_{B} - 1}{R_{B} - R_{A}}$ & $0$ & $1$ & $0$ & $\nicefrac{1}{2}$ \\ \hline
$1$ & $0$ & $1$ & $1$ & $1 -R_{A}$ \\ \hline
$0$ & $0$ & $1$ & $1$ & $1 -\nicefrac{1}{2} \, R_{A} - \nicefrac{1}{2} \, R_{B} $ \\ \hline
$0$ & $1$ & $0$ & $\frac{R_{A} + R_{B} - 1}{R_{A} - R_{B}}$ & $\nicefrac{1}{2}$ \\ \hline
$0$ & $\frac{R_{A} - R_{B}}{R_{A} + R_{B} - 1}$ & $0$ & $1$ & $\nicefrac{1}{2}$ \\ \hline
$1$ & $1$ & $0$ & $1$ & $R_{B}$ \\ \hline
$0$ & $1$ & $0$ & $1$ & $\nicefrac{1}{2}$ \\ \hline
$1$ & $0$ & $0$ & $1$ & $\nicefrac{1}{2} -\nicefrac{1}{2} \, R_{A} + \nicefrac{1}{2} \, R_{B}$ \\ \hline
$0$ & $0$ & $0$ & $1$ & $\nicefrac{1}{2}$ \\ \hline
$1$ & $0$ & $\frac{R_{A} + R_{B} - 1}{R_{B} - R_{A}}$ & $0$ & $\nicefrac{1}{2}$ \\ \hline
$\frac{R_{B} - R_{A}}{R_{A} + R_{B} - 1}$ & $0$ & $1$ & $0$ & $\nicefrac{1}{2}$ \\ \hline
$1$ & $1$ & $1$ & $0$ & $R_{A}$ \\ \hline
$0$ & $1$ & $1$ & $0$ & $\nicefrac{1}{2} + \nicefrac{1}{2} \, R_{A} - \nicefrac{1}{2} \, R_{B} $ \\ \hline
$1$ & $0$ & $1$ & $0$ & $\nicefrac{1}{2}$ \\ \hline
$0$ & $0$ & $1$ & $0$ & $\nicefrac{1}{2}$ \\ \hline
$1$ & $1$ & $0$ & $0$ & $\nicefrac{1}{2} \, R_{A} + \nicefrac{1}{2} \, R_{B}$ \\ \hline
$0$ & $1$ & $0$ & $0$ & $\nicefrac{1}{2}$ \\ \hline
$1$ & $0$ & $0$ & $0$ & $\nicefrac{1}{2}$ \\ \hline
\end{tabular}
\end{center}
\caption{Solutions of the equations given by the KKT conditions in the proof of Lemma~\ref{lem:xorindextechnical}, and the corresponding value of $\Pr[\mathcal{F}\mid  \mathcal{E}_{m_A,j}^A, \mathcal{E}_{m_B,i}^B]$. }
\label{table:KKT}
\end{table}
\renewcommand{\arraystretch}{1}

\newpage

\section{Separations two-round hybrid models}

In this section we establish all remaining separation results depicted on Figure~\ref{fig:lattice}. A summary of these results is given in Table~\ref{tab:summary}.

\begin{table}[h]
\begin{center}
\begin{tabular}{c|c|c|c|c|c|c|c|c|c|c|c|c|}
 & ~\B~ & ~\L~ & ~\C~ & \B\B & \B\L & \B\C & \L\B & \L\L & \L\C & \C\B & \C\L & \C\C \\
 \hline
 \B & - & \ref{theo:BBvsS} & \ref{theo:BBvsS} & - & - & - & - & \ref{theo:BBvsS} & \ref{theo:BBvsS} & - & \ref{theo:BBvsS} & \ref{theo:BBvsS}\\
  \hline
 \L & \ref{theo:LvsS} & - & \ref{theo:LvsS} & \ref{theo:LvsS} & - & \ref{theo:LvsS} & - & - & - & \ref{theo:LvsS} & - & \ref{theo:LvsS}\\
 \hline
 \C & \ref{theo:CvsBB} & - & - & \ref{theo:CvsBB} & - & - & - & - & - & - & - & -\\
 \hline
 \B\B & \ref{theo:BBvsS} &\ref{theo:BBvsS} & \ref{theo:BBvsS} & - & \ref{theo:BBvsS} & \ref{theo:BBvsS} & \ref{theo:BBvsS} & \ref{theo:BBvsS} & \ref{theo:BBvsS} & \ref{theo:BBvsS} & \ref{theo:BBvsS} & \ref{theo:BBvsS}\\
  \hline
 \B\L & \ref{theo:BLvsByL} &  \ref{theo:BLvsByL} &  \ref{theo:BLvsByL} &  \ref{theo:BLvsByL} & - & \ref{theo:LvsS} & - &  \ref{theo:BLvsByL} &  \ref{theo:BLvsByL} & \ref{theo:BCvsCB} & \ref{theo:BLvsByL} & \ref{theo:BLvsByL}\\
  \hline
 \B\C & \ref{theo:CvsBB} & \ref{theo:BBvsS} & \ref{theo:BBvsS} & \ref{theo:CvsBB}  & - & - & - & \ref{theo:BBvsS} & \ref{theo:BBvsS} & \ref{theo:BCvsCB} & \ref{theo:BBvsS} & \ref{theo:BBvsS}\\
  \hline
 \L\B & \ref{theo:LvsS} & \ref{theo:BBvsS} & \ref{theo:BBvsS} & \ref{theo:LvsS} & \ref{theo:CBvsBL} & \ref{theo:CBvsBL} & - & \ref{theo:BBvsS} & \ref{theo:BBvsS} & \ref{theo:LvsS} & \ref{theo:BBvsS} & \ref{theo:BBvsS}\\
  \hline
 \L\L & \ref{theo:LLvsS} & \ref{theo:LLvsS} & \ref{theo:LLvsS} & \ref{theo:LLvsS} & \ref{theo:LLvsS} & \ref{theo:LLvsS} & \ref{theo:LLvsS} & - & \ref{theo:LLvsS} &  \ref{theo:CCvsLB} & \ref{theo:LCvsCL} & \ref{theo:LLvsS}\\
  \hline
 \L\C & \ref{theo:CCvsLB} & \ref{theo:CCvsLB}  & \ref{theo:CCvsLB}  & \ref{theo:LvsS} & \ref{theo:CCvsLB}  & \ref{theo:CCvsLB}  & \ref{theo:CCvsLB}  & - & - & \ref{theo:CCvsLB}  & \ref{theo:LCvsCL} & \ref{theo:LCvsCL}\\
  \hline
 \C\B & \ref{theo:CvsBB} & \ref{theo:BBvsS} & \ref{theo:BBvsS} & \ref{theo:CvsBB}  & \ref{theo:CBvsBL} & \ref{theo:CBvsBL} & - & \ref{theo:BBvsS} & \ref{theo:BBvsS} & - & \ref{theo:BBvsS} & \ref{theo:BBvsS}\\
  \hline
 \C\L & \ref{theo:CCvsLB} & \ref{theo:CCvsLB}  & \ref{theo:CCvsLB}  & \ref{theo:LvsS} & \ref{theo:CCvsLB}  & \ref{theo:CCvsLB}  & \ref{theo:CCvsLB}  & - & \ref{theo:CLvsLC} & \ref{theo:CCvsLB}  & - & \ref{theo:CLvsLC}\\
  \hline
 \C\C & \ref{theo:CCvsLB} & \ref{theo:CCvsLB}  & \ref{theo:CCvsLB}  & \ref{theo:CvsBB} & \ref{theo:CCvsLB} & \ref{theo:CCvsLB}  & \ref{theo:CCvsLB}  & - & - & \ref{theo:CCvsLB}  & - & -\\
 \hline
\end{tabular}
\end{center}
\caption{The number in each cell is a reference to the theorem establishing that the set of languages $\mathbf{S}_{row}$ corresponding to the row of that cell is not included in the set of languages $\mathbf{S}_{col}$ corresponding to the column of that cell, i.e., there is a language $\DL\in \mathbf{S}_{row}\smallsetminus \mathbf{S}_{col}$. The sign ``-'' in a cell indicates that $\mathbf{S}_{row}\subseteq \mathbf{S}_{col}$.}
\label{tab:summary}
\end{table}


Several results in this section are proved by reduction to the communication complexity problem \emph{set disjointness} (DISJ). Given two sets $x, y \subseteq [n]$ (usually represented as indicator vectors  $x, y \in \{0,1\}^n$), the task is to decide whether  $x \cap y = \varnothing$ (or equivalently whether $x_i \wedge y_i = 0$ for all $i \in [n]$). Formally, Alice receives $x$ as input, and Bob recieies $y$. The task is to compute 
\[
\text{DISJ}(x,y) = \begin{cases}
1 & \text{ if } x \cap y = \varnothing \\
0 & \text{ otherwise. }
\end{cases}
\]
The communication complexity of $\text{DISJ}$ is high, as shown below. 

\begin{lemma}[Theorem 6.19 in \cite{rao2020communication}]\label{lem:disj}
For every $\epsilon>0$, any randomized protocol that computes DISJ with error probability $\frac{1}{2}-\varepsilon$ must communicate $\Omega(\varepsilon^2 n)$ bits between the two players. 
\end{lemma}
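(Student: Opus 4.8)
The statement to prove is \textbf{Lemma~\ref{lem:disj}}, the standard randomized communication lower bound for set disjointness: any randomized protocol computing $\text{DISJ}(x,y)$ with error probability $\frac12-\varepsilon$ must communicate $\Omega(\varepsilon^2 n)$ bits. The paper cites this as Theorem 6.19 in \cite{rao2020communication}, so the intended ``proof'' is a citation; nonetheless I sketch how the result is proved by information complexity, which is the modern route and matches the tools already introduced in the excerpt (mutual information, Pinsker's inequality, the super-additivity lemma for mutual information).

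\begin{proof}[Proof sketch]
The plan is to lower bound the \emph{information complexity} of DISJ under a suitable product-like distribution on inputs, and then invoke the standard fact that information complexity lower bounds communication. First I would fix the hard distribution $\mu$ on $(x,y)\in\{0,1\}^n\times\{0,1\}^n$ coordinatewise: introduce an auxiliary ``which player owns coordinate $\ell$'' variable $D_\ell\in\{A,B\}$ uniform, set the coordinate of the non-owner to $0$, and the coordinate of the owner uniform in $\{0,1\}$. Under this distribution $x\cap y=\varnothing$ almost surely, so $\mu$ is supported (up to the rare diagonal event) on \textsf{yes}-instances; this collapsing to the negative side is the usual device that makes the instances hard while keeping the coordinates nearly independent given $D=(D_1,\dots,D_n)$. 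The key quantity is the conditional information cost $\mathrm{IC}_\mu(\Pi)=\mathbb{I}(X,Y;\Pi\mid D)$, where $\Pi$ denotes the full transcript.

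The core steps, in order, are: (i) a \emph{direct sum} reduction showing $\mathrm{IC}_\mu(\Pi)\ge \sum_{\ell=1}^n \mathbb{I}(X_\ell,Y_\ell;\Pi\mid D)$, which follows because, conditioned on $D$, the coordinates are independent, so I can apply the super-additivity-of-mutual-information lemma (the lemma already quoted in the excerpt as Theorem 6.12 of \cite{rao2020communication}) with $A_\ell=(X_\ell,Y_\ell)$ and $B=\Pi$; (ii) a \emph{single-coordinate} lower bound showing each $\mathbb{I}(X_\ell,Y_\ell;\Pi\mid D)=\Omega(\varepsilon^2)$ for a protocol with advantage $\varepsilon$. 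Step (ii) is the heart of the matter: a protocol solving DISJ must in particular distinguish the single-coordinate ``AND'' instance $(X_\ell,Y_\ell)=(1,1)$ from the all-others on coordinate $\ell$, and an error-$\left(\tfrac12-\varepsilon\right)$ protocol has correlation at least $2\varepsilon$ with the correct answer. Converting this statistical advantage into information uses Pinsker's inequality (the lemma quoted as Lemma 6.13 of \cite{rao2020communication}): if the transcript distributions on the two relevant inputs have total variation distance bounded below by $\Omega(\varepsilon)$, then the Kullback--Leibler divergence, hence the per-coordinate mutual information, is $\Omega(\varepsilon^2)$. Summing over the $n$ coordinates gives $\mathrm{IC}_\mu(\Pi)=\Omega(\varepsilon^2 n)$, and since the transcript length is at least its entropy, which dominates the mutual information $\mathbb{I}(X,Y;\Pi\mid D)$, the communication is $\Omega(\varepsilon^2 n)$ bits.

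The main obstacle is step (ii), the single-coordinate information bound, which is where the genuine content of the set-disjointness lower bound lives. The subtlety is that measuring advantage naively on the $(1,1)$ input does not immediately yield an information bound, because a protocol can reveal information about a coordinate it does not need; one must argue that on the hard distribution the protocol is forced to ``leak'' $\Omega(\varepsilon^2)$ bits precisely to handle the possibility that the single coordinate is the intersecting one. The clean way to carry this out is the Razborov-style reduction to the two-bit \textsc{And} primitive together with a convexity/Pinsker argument relating squared advantage to divergence; handling the conditioning on the owner variable $D_\ell$ (so that one truly compares the owner-$=A$ and owner-$=B$ branches) is the delicate bookkeeping. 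Since the statement is quoted verbatim from \cite[Theorem 6.19]{rao2020communication}, the formal proof is deferred to that reference.
\end{proof}
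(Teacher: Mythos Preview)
Your proposal is correct: the paper does not prove this lemma at all---it is stated as a citation to Theorem~6.19 of \cite{rao2020communication} and used as a black box in the subsequent reductions. Your information-complexity sketch (hard distribution with owner variables, direct-sum via super-additivity of mutual information, single-coordinate $\Omega(\varepsilon^2)$ bound via Pinsker) is the standard route and matches what the cited reference does, so there is nothing further to compare.
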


\begin{theorem}\label{theo:CvsBB}
$\C \smallsetminus \B^* \neq \varnothing$. This result hols even for randomized decision algorithms which may err with probability $\frac{1}{2}-\varepsilon$, for every $\epsilon>0$. 
\end{theorem}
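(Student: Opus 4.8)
The plan is to exhibit a distributed language that is trivially decidable in one \congest\/ round, but whose decision in any constant number of \bcc\/ rounds would yield a cheap communication protocol for DISJ, contradicting Lemma~\ref{lem:disj}. The natural candidate is a language on a very simple graph, say a single edge $\{u,v\}$ (or a bounded collection of disjoint edges), where each endpoint is labeled by a binary vector: $\ell(u)=x\in\{0,1\}^n$ and $\ell(v)=y\in\{0,1\}^n$, and the language asks that $x\cap y=\varnothing$, i.e., $x_r\wedge y_r=0$ for all $r\in[n]$. Call it, say, \texttt{disjoint-edge-labels}. In \congest\/ this is solved in a single round: $u$ sends $x$ to $v$ bit by bit... well, $x$ has $n$ bits which exceeds the $O(\log n)$ bound per message, so I must be careful. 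The right move is to pad the graph so that $n$ is sublogarithmic in the number of nodes, exactly as done in the proof of Theorem~\ref{theo:CBvsBL}: take a graph on $N$ nodes with $n=\Theta(\log N)$, place the two vectors on two adjacent special nodes, and let all other nodes carry the trivial label. Then a single \congest\/ message of $O(\log N)=O(n)$ bits along the special edge suffices, so the language lies in $\C$.

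First I would make precise the gadget graph: on $N$ nodes, designate an edge $\{u,v\}$; let $u$ be labeled by $(x,\mathrm{flag}_A)$ and $v$ by $(y,\mathrm{flag}_B)$ with $x,y\in\{0,1\}^n$, $n=\lfloor\log N\rfloor$; all other $N-2$ nodes labeled $\bot$; and include enough structural checks (degrees, IDs, presence of exactly one flagged-$A$ and one flagged-$B$ node which must be adjacent) so that the only interesting instances are the intended ones. The language consists of the valid instances for which $x\cap y=\varnothing$. Membership in $\C$: in one \congest\/ round $u$ transmits $x$ to $v$ using $O(\log N)$ bits (legal since $n=O(\log N)$), and $v$ rejects iff it detects a common index; all other nodes accept; so \texttt{disjoint-edge-labels}$\in\C$.

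Next I would do the reduction. Suppose for contradiction that \texttt{disjoint-edge-labels}$\in\B^k$ for some constant $k$, decided by a (possibly randomized, error $\leq\frac12-\varepsilon$) algorithm $\mathcal{A}$. Given a DISJ instance $(x,y)\in\{0,1\}^n\times\{0,1\}^n$, the two players build the gadget graph on $N=2^{\Theta(n)}$ nodes, Alice owning $u$ together with, say, the first half of the $\bot$-nodes, Bob owning $v$ with the second half, all IDs and edges between $\bot$-nodes being common knowledge and hence simulable by both players without communication. In each of the $k$ \bcc\/ rounds, the message broadcast by a $\bot$-node depends only on public data, so neither player needs to send it; only the $O(k\log N)=O(n)$ bits broadcast over $k$ rounds by $u$ (which Alice sends to Bob) and by $v$ (which Bob sends to Alice) need to be exchanged. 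After simulating all $k$ rounds each player outputs the conjunction of the accept bits of the nodes it owns, and the protocol accepts iff both do. Since $(G,\ell)$ is in the language iff $x\cap y=\varnothing$, this solves DISJ with the same error, using $O(n)$ bits, contradicting $\Omega(\varepsilon^2 n)$ for any fixed $\varepsilon>0$. Hence \texttt{disjoint-edge-labels}$\in\C\smallsetminus\B^*$.

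The main obstacle is not the reduction itself — it is essentially the same bookkeeping as in Theorem~\ref{theo:CBvsBL} — but rather ensuring the $\congest$ upper bound is honest despite the $O(\log N)$ bandwidth cap: the vectors must be short relative to $N$, which forces the blow-up $N=2^{\Theta(n)}$ and makes $O(k\log N)=O(kn)$, still $O(n)$ for constant $k$, so the contradiction survives. A secondary subtlety is designing the structural part of the labeling so that \emph{no} clever $\bcc$ strategy can exploit ill-formed instances: I would follow the standard device of letting every node first verify, from the broadcast IDs and adjacency information, that the instance has exactly the prescribed shape and reject otherwise, which reduces the problem to the intended family of instances where the argument above applies cleanly.
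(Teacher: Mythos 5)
There is a fatal gap in your choice of language. By padding the graph to $N=2^{\Theta(n)}$ nodes so that the $n$-bit vectors fit into a single \congest\/ message of $O(\log N)$ bits, you have also inflated the \bcc\/ bandwidth to $O(\log N)=O(n)$ bits per broadcast. Consequently your language \texttt{disjoint-edge-labels} is in fact in $\B^1$: in one \bcc\/ round the special node $u$ simply broadcasts its entire vector $x$ (legal, since $|x|=n=O(\log N)$), and $v$, knowing both $x$ and $y$, decides. So the candidate language cannot witness $\C\smallsetminus\B^*\neq\varnothing$. The quantitative symptom appears at the end of your reduction: you exchange $O(k\log N)=O(kn)$ bits and claim this contradicts the $\Omega(\varepsilon^2 n)$ lower bound for DISJ, but $O(n)$ versus $\Omega(n)$ is no contradiction at all --- the whole argument collapses for exactly the reason that one \bcc\/ round already carries enough bits.

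The idea you are missing is that separating $\C$ from $\B^*$ must exploit the one capability \congest\/ has that \bcc\/ lacks: sending \emph{different} $O(\log n)$-bit messages to each of many neighbors, i.e.\ a total per-node throughput of $\Theta(n\log n)$ bits on a dense graph versus $\Theta(\log n)$ bits for a broadcast. The paper's witness is \texttt{disjointness-on-clique}: on the $n$-node clique each node carries an $n$-bit label, and the condition is that for every coordinate $i$ some node has a $0$ in position~$i$. In one \congest\/ round each node sends its $i$-th bit to the rank-$i$ node (one bit per edge), so the language is in $\C$; but placing $x$ on node~$1$, $y$ on node~$2$, and all-ones vectors elsewhere reduces DISJ to deciding this language, and a $k$-round \bcc\/ algorithm yields an $O(k\log n)$-bit protocol for DISJ, which genuinely contradicts the $\Omega(n)$ bound. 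Any correct proof needs a gadget of this dense, personalized-message flavor; a single labeled edge in a padded sparse graph cannot work.
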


\begin{proof}
Let us consider the distributed language 
\[
\mbox{\tt disjointness-on-clique} = \{ (K_n,\ell)\mid (\ell: V(K_n)\mapsto \{0,1\}^n)\land (\forall  i \in [n], \exists  v, \ell(v)_i = 0)\},
\]
 where $K_n$ is the $n$-node clique, and $\ell(v)_i$ is the $i$th entry of the vector~$\ell(v)$. 
 
 We first show that $\mbox{\tt disjointness-on-clique} \in \C$. Note first, that, In one round of \congest, the nodes can check whether they are in a clique. Indeed, recall that every node knows~$n$, and therefore a node with degree less than $n-1$ rejects. Every node orders all nodes, including itself, according to their IDs, providing every node with a rank. Note that all nodes ranks the nodes the same. During the \congest\/ round, each node~$v$ sends $\ell(v)_i$ to the node with rank~$i$ (which could be itself). After the round of communication, the node $v$ with rank $i$ has the set $\{\ell(w)_i : w\in V(K_n)\}$. This node accepts if there exists $w \in V(K_n)$ such that $\ell(w)_i= 0$, and it rejects otherwise.   
 
Let $k \in \mathbb{N}$. We show that  $\mbox{\tt disjointness-on-clique} \notin  \B^k$.  For establishing a contradiction, let us assume that the there exists a $k$-round \bcc\/ algorithm $\mathcal{A}$ deciding {\tt disjointness-on-clique} with error probability $\frac{1}{2}-\varepsilon$.  We show how to use $\mathcal{A}$ for solving DISJ. Let $x,y\in\{0,1\}^n$ be an instance of DISJ. Alice and Bob consider the $n$-node clique $K_n$, with identifiers from~1 to~$n$. Let $e=\{1,2\}$ be the edge connecting the nodes with ID~1 and the node with ID~2. The two players consider the labeling $\ell$ such that $\ell(1) = x$, $\ell(2) = y$,  and $\ell(v) = (1,1,\dots,1)$ for every node $v$ with $\id(v)\geq 3$.  Note that Alice does not know $\ell(2)$, and Bob does not know $\ell(1)$. 
By construction, we have that $\mathcal{A}$ accepts $(K_n, \ell)$ if and only if $\text(DISJ)(x,y)=1$.  The two players simulate the $k$ \bcc\/ rounds of $\mathcal{A}$ as follows. At each round~$r$, Alice sends to Bob the message $m_{1,r}$ broadcasted by the node with ID~1, and Bob sends to Alice the message $m_{2,r}$ broadcasted by the node with ID~2. 
With this information, Alice and Bob can simulate $\mathcal{A}$, tell each other whether one of the nodes they simulate rejects, and then compute $\text{DISJ}(x,y)$. This protocol for DISJ has communication complexity  $\mathcal{O}(k \log n)$, a contradiction with Lemma~\ref{lem:disj}. 
\end{proof}

Note that the proof of Theorem~\ref{theo:CvsBB} shows that the separation holds even for algorithms performing up to $o(n/\log n)$ \bcc\/ rounds.  
  
 \begin{theorem}\label{theo:LvsS}
 $\L \smallsetminus (\B\B \cup \B\C \cup \C\B  \cup \C\C) \neq \varnothing$. This result hols even for randomized decision algorithms which may err with probability $\frac{1}{2}-\varepsilon$, for every $\epsilon>0$. 
 \end{theorem}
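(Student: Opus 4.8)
The plan is to exhibit a distributed language in $\L$ that simply hard-codes \emph{set disjointness} onto a single edge, and then to separate it from all four classes $\B\B$, $\B\C$, $\C\B$, $\C\C$ in one stroke by a reduction to Lemma~\ref{lem:disj}, in the spirit of Theorem~\ref{theo:CvsBB}. The leverage is that on a two-node network the non-\local\ models are essentially blind — each $O(\log n)$-bit message carries only $O(1)$ bits — whereas a single \local\ round on a radius-one graph already reveals the entire instance. I would deliberately avoid the more familiar candidate of $C_4$-freeness, since proving it outside the \emph{combined} two-round classes, and against randomized protocols with error $\tfrac12-\epsilon$, is considerably more delicate than the elementary argument below.

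First I would fix the language. Let $\DL$ be the set of all $(G,\ell)$ such that, whenever $G$ is the two-vertex complete graph $K_2$ on $\{u,v\}$, the labels satisfy $\ell(u)\cap\ell(v)=\varnothing$ (reading $\ell(u),\ell(v)\in\{0,1\}^*$ as indicator vectors of finite subsets of $\mathbb{N}$, zero-padding if the lengths differ); on every $G$ not isomorphic to $K_2$ no constraint is imposed. Then $\DL\in\L$: in one \local\ round each node learns the ID and label of its neighbours, so a node that sees $n\neq 2$, or $n=2$ with degree~$0$, accepts, and a node that sees $n=2$ with degree~$1$ accepts iff its own label is disjoint from the label received. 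On a yes-instance the two vertices of $K_2$ both accept; on a no-instance $(K_2,\ell)$ with intersecting labels both reject; and on every instance with $G\not\cong K_2$ all nodes accept, correctly, since such instances belong to $\DL$.

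Next comes the separation. Fix $\epsilon>0$ and suppose, towards a contradiction, that some two-round algorithm $\mathcal{A}$ with error at most $\tfrac12-\epsilon$ decides $\DL$ and lies in one of $\B\B,\B\C,\C\B,\C\C$. For every $m$ I would build an $O(1)$-bit randomized protocol $\Pi$ for $\mathrm{DISJ}$ on $\{0,1\}^m$ as follows. On inputs $x$ (Alice) and $y$ (Bob), the players run $\mathcal{A}$ on $K_2$ with $\id(u)=1$, $\id(v)=2$, $\ell(u)=x$, $\ell(v)=y$, Alice simulating $u$ and Bob simulating $v$. The key observation is that on $K_2$ a \congest\ message and a \bcc\ broadcast are the same object — an $O(\log 2)=O(1)$-bit message delivered to the unique neighbour — so, round by round, the player owning the sender computes that message from its own input and the messages it has already received, and transmits it; after the two rounds each player knows the decision of the node it simulates, and they exchange these two bits and output the conjunction. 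Since $(K_2,\ell)\in\DL$ iff $x\cap y=\varnothing$, and $\mathcal{A}$ accepts iff both nodes accept, $\Pi$ computes $\mathrm{DISJ}(x,y)$ with error at most $\tfrac12-\epsilon$ while communicating only $O(1)$ bits — a bound independent of $m$, because the bandwidth of $\mathcal{A}$ depends on the network size $2$, not on $|x|$. For $m$ large this contradicts the $\Omega(\epsilon^2 m)$ lower bound of Lemma~\ref{lem:disj}, whence $\DL\notin \B\B\cup\B\C\cup\C\B\cup\C\C$.

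I do not expect a genuine obstacle, only a few care-points. Randomness: Alice simulates $u$'s private coins and Bob $v$'s, which is enough because each player only relays its own node's $O(1)$-bit messages; if $\mathcal{A}$ uses shared coins the reduction is only easier and Lemma~\ref{lem:disj} still applies. The fact that a single construction defeats all four target classes holds precisely because \bcc\ and \congest\ coincide on $K_2$, so there is nothing model-specific in the simulation. And for readers uncomfortable with a network of size~$2$, one may instead attach $x$ and $y$ to two vertices at distance at most~$2$ in any fixed small $\bot$-labelled graph possessing a one-edge cut that separates them — the \local\ decidability and the reduction are then unchanged — but the two-vertex version is the cleanest.
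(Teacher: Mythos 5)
Your argument is correct under the paper's formal definitions, and it runs on the same engine as the paper's own proof: plant a set-disjointness instance on the labels of two adjacent nodes, let Alice and Bob each simulate one side of the hypothetical two-round algorithm, observe that only the bandwidth-bounded messages emanating from the two labeled nodes depend on the inputs, and invoke Lemma~\ref{lem:disj}. The difference is the host graph. The paper uses the path $P_{2n}$ with the two $n$-bit labels on the adjacent middle nodes, so the network size scales with the instance and the simulation costs $O(\log n)$ bits against the $\Omega(\epsilon^2 n)$ bound; you use $K_2$ with $m$-bit labels for $m$ unrelated to the network size $2$, so the simulation costs $O(1)$ bits, and (as you note) \bcc\ and \congest\ coincide there, which is why one construction kills all four classes at once. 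What your version buys is brevity; what it costs is robustness: it exploits a degenerate corner of the framework in which labels are unbounded relative to $n$ and the $O(\log n)$ bandwidth collapses to a constant. The paper's instances keep the label length linear in the number of nodes, so its separation would survive the natural restriction (implicitly respected by every construction in the paper) that inputs be polynomially bounded in $n$, whereas yours would not. If you keep the $K_2$ version, say explicitly that every $(G,\ell)$ with $G\not\cong K_2$ is a yes-instance (you do, in effect), since soundness of "accept unconditionally elsewhere" depends on it; otherwise, your distance-two fallback on a growing $\bot$-labelled gadget is exactly the paper's {\tt disjointness-on-edge} in disguise and is the safer choice.
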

 
 \begin{proof}
Let us consider the following distributed language
\begin{align*}
\mbox{\tt disjointness-on-edge} = & \{(P_{2n},\ell)\mid (n>2)  \land (\ell(u_1)\in\{0,1\}^n) \land (\ell(v_1)\in\{0,1\}^n) \\
& \land (\text{DISJ}(\ell(u_1), \ell(v_1))=1) \land (\forall w \notin\{u_1,v_1\}, \ell(v)=\bot)\}, 
\end{align*}
where $P_{2n}$ is the path $(u_n, \dots, u_1, v_1, \dots v_n)$ with $2n$ nodes. A simple \local\/ algorithm guarantees that ${\mbox{\tt disjointness-on-edge} \in\L}$, that is, every node of degree $>2$ rejects, and $u_1$ and $v_1$ exchange their values and accept if and only if $\mbox{DISJ}(\ell(u_1), \ell(v_1))=1$. 

Let $\textbf{S} = \B\B \cup \B\C \cup \C\B \cup \C\C$.  We now show that $\mbox{\tt disjointness-on-edge}\notin\mathbf{S}$. For establishing a contradiction, let us assume that the there exists a 2-round algorithm $\mathcal{A}$ mixing \congest\/ and \bcc\/ for deciding {\tt disjointness-on-clique} with error probability $\frac{1}{2}-\varepsilon$. We show how to use this algorithm to compute DISJ. Let $(x,y)$ be an instance of  DISJ. Alice and Bob construct the instance $(P,\ell)$ of {\tt disjointness-on-edge} where $\ell(u_1) = x$ and $\ell(v_1)=y$. By construction $\text{DISJ}(x,y)=1$ if and only if $(P, \ell)\in\mbox{\tt disjointness-on-edge}$. Of course, Alice does not know $\ell(v_1)$, and Bob does not know $\ell(u_1)$. All messages communicated in the first round of $\mathcal{A}$ by all nodes different from $u_1$ and $v_1$ do not depend on $(x,y)$, and can thus be simulated by the players without any communication. Alice and Bob generate and exchange the messages that $u_1$ and $v_1$ communicate in the first  round of $\mathcal{A}$. If the first round is a \congest\/ round, note that each of the two nodes may generate two messages, one for each or their two neighbors. For the second rounds, Alice and Bob have all the information sufficient to compute what messages will be generate by the nodes, excepted for nodes $u_1$ and~$v_1$, respectively. So Alice and Bob exchange these messages. Alice accepts if all nodes $u_1, \dots, u_n$ accept, and Bob accept if all node $v_1, \dots, v_n$ accept. Then they exchange their decision. This protocol computes DISJ with error probability $\frac{1}{2}-\varepsilon$. This is a  contradiction with Lemma~\ref{lem:disj} as only $O(\log n)$ bits were exchanged by the two players. 
 \end{proof}


\begin{theorem}\label{theo:LLvsS}
$\L\L \smallsetminus (\B\B \cup \L\B \cup \C\C) \neq \varnothing$ This result hols even for randomized decision algorithms which may err with probability $\frac{1}{2}-\varepsilon$, for every $\epsilon>0$. 
\end{theorem}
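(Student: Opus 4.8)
The plan is to exhibit a single distributed language lying in $\L\L$ but in none of $\B\B$, $\L\B$, $\C\C$, and to derive the three non-membership statements by reductions from the communication complexity of set disjointness (Lemma~\ref{lem:disj}).

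\textbf{The language.} For $n\geq 1$, let $P_n$ denote the path on $2n+3$ vertices
\[
s_1 - s_2 - \cdots - s_n - a_1 - a_2 - a_3 - t_n - \cdots - t_2 - t_1 ,
\]
with identifiers $1,\dots,2n+3$ assigned consecutively along it. For $x,y\in\{0,1\}^n$, let $\ell_{x,y}$ be the labeling with $\ell_{x,y}(s_n)=x$, $\ell_{x,y}(t_n)=y$, and $\ell_{x,y}(w)=\bot$ for every other vertex $w$. Define
\[
\mbox{\tt disjointness-at-distance-4} = \big\{ (P_n,\ell_{x,y}) : n\geq 1,\ x,y\in\{0,1\}^n,\ \text{DISJ}(x,y)=1 \big\}.
\]
In $P_n$ the two data vertices $s_n$ and $t_n$ are at distance $4$, and the midpoint $a_2$ is at distance exactly $2$ from each of them. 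Hence a straightforward $2$-round \local\/ algorithm decides the language (in the same spirit as the proof of Theorem~\ref{theo:LvsS}): after two rounds $a_2$ knows both $x$ and $y$ and rejects iff $\text{DISJ}(x,y)=0$, every other vertex accepts, and the usual local consistency checks (rejecting on wrong degrees or malformed label patterns) dispose of ill-formed instances. Thus $\mbox{\tt disjointness-at-distance-4}\in\L^2=\L\L$.

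\textbf{The reductions.} Fix $\mathbf{S}\in\{\B\B,\L\B,\C\C\}$ and suppose, towards a contradiction, that some $2$-round algorithm $\mathcal{A}$ of type $\mathbf{S}$ decides $\mbox{\tt disjointness-at-distance-4}$ with error probability $\frac{1}{2}-\varepsilon$. From $\mathcal{A}$ we build a $1$-round protocol for $\text{DISJ}$ with the same error: on input $(x,y)$, Alice and Bob conceptually form $(P_n,\ell_{x,y})$, Alice simulating the vertices $s_1,\dots,s_n,a_1,a_2$ and Bob simulating $a_3,t_n,\dots,t_1$ (so the only edge of $P_n$ between the two parts is $\{a_2,a_3\}$), each player supplying the coins of its own vertices. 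The crucial observation is that every vertex $w\neq s_n,t_n$ carries the label $\bot$ together with a publicly known identifier and neighborhood, so its initial state is common knowledge to the players; consequently, round by round, once the input-dependent messages of the previous rounds have been exchanged, the only message (in a \congest\/ or \local\/ round) or broadcast (in a \bcc\/ round) of the current round that a player cannot recompute unaided is one emitted by a vertex whose state still depends on the half of the input held by the other player --- and in two rounds there are only a constant number of such vertices, namely $s_n$, $t_n$, and, if a \local\/ round occurs, their neighbors. The players therefore exchange, after each round, these $O(1)$ many $O(\log n)$-bit messages, making them common knowledge; after the two rounds each player has reconstructed the final state, hence the accept/reject decision, of every vertex it simulates. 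Finally Alice tells Bob in one bit whether all her vertices accepted, Bob reciprocates, and the protocol outputs ``$\text{DISJ}(x,y)=1$'' iff both bits are positive. Since $(P_n,\ell_{x,y})\in\mbox{\tt disjointness-at-distance-4}$ iff $\text{DISJ}(x,y)=1$, this is a $(\frac{1}{2}-\varepsilon)$-error protocol for $\text{DISJ}$ communicating only $O(\log n)=o(\varepsilon^2 n)$ bits, contradicting Lemma~\ref{lem:disj}. Hence $\mbox{\tt disjointness-at-distance-4}\notin\mathbf{S}$ for each $\mathbf{S}\in\{\B\B,\L\B,\C\C\}$, which is the claim. (The very same language and reduction also give $\mbox{\tt disjointness-at-distance-4}\notin\B\C\cup\L\C$, covering the remaining entries of the row $\L\L$ of Table~\ref{tab:summary}.)

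\textbf{Where the work is.} The membership in $\L\L$ and the appeal to Lemma~\ref{lem:disj} are routine; the care goes into the round-by-round accounting of which messages must actually be transmitted, and this has to be argued slightly differently in the three models. The $\C\C$ case is the cleanest: one checks that within two \congest\/ rounds no information about $x$ ever reaches the cut edge $\{a_2,a_3\}$ (symmetrically for $y$), so essentially nothing crosses. In $\L\B$ and $\B\B$ the subtle point is that a single \bcc\/ round formally makes every vertex's state depend on the inputs; the resolution is that once a boundary broadcast has been exchanged it becomes common knowledge and is free thereafter, so the set of ``input-dependent, not-yet-communicated'' messages stays of constant size across both rounds, keeping the total communication at $O(\log n)$.
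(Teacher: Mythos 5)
Your proof is correct and follows essentially the same route as the paper's: a disjointness-on-a-path language (the paper places the two input vectors at distance~3, on $u_2$ and $v_2$, rather than~4), membership in $\L\L$ via a node that sees both vectors within two rounds, and non-membership in $\B\B$, $\L\B$, $\C\C$ by having Alice and Bob each own half the path and exchange only the constantly many input-dependent messages, contradicting Lemma~\ref{lem:disj}. One small inaccuracy in your closing remarks: the $\C\C$ case is not symmetric --- $y$-dependent information does cross the cut edge $\{a_2,a_3\}$ in round~2 (via $t_n\to a_3\to a_2$, since $t_n$ is adjacent to $a_3$), so one $O(\log n)$-bit message must still be sent from Bob to Alice; this is already covered by your general round-by-round accounting, so nothing breaks.
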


\begin{proof}

Let us define the distributed language {\tt disjointness-on-path} of pairs $(P,\ell)$ where $P = u_n, \dots, u_1, v_1, \dots v_n$  is a path of length $2n$ ($n >2$), and $\ell: V(P) \rightarrow \{0,1\}^*$ satisfies that $\ell(w) = \bot$ if $w \neq \{u_2, v_{2}\}$ and $DISJ(\ell(u_2), \ell(v_2))=1$. In words, {\tt disjointness-on-path} is the language of paths that have a yes-instance of disjointness in two nodes at distance $2$.
Trivially {\tt disjointness-on-path} $ \in \L^2$: in a protocol every node except $v_1$ and $u_1$ accept. Nodes $v_1$ and $u_1$ learn the values of $\ell(v_2)$ and $\ell(u_2)$ and accept if and only if $DISJ(\ell(u_2), \ell(v_2))=1$. 

Let \(\textbf{S} = (\B\B \cup \L\B \cup \C\C) \). We now show that {\tt disjointness-on-path} $\notin$ {\bf S}. By contradiction, let us assume that there exists an $\nicefrac{1}{2} - \epsilon$-error algorithm $\mathcal{A}$ in {\bf S} solving  {\tt disjointness-on-path}. We show how to define a two-player, $\nicefrac{1}{2} - \epsilon$-error protocol $\Pi$ for DISJ. Let $(x,y)$ be an instance of DISJ and consider the instance $(P,\ell^*)$ of {\tt disjointness-on-path} where $\ell^*(u_2) = x$ and $\ell^*(v_2)=y$. Clearly $(P, \ell^*) \in$ {\tt disjointness-on-path} if and only if $(x,y)\in$ DISJ .

First, let us suppose that $\mathcal{A}$ is a protocol consisting in two {\bcc} rounds. In this case $\Pi$ consists in two rounds of communication. Initially,  using $x$ Alice simulates the first round of $\mathcal{A}$ in every node of $P$ except $v_2$ obtaining messages $\{ m(w): w\in V(P)\setminus\{v_2\}\}$. Similarly, using $y$ Bob simulates the first round of $\mathcal{A}$ on every node of $P$ except $u_2$, obtaining messages $\{ m(w): w\in V(P)\setminus\{u_2\}\}$. Then, in the first round of $\Pi$, Alice and Bob interchange $m(u_2)$ and $m(v_2)$, in order to obtain the pack of messages $M = \{ m(w): w\in V(P) \}$ that every node receives in the first {\bcc} round of $\mathcal{A}$.  The second round is very similar: using $x$ and $M$ Alice simulates $\mathcal{A}$ on every node except $v_2$, obtaining the pack of messages communicated in the second round of $\mathcal{A}$ except for the message of $v_2$. At the same time using $y$ and $M$ Bob simulates $\mathcal{A}$ on every node except $u_2$, obtaining the pack of messages communicated in the second round of $\mathcal{A}$ except for the message of $u_2$. Then, in the second round of $\Pi$ Alice and Bob interchange the second messages of $u_2$ and $v_2$. Finally, Alice simulates the output of every node except $v_2$ and accept if every node accepts. Bob simulates the output of every node except $u_2$ and accept if every node accepts. We deduce that $\Pi$ is an $\epsilon$-error protocol for {\tt disjointness}. Nevertheless, the number of bits communicated in the execution of $\Pi$ corresponds to the two messages broadcasted by $u_2$ and $v_2$, which is $\cO(\log n)$. This contradicts Lemma~\ref{lem:disj}. We deduce that {\tt disjointness-on-path} does not belong to \B\B.

Now, let us suppose that $\mathcal{A}$ is a protocol consisting in a {\local} round followed by a {\bcc round}. In this case $\Pi$ consists in just one round of communication. Initially,  using $x$ Alice simulates first round of $\mathcal{A}$ obtaining that nodes $u_1, u_2$ and $u_3$ learn the value of $x$, and all other nodes have no information of $x$. Then, Alice simulates $\mathcal{A}$ to generate the messages that $u_1$, $u_2$ and $u_3$ communicate in the {\bcc} round and sends these messages to Bob.  Similarly, Bob simulate the rounds of $\mathcal{A}$ and communicates the messages that $v_1$, $v_2$ and $v_3$ communicate in the {\bcc} round and then sends such messages to Alice. After the communication round of $\Pi$, Alice and Bob generate the information communicated by every vertex of the graph except $u_1, u_2, u_3, v_1, v_2, v_3$. Since these nodes have no information of $x$ or $y$, this simulation can be done without sending any further messages between Alice and Bob. Finally, Alice simulates the output of every node except $v_1, v_2, v_3$ and accept if all accept. Bob simulates the output of every node except $u_1, u_2, u_3$  and accept if all accept.
 We deduce that $\Pi$ is an $\epsilon$-error protocol for {\tt disjointness}. Nevertheless, the number of bits communicated in the execution of $\Pi$ corresponds to the messages broadcasted by $u_1, u_2, u_3, v_1 v_2$ and $v_3$, which is $\cO(\log n)$. This contradicts Lemma~\ref{lem:disj}. We deduce that {\tt disjointness-on-path} does not belong to \L\B.

Finally, let us suppose that $\mathcal{A}$ is a protocol consisting in two {\congest} rounds. In this case $\Pi$ consists in just one round of communication. Observe that all messages communicated in the first round of $\mathcal{A}$ by nodes different than $u_2, v_2$ do not depend on $(x,y)$ and can be simulated by the players without any communication. Then, protocol $\Pi$ consists in Alice and Bob generating and interchanging the messages that $u_1$ and $v_1$ communicate in the second round of $\mathcal{A}$. Then Alice accept if every node $u_1, \dots, u_n$ accepts, and Bob accept if every node $v_1, \dots, v_n$ accept. By the correctness of $\mathcal{A}$, with probability $1-\epsilon$, every node accepts if and only if $(P, \ell^*)$ is a yes-instance of {\tt disjointness-on-path}. We deduce that $\Pi$ is an $\epsilon$-error protocol for {\tt disjointness}. Nevertheless, the number of bits communicated in the execution of $\Pi$ corresponds to the messages interchanged by $v_1$ and $u_1$, which are of size $\cO(\log n)$ in total. This contradicts Lemma~\ref{lem:disj}. We deduce that {\tt disjointness-on-path} does not belong to \C\C.
\end{proof}

\begin{theorem}\label{theo:BBvsS} Let $k>1$. For every set \(\textbf{S} = \prod_{i=1}^p\L^{\alpha_i}\B^{\beta_i}\C^{\gamma_i}\) such that \(\sum_{i=1}^p \beta_{i} = k-1 \), it holds that $\B^{k } \smallsetminus {\bf S} \neq \varnothing$. This result hols even for randomized decision algorithms which may err with probability $\frac{1}{2}-\varepsilon$, for every $\epsilon>0$. 
\end{theorem}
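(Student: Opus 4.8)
Let $k>1$. For every set $\textbf{S} = \prod_{i=1}^p\L^{\alpha_i}\B^{\beta_i}\C^{\gamma_i}$ such that $\sum_{i=1}^p \beta_i = k-1$, it holds that $\B^k \smallsetminus \textbf{S} \neq \varnothing$, even against randomized algorithms erring with probability $\tfrac12-\varepsilon$.

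The plan is to find a single distributed language that lives in $\B^k$ but that any algorithm in $\textbf{S}$ would be forced to ``pay'' for via a communication-complexity bottleneck, exactly in the style of Theorems~\ref{theo:CvsBB}, \ref{theo:LvsS} and \ref{theo:LLvsS}. The natural candidate is a ``$k$-round version'' of set disjointness embedded on a long path, so that solving it essentially requires $k$ rounds of all-to-all (broadcast) communication but only $k-1$ rounds of broadcast plus any amount of \congest\/ and \local\/ will not suffice. Concretely I would take a path $P = (u_m,\dots,u_1,v_1,\dots,v_m)$ on $2m$ nodes, plant an instance $(x,y)$ of DISJ at the two nodes $u_{2k-1}$ and $v_{2k-1}$ (i.e.\ at distance $2(2k-1)$ from each other, deep enough inside the path so that no bounded number of \local\/ rounds on the boundary leaks the inputs to the other side), and declare $(P,\ell)$ to be a yes-instance iff $\mathrm{DISJ}(\ell(u_{2k-1}),\ell(v_{2k-1}))=1$, all other labels being $\bot$. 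Membership in $\B^k$: recall that $\B^1$ can check the path structure, and in $k$ broadcast rounds the information needed to evaluate DISJ can be routed; more carefully, one uses the fact that in \bcc\/ every node hears everything each round, so after enough rounds the two relevant vectors become common knowledge to the nodes that need to test them — and since $\sum\beta_i = k-1 < k$ we have the genuine separation margin. (The precise indexing of where to plant the instance I would calibrate so that $k$ broadcast rounds suffice but $k-1$ provably do not, borrowing the round-elimination / fooling-set ideas underlying $\B^t\smallsetminus\B^{t-1}\neq\varnothing$ cited via~\cite{nisan1991rounds}.)

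The core of the argument is the reduction from DISJ. Suppose for contradiction that some $\mathcal{A}\in\textbf{S}$ with $\sum\beta_i=k-1$ decides the language with error $\tfrac12-\varepsilon$. Alice and Bob each simulate the half of $P$ containing their own input (Alice owns $u_1,\dots,u_m$, Bob owns $v_1,\dots,v_m$, sharing $c$-type middle structure if needed). The key accounting, as in Theorem~\ref{theo:LLvsS}: messages generated by nodes carrying only $\bot$ labels can be produced by both players with \emph{no} communication; a \congest\/ round at node $u_1$ (or $v_1$) on the shared boundary costs only $O(\log m)$ bits to transmit; a \local\/ round only propagates each planted vector a constant number of hops, which — because the instance is planted at depth $2k-1$ — never reaches the boundary within the budget, hence again costs nothing to communicate; and each \bcc\/ round of $\mathcal{A}$ forces the players to exchange only the $O(\log m)$-bit broadcast messages of the boundary nodes and of the planted nodes. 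So the total communication of the induced protocol $\Pi$ is $O((\sum(\alpha_i+\beta_i+\gamma_i))\cdot\log m) = O(\log m)$, and it solves DISJ with error $\tfrac12-\varepsilon$, contradicting Lemma~\ref{lem:disj} (which needs $\Omega(\varepsilon^2 m)$ bits). Therefore the language lies in $\B^k\smallsetminus\textbf{S}$.

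The main obstacle — and the part requiring real care rather than boilerplate — is the \emph{upper bound}, i.e.\ proving membership in $\B^k$ while genuinely ruling out $k-1$ broadcast rounds, and in particular making the \local\/ rounds in $\textbf{S}$ harmless: a \local\/ round could in principle let a boundary node ``see'' the planted vector if the planting depth is not chosen larger than the total \local\/ budget, but since $\textbf{S}$ is an arbitrary product with unbounded $\sum\alpha_i$ only in the sense of being a fixed constant, one must fix the path's planting depth to exceed $1+\sum_{i}\alpha_i$ for the particular $\textbf{S}$ under consideration (the language may depend on $\textbf{S}$, which is allowed since we only need $\B^k\smallsetminus\textbf{S}\neq\varnothing$ for each fixed $\textbf{S}$). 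Dually, one must verify that $k$ \bcc\/ rounds genuinely suffice to test a DISJ instance planted at that depth — here the sub-logarithmic-message constraint of \bcc\/ matters, so one cannot broadcast the whole $n$-bit vector; instead the two planted nodes broadcast their vectors bit-by-bit over $\Theta(n)$ rounds would be too many, so the language should be over vectors of length $O(\log m)$ (e.g.\ take DISJ on $\Theta(\log m)$-bit instances on an $m$-node path, so one \bcc\/ round transmits a whole vector), and then it is genuinely the \emph{round structure} rather than bandwidth that is at issue; I would set the parameters so that exactly $k$ \bcc\/ rounds suffice (routing the two vectors toward a common witness node along the path, one hop per round requires more than $k$ rounds, so instead exploit that \bcc\/ is all-to-all: in \emph{one} \bcc\/ round the planted nodes broadcast, and the issue of needing $k$ rounds rather than $1$ must then be engineered differently — e.g.\ by a $k$-fold ``onion'' of XOR-masked DISJ instances, each unlockable only after the previous broadcast, analogous to iterated pointer-chasing). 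Reconciling ``$k$ \bcc\/ rounds needed'' with ``$k-1$ not enough'' is thus the delicate design question; the reduction to DISJ (or to an iterated/round-parametrized variant thereof) is then routine given the accounting above.
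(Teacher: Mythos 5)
There is a genuine gap, and you have in fact located it yourself in your closing paragraph: the ``delicate design question'' of making $k$ \bcc\ rounds suffice while $k-1$ provably do not is not a finishing touch --- it is the entire content of the theorem, and your construction does not resolve it. A DISJ instance planted at two nodes of a path cannot separate $\B^k$ from $\B^{k-1}$, because DISJ hardness is a \emph{total-communication} lower bound, not a \emph{round} lower bound. Concretely: if the planted vectors have length $O(\log m)$, then each planted node broadcasts its entire vector in a single \bcc\ round and the language lands in $\B^{O(1)}\subseteq\textbf{S}$, so there is no separation; if the vectors have length $\omega(k\log m)$, then your own simulation argument (only the planted nodes' $O(\log m)$-bit broadcasts need to be exchanged per round) shows that even a $k$-round \bcc\ algorithm would yield an $O(k\log m)$-bit protocol for DISJ --- so the language is not in $\B^k$ either, and again there is no separation. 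Either way the candidate language fails the upper bound or the lower bound, and the ``onion of XOR-masked DISJ instances'' you gesture at is precisely the missing construction.

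The paper's proof supplies exactly this missing ingredient: it uses the $k$-\texttt{pointer chasing} problem, for which Nisan and Wigderson's round hierarchy (Proposition~\ref{prop:pointerchasing}) gives $C^k=O(\log n)$ but $C^{k-1}=\Omega(n-k\log n)$ --- a lower bound that is sensitive to the \emph{number of rounds} rather than to the total number of bits. Planting such an instance at the two endpoints of a long path gives a language in $\B^k$ (the endpoints broadcast the successive evaluations of $f$, one pointer per round), while any algorithm with only $k-1$ \bcc\ rounds simulates into a $(k-1)$-round two-party protocol with $O(\log n)$ communication, contradicting the round-restricted lower bound. The remaining step --- absorbing the \local\ and \congest\ rounds of an arbitrary $\textbf{S}$ --- is handled by normalizing $\textbf{S}$ into $\L^*\B^{k-1}$ via Theorem~\ref{theo:LalphaBbeta} and the inclusion $\C\subseteq\L$, rather than by your ad hoc depth calibration (which is fine in spirit but moot without a working core construction). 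To repair your proof you would need to replace DISJ by $k$-pointer chasing (or prove an equivalent round-hierarchy lower bound for your iterated-DISJ gadget, which is not routine).
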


  In order to prove this result, we consider the communication complexity known as {\tt pointer chasing}. Let $S_A \cup S_B = [n]$ be a partition of $[n]$  such that $|S_A| = |S_B|$, and a function $f: S \rightarrow S$ such that $f(S_A) \subseteq S_B$ and $f(S_A) \subseteq S_B$. Call $f_A = f |_{S_A}$ and $f_B = f|_{S_B}$. In problem $k$-{\tt pointer chasing} Alice receives as input $S_A$ together with $f_A$ and Bob receives as inputs $S_B$ and $f_B$. The task is to compute the parity of the number of $1$s in the binary representation of $f^k(0)$ ($k$ compositions of $f$). 
  
Let us denote $C^k(f)$ the communication complexity of function $f$ restricted to $k$-round protocols. In \cite{nisan1991rounds} the following result  is shown.

\begin{proposition}\label{prop:pointerchasing} For every $k \geq 1$, 
\begin{itemize}
\item $C^k($k$-\texttt{pointer chasing}) = \cO(\log n)$, and 
\item $C^{k-1}($k$-\texttt{pointer chasing}) = \cO(n - k\log n)$. This result hols even for randomized decision algorithms which may err with probability $\frac{1}{2}-\varepsilon$, for every $\epsilon>0$. 
\end{itemize}
 \end{proposition}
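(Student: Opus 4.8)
The plan is to handle the two bullets by different means: an explicit protocol for the upper bound, and a round-elimination induction for the lower bound, reading the second bullet as $\Omega(n-k\log n)$ (the only reading consistent with its use in the separation proofs and with its randomized-robustness clause). Throughout I adopt the convention that makes $k$ rounds genuinely necessary, namely that the player whose half of $f$ is applied \emph{second} along the chase is forced to speak first; call this player the \emph{wrong starter}. I write $v_0=0$, $v_t=f(v_{t-1})$, so the chase alternates between $S_A$ and $S_B$, and $\chi(v)$ for the parity of the binary representation of $v$.

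For the upper bound I would use the obvious \emph{pointer relay}. The player holding the relevant half of $f$ can always compute $v_t$ once it learns $v_{t-1}$; so in round $t$ the player who has just learned $v_{t-1}$ computes $v_t$ and forwards its $O(\log n)$-bit name. After $k$ rounds one player knows $v_k$ and outputs $\chi(v_k)$ (one extra bit settles both outputs if required). This uses $k$ rounds and $O(k\log n)=O(\log n)$ bits for constant $k$.

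For the lower bound I would draw $f_A,f_B$ uniformly, so the first returned pointer $v_1=f_A(0)$ is uniform over $S_B$ and is unknown to the wrong starter — say Bob — at the moment he sends his first message $M_1=M_1(f_B)$. The crux is the information bound $\mathbb{I}(f_B(v_1);M_1\mid v_1)\le \mathbb{I}(f_B;M_1)/|S_B|\le 2\,\mathbb{H}(M_1)/n\le 2c/n$, obtained exactly as in the displayed chain for \xorindex: super-additivity of mutual information over the independent coordinates $\{f_B(b):b\in S_B\}$ (the cited Theorem~6.12), together with $v_1 \perp f_B$ and $\mathbb{I}(f_B;M_1)\le\mathbb{H}(M_1)$. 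By Pinsker's inequality this says that, conditioned on $(M_1,v_1)$, the seed $f_B(v_1)$ of the remaining chase is within total variation $O(\sqrt{c/n})$ of uniform. Hence Bob's first message can be replaced by a sample drawn from public randomness \emph{without consuming a round}, at the cost of increasing the error by $O(\sqrt{c/n})$; what remains is a $(k{-}2)$-round protocol for the $(k{-}1)$-step chase, again with the wrong starter (the roles of $A$ and $B$ swap, but the wrong-starter property is preserved). Iterating $k-1$ times reduces to $1$-pointer chasing solved with $0$ rounds by the wrong starter, whose advantage is $o(1)$ because $\chi(f_A(0))$ is a near-balanced bit under the chosen distribution.

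Accumulating the per-step losses yields advantage $O(k\sqrt{c/n})$, so a protocol with error $\frac12-\epsilon$ (advantage $\Omega(\epsilon)$) forces $c=\Omega(\epsilon^2 n/k^2)$, which is $\Omega(n)$ for constant $k$ and $\epsilon$ — the regime in which the proposition is applied, and consistent with the stated $\Omega(n-k\log n)$ form. I expect the main obstacle to be the bookkeeping of this reduction: formalizing the ``replace the first message by a public sample'' step so that the simulated protocol provably solves the smaller instance, and tracking how error and effective instance size degrade across the $k-1$ eliminations. The additive $O(\log n)$ charged each time a player is allowed to announce the current pointer index — the same additive slack used silently in the \xorindex\ proof — is precisely what accounts for the $-k\log n$ correction term and keeps the two bullets compatible.
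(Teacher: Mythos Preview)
The paper does not prove this proposition at all: it is quoted verbatim as a known result from Nisan and Wigderson~\cite{nisan1991rounds} (see the sentence ``In \cite{nisan1991rounds} the following result is shown'' immediately preceding the statement). The \texttt{proof} environment that follows the proposition in the source is in fact the delayed proof of Theorem~\ref{theo:BBvsS}, which \emph{uses} Proposition~\ref{prop:pointerchasing} as a black box; there is nothing here to compare your attempt against.

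That said, your sketch is essentially the Nisan--Wigderson round-elimination argument itself: the wrong starter's first message carries $o(1)$ information about the next pointer because the relevant coordinate is still uniform from his point of view, and one parlays this via Pinsker into an additive error increase per eliminated round. You also correctly read the second bullet as a \emph{lower} bound (the printed $\cO$ is a typo for $\Omega$), and correctly attribute the $-k\log n$ slack to the per-step cost of revealing the current pointer. The place where your outline is still a sketch rather than a proof is exactly the one you flag: turning ``replace Bob's first message by a public sample'' into an honest $(k{-}2)$-round protocol for the $(k{-}1)$-step problem requires fixing a good realization of $(M_1,v_1)$ and arguing that the residual input distribution is close enough to the product distribution on which the induction hypothesis applies; in Nisan--Wigderson this is done by an averaging/conditioning argument rather than by literally sampling from public randomness, and getting the quantifiers right there is the actual work.
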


\begin{proof}
Let $\ell \geq 1$. We define a language that belongs to $\B^k$ but does not belong to $\L^*\B^{k-1}$. Let us consider the language {\tt $k$-pointer-chasing-on-long-path ($k$-PCLP)} as the set of paths of length $n$, where the two endpoints of the path have yes-instance {\tt $k$-pointer-chasing}. 
Formally,

\begin{align*}
\mbox{\tt $k$-PCLP} = \Big \{(G, \ell) :&\;  \big (\ell:V(G)  \rightarrow \{0,1\}^* \cup \{\perp\}\big) ~\land~\big( G \mbox{ is a path with endpoints } u, v \big) \\
& \land \big( \ell(w) = \bot ~\wedge~ (\ell(u),\ell(v)) \in k-{\tt pointer-chasing} \big)\Big \}.
\end{align*}

We have that $k$-{\tt PCLP} belongs to $\B^k$. Indeed, a simple algorithm consists in all vertices communicating its degree in the first round, and, at the same time, the two endpoints of the path communicating the successive evaluations of $f$. Once $f^k(0)$ is computed, every node accepts if (1) all except two nodes have degree $2$, and the two remaining nodes have degree $1$ (hence the graph is a path), and (2) the number of $1$ bits in the binary representation of $f^k(0)$ is $1$. 

We now show that $k$-{\tt PCLP} $\notin \L^*\B^{k-1}$. By contradiction, let us suppose that for some $t\geq 0$ there exists an $\nicefrac{1}{2}- \epsilon$-error $\L^t\B^{k-1}$ algorithm $\mathcal{A}$ for $k$-{\tt PCLP}. Given an instance $(f_A, f_B)$ of $k$-{\tt pointer chasing}, we define an instance $(P_n, \ell^*)$ of $k$-{\tt PCLP} as follows. First, consider on an $2n$-node ($n>t$)  path together endpoints $u$ and $v$. Second, assign $\ell(w) = \bot$ to every node except the endpoints. Third, assign $\ell(u) = f_A$ and $\ell(v)=f_B$. We call $P^u_d$ and $P^v_d$ the set of nodes at distance at most $d$ from $u$ and $v$, respectively.  Observe that $(P_n, \ell^*) \in k$-{\tt PCLP} if and only if $(f_A, f_B) \in k$-{\tt pointer-chasing}.

Now consider the following $\epsilon$-error two-player $k-1$-round algorithm $\Pi$ for $\ell$-{\tt pointer chasing}. Alice and Bob virtually construct the input $(P_n, \ell^*)$. We say that Alice owns the nodes in $P^u_t$ and Bob owns the nodes $P^v_t$. All nodes that are not owned by Alice or Bob are called \emph{remaining nodes}. The nodes simulate the $t$ {\local} rounds of $\mathcal{A}$ on the nodes they own and on the remaining nodes.  Notice that the players can simulate these rounds without any communication, since the information of the endpoints are at distance $2n > 2t$. Then, the players perform $k-1$ rounds of communication. On the $i$-th round, Alice and Bob simulate the $i$-th {\bcc} round of $\mathcal{A}$ on all the nodes they own, generating a packages of messages $M^i_A$ and $M^i_B$, respectively. Then, they communicate $M^i_A$ and $M^i_B$ to each other. Finally, each player simulate the $i$-th  {\bcc} round of $\mathcal{A}$ on the remaining nodes generating a package of messages $M^i_R$ . Observe that these latter messages can be generated as they depend only on the messages sent on the previous rounds, and not on the inputs of $f_A$ and $f_B$. Finally, Alice and Bob have each the packages of messages $M^1, \dots, M^{k-1}$ corresponding to the $k-1$ {\bcc} rounds of $\mathcal{A}$. Using that information the players can simulate the output of all nodes they own, as well as the output of the remaining nodes. The players then accept in $\Pi$ if and only if all the nodes they own and the remaining nodes accept in $\mathcal{A}$. By the correctness of $\mathcal{A}$, we obtain that with probability $\nicefrac{1}{2}-\epsilon$, all the nodes in $\mathcal{A}$ accept if and only if Alice and Bob accept in $\Pi$. We deduce that $\Pi$ is an $k-1$-round, $\epsilon$-error protocol for $k$-{\tt pointer-chasing}. However,  in protocol  Alice and Bob communicate $\cO(\ell t \log n) = \cO(\log n)$ bits, which contradicts Proposition \ref{prop:pointerchasing}. We deduce that $k$-{\tt PCLP} $\notin \L^*\B^{k-1}$.

Finally, notice that from Theorem \ref{theo:LalphaBbeta} and the fact that $\C \subseteq \L$, we have that all problems solvable in \(\textbf{S} = \prod_{i=1}^k\L^{\alpha_i}\B^{\beta_i}\C^{\gamma_i}\) can be solved in by an algorithm in $\L^*\B$.
\end{proof}

\begin{theorem}\label{theo:CLvsLC}
$\C\L \smallsetminus \L\C \neq \varnothing$. This result hols even for randomized decision algorithms which may err with probability $\frac{1}{2}-\varepsilon$, for every $\epsilon>0$. 
\end{theorem}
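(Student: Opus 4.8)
The plan is to separate $\C\L$ from $\L\C$ by a reduction from set disjointness (\textrm{DISJ}), in the same spirit as several of the reductions to \textrm{DISJ} used elsewhere in this paper (e.g.\ Theorems~\ref{theo:CvsBB} and~\ref{theo:LvsS}). The only essential difference between the two $2$-round models is which round comes last: in $\C\L$ it is an (unbounded-bandwidth) \local\/ round, whereas in $\L\C$ it is a (bottlenecked) \congest\/ round. So I will build a ``broom with a handle'': a star formed by $n$ leaves $d_1,\dots,d_n$ and a centre $c_1$, an edge $c_1c_2$, and a single leaf $a$ attached to $c_2$. The labels are $\ell(d_k)=x_k\in\{0,1\}$, $\ell(a)$ encoding a vector $y\in\{0,1\}^n$, and $\ell(c_1)=\ell(c_2)=\bot$; the instance is a yes-instance iff the graph has exactly this shape and $\textrm{DISJ}(x,y)=1$, where $x=(x_1,\dots,x_n)$. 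Call this language \texttt{disjointness-on-broom}.

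Membership in $\C\L$ is the first step. During the \congest\/ round each leaf $d_k$ sends its bit $x_k$ to $c_1$; this is precisely why the vector $x$ is spread one bit per leaf, since a single node holding all of $x$ would be throttled by this round. Thus $c_1$ finishes the \congest\/ round knowing all of $x$. During the subsequent \local\/ round $c_2$ receives the full states of its two neighbours $c_1$ and $a$, hence learns $x$ and $y$, and accepts iff $\textrm{DISJ}(x,y)=1$. Every other node accepts unless its view is inconsistent with a valid instance; as the graph has diameter~$3$, the high-degree node sees the whole graph after one \local\/ round and performs the structure check, so ill-formed instances are rejected by some node.

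For the lower bound, suppose some algorithm $\mathcal{A}$ decides \texttt{disjointness-on-broom} in $\L\C$ with error $\nicefrac{1}{2}-\varepsilon$. I turn $\mathcal{A}$ into a $1$-round $2$-way protocol for \textrm{DISJ} on $x,y\in\{0,1\}^n$: Alice and Bob build the gadget above, Alice owning $\{c_1,d_1,\dots,d_n\}$ and Bob owning $\{c_2,a\}$. The crucial structural fact is that $\ell(c_1)=\ell(c_2)=\bot$, that every neighbour of $c_1$ other than $c_2$ is owned by Alice, and that the only neighbour of $c_2$ other than $c_1$ is owned by Bob; therefore, after the \local\/ round the state of each node owned by Alice is a function of $x$ alone, and the state of each node owned by Bob is a function of $y$ alone. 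Hence each player can simulate the \local\/ round of her own nodes with no communication, and to simulate the subsequent \congest\/ round the only messages crossing the cut are the $O(\log n)$-bit message from $c_1$ to $c_2$, which Alice sends to Bob, and the $O(\log n)$-bit message from $c_2$ to $c_1$, which Bob sends to Alice. Each player then completes the simulation and outputs true iff all her nodes accept; the conjunction of the two outputs equals ``all nodes accept'', hence equals $\textrm{DISJ}(x,y)$ with probability at least $\nicefrac{1}{2}+\varepsilon$. Since only $O(\log n)$ bits were communicated, this contradicts Lemma~\ref{lem:disj}, so $\texttt{disjointness-on-broom}\in\C\L\smallsetminus\L\C$.

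The step that needs genuine care is making the gadget serve both directions at once. The vector $x$ must be distributed one bit per broom-leaf so that the first \congest\/ round of the $\C\L$ algorithm does not compress it; the vector $y$ must be placed on a node adjacent to $c_2$ so that $c_2$ can download it through the uncompressed final \local\/ round, while keeping $y$ at distance $2$ from $c_1$ so that, under $\L\C$, no single node holds both $x$ and $y$ after the \local\/ round; and the two nodes that straddle the cut must carry the trivial label $\bot$ so that the Alice/Bob simulation stays decoupled. Checking that these requirements are mutually compatible --- and that spurious rejections by the low-information leaves do no harm, which holds because each such decision is absorbed into the relevant player's output --- is routine, but it is where one must be precise.
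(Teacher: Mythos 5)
Your proof is correct and follows essentially the same route as the paper: a reduction from \textrm{DISJ} via a gadget whose Alice/Bob cut is a single edge between two $\bot$-labelled nodes, with $x$ spread one bit per leaf so that the initial \congest\ round aggregates it at $c_1$ and the final \local\ round ships it across, while under $\L\C$ only the $O(\log n)$-bit \congest\ messages on that cut edge can carry information between the two sides. The paper's gadget is the symmetric double star (with $y$ also spread one bit per leaf), whereas you hang $y$ whole on a pendant vertex; both variants work, and your discussion of why each placement is forced is accurate.
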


\begin{proof}
Let us consider the graph $S_n = (V=(V_1,V_2),E)$ in which there exists $u^*v^* \in E$ such that : $u^* \in V_1$, $S^1_n = S_n(V_1) \smallsetminus \{v^* \}$ is a star graph with $n$ leaves rooted in $u^*$, $v^* \in V_2$ and $S^2_n = S_n(V_2) \smallsetminus \{u^* \}$ is a start with $n$ leaves rooted in $v.$ Let us consider the following distributed language $\tt{DISJ-edge-star} = \{S_n=(V,E), \ell:V \mapsto \{0,1\} \times [n],  \ell(u) = (b,i) \iff x_i = b, \ell(v) = (b',i'),   \neg  \bigvee \limits_{i \in [n]} x_i \wedge y_i = 1, x,y \in \{0,1\}^{n} \}$ where $x_i = b \iff \ell(v) = (b,i)$ for some $v \in V_1$ and $y_i = b \iff \ell(u) = (b,i)$ for some $u \in V_2.$ 

First, observe that $\tt{DISJ-edge-star} \in \C\L.$  In fact, a protocol $\pi$ in the hybrid \ \congest \ + \local\ model for $\tt{DISJ-edge-star}$ can be described as: in the first round of communication all the nodes in the leaves of each start send its input. More precisely, each $v \in V_1$ and $u \in V_2$ send $x_i$ and $y_i$ respectively for some $i \in [n].$ Observe that after the first round of communication $u^*$, $v^*$ are able to recover $x$ and $y$ from the messages sent by their neighbors. If the inputs of the leaves are not correct in the sense that each index $i$ given in the input is different, they reject. Then, in the second round of communication, the node $u^*$ sends a message containing $x$ to $v^*$ and $v^*$ sends a message containing $y$ to $v^*$. Finally the nodes in the leaves accept and $u^*,v^*$  compute $\text{DISJ}(x,y) =  \bigvee \limits_{i \in [n]} \neg x_i \wedge y_i $ and accept if and only if $\text{DISJ}(x,y) =1$

Now, we are going to show that $\tt{DISJ-edge-star} \not \in \L\C.$ By contradiction, let us assume that there exists a $\frac{1}{2}-\varepsilon$ protocol $\pi$ in the hybrid \  \local  + \congest \  model for $\tt{DISJ-edge-star}.$ We consider an instance $(x,y)$ of the set disjointness problem $\tt{DISJ}.$ Let $n = |x|=|y|.$ We are going to describe a $\frac{1}{2}-\varepsilon$ protocol $\pi'$ for  $\tt{DISJ}.$. Let us consider the instance of $\tt{DISJ-edge-star}$ $(S_n,\ell)$ in which $\ell$ assigns $x_i$ to each leaf in $S^1_n$ and $y_i$  to each leaf in $S^2_n.$ Observe that $(x,y)$ is a yes instance of $\tt{DISJ}$ if and only if $(S_n,\ell)$ is a yes instance for $\tt{DISJ-edge-star}$. Let us define $S^A_n =  S^1_n$ and $S^B_n =  S^2_n$ i.e. we consider the graph induced by each of the star graphs in $S_n.$ We say that Alice and Bob have $S^A_n$ and $S^B_n$ respectively. Since the roots $u^*$ and $v^*$ of $S^1_n$ and $S^2_n$ respectively have an empty input, Alice and Bob can simulate the \ \local \ round of $\pi.$ Then, Alice and Bob simulate the messages sent by the nodes during the \ \congest \ round of $\pi.$ Observe that, since $v^*$ is not in $S^A_n $  and $u^*$ is not in $S^B_n $, Alice cannot simulate he message $m_{v^*,u^*}$ sent by  $v^*$ to $u^*$ and Bob cannot simulate he message $m_{u^*,v^*}$ sent by $u^*$ to $v^*.$ However, since Alice and Bob can simulate the local round, Alice can simulate  $m_{u^*,v^*}$ and Bob can simulate  $m_{v^*,u^*}.$ Thus, Alice sends a message $m_A$ containing  $m_{u^*,v^*}$ to Bob and Bob sends a message $m_B$ containing $m_{v^*,u^*}$ to Alice. Finally, both players can simulate $\pi$ and thus, they compute $\text{DISJ}(x,y).$ However, the cost of the protocol $\pi'$ is $\mathcal{O}(\log n )$ because the size of $m_A$ and $m_B$ is $\mathcal{O}(\log n )$ which is a contradiction.  
\end{proof}

\begin{theorem}\label{theo:LCvsCL}
$\L\C \smallsetminus \C\L \neq \varnothing$. This result hols even for randomized decision algorithms which may err with probability $\frac{1}{2}-\varepsilon$, for every $\epsilon>0$. 
\end{theorem}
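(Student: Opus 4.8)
The plan is to mirror, in a dual fashion, the proof of Theorem~\ref{theo:CLvsLC}, but with set disjointness replaced by \xorindex\ (Theorem~\ref{theo:xorindex}): the asymmetry between $\L\C$ and $\C\L$ is exactly the one \xorindex\ is built to exploit, since in $\L\C$ a node first \emph{sees} its whole neighbourhood and only afterwards emits a short \congest\ message, whereas in $\C\L$ those short \congest\ messages are emitted blindly and the subsequent \local\ round can flood full states only one extra hop.

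First I would fix, for each $n$, a gadget $\Gamma_n$ on $\mathrm{poly}(n)$ vertices (the surplus vertices being $\bot$-labelled dummies, present only to make \congest\ bandwidth $\Theta(\log n)$): a centre $c$ with two ``port'' neighbours $q_A$ and $q_B$, where $q_A$ is also adjacent to vertices $X$ and $B$, and $q_B$ is also adjacent to vertices $Y$ and $A$. Given $x,y\in\{0,1\}^n$ and $i,j\in[n]$, I label $X$ with $x$, $Y$ with $y$, $A$ with $i$, $B$ with $j$, and all other vertices with $\bot$; each label also carries a marker identifying the role of its vertex, so that well-formedness --- and the parameter $n$ --- is verifiable from a vertex's closed neighbourhood together with the identifiers. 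I then let $\mathcal{L}_{\mathrm{XI}}$ be the distributed language of well-formed instances $(\Gamma_n,\ell)$ with $x_j\neq y_i$ (more precisely, of disjoint unions of such gadgets with the condition holding in every component, so that no global connectivity test is needed).

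For $\mathcal{L}_{\mathrm{XI}}\in\L\C$: during the \local\ round $q_A$ learns $x$ and $j$, hence can compute $x_j$, and symmetrically $q_B$ learns $y$ and $i$ and computes $y_i$; every vertex also checks local well-formedness and rejects on failure. During the \congest\ round $q_A$ sends the single bit $x_j$ to $c$ and $q_B$ sends $y_i$ to $c$; finally $c$ accepts iff $x_j\oplus y_i=1$ and every other vertex accepts, so that all vertices accept exactly on the instances of $\mathcal{L}_{\mathrm{XI}}$. For $\mathcal{L}_{\mathrm{XI}}\notin\C\L$, I would assume an $\epsilon$-error algorithm $\mathcal A$ deciding it with one \congest\ round followed by one \local\ round, and convert it into a one-round \xorindex\ protocol: Alice owns $q_A,X,A,c$ and the dummies, Bob owns $q_B,Y,B$, and on input $((x,i),(y,j))$ the players build $(\Gamma_n,\ell)$. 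Simulating the \congest\ round costs $\cO(\log n)$ bits per direction, since the only input-dependent messages crossing the Alice--Bob cut are $B\to q_A$ (a function of $j$) and $A\to q_B$ (a function of $i$), every other cross message being a function of the identifier of a $\bot$-labelled vertex and thus recomputable by the receiving player; simulating the \local\ round costs another $\cO(\log n)$ bits per direction, namely the post-\congest\ states of the cut vertices owned by the other player. Then Alice outputs ``every vertex I own accepts'' and Bob outputs ``every vertex I own accepts''; the conjunction of these two outputs equals ``every vertex accepts'', which with probability at least $1-\epsilon$ equals $x_j\oplus y_i$ --- precisely the \xorindex\ requirement. As the protocol exchanges only $\cO(\log n)=o(n)$ bits, this contradicts Theorem~\ref{theo:xorindex} (the admissible range of $\epsilon$ being inherited from that theorem).

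The step I expect to be delicate is engineering the gadget so that both directions hold simultaneously. For $\L\C$ I need $q_A$ to gather \emph{all} of $x$ and the foreign index $j$ within a single \local\ round --- hence $X$ and $B$ must be neighbours of $q_A$ --- and then to compress the answer into one \congest\ bit. For $\C\L$ I must ensure that, although the now-second \local\ round floods $q_A$ with $x$ and $j$ and floods $c$ with the post-\congest\ states of $q_A$ and $q_B$, the centre $c$ --- the only vertex able to combine the two sides --- never learns more about $x$ than a $j$-oblivious $\cO(\log n)$-bit summary (since $X$ must commit to its \congest\ message before $j$ is available at $q_A$), and symmetrically for $y$; consequently no vertex can decide $x_j\oplus y_i$. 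Concentrating each vector on a single vertex, so that one \congest\ round cannot extract it, while placing the crossed indices next to the opposite ports, is exactly what makes these two demands compatible, and it is also what keeps the reduction's cut cheap --- every cross edge touches only a $\bot$-labelled port or an index holder, never a vertex carrying a full vector. Beyond this, only routine bookkeeping remains: checking that malformed instances (and, in the disjoint-union formulation, each bad component) are rejected by some vertex on the basis of its local view alone.
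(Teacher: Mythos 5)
Your argument is sound and establishes the separation, but it takes a genuinely different route from the paper. The paper builds a gadget (a path $v_1v_2v_3v_4$ with two pendant vertices $u_1,u_2$ carrying $x$ and $y$, plus a clique, plus a ``certificate'' bit $b$ at $v_4$) and reduces to \emph{set disjointness}; the crux of its lower-bound argument is a preliminary indistinguishability step showing that in any $\C\L$ algorithm only the middle vertex $v_2$ can ever reject, after which both players simulate $v_2$ and output the \emph{same} bit, so the standard two-sided DISJ bound (Lemma~\ref{lem:disj}) applies. You instead reduce to \xorindex\ with a centre-and-two-ports gadget. This buys you exactly what \xorindex\ was designed for: you never have to localize rejection at a single vertex, because the ``conjunction of the two players' outputs'' semantics of \xorindex\ absorbs the asymmetric acceptance condition directly, and your cut analysis (every crossing message is either input-independent or an $\cO(\log n)$-bit function of one player's own input, so the whole two-round execution folds into one simultaneous exchange) is correct --- it is the same folding trick the paper uses in Theorem~\ref{theo:BCvsCB}. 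The upper-bound direction and the radius-$1$ well-formedness checks are routine, as you say.

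The one substantive shortfall is the error range. The theorem claims the separation even against randomized $\C\L$ algorithms erring with probability $\frac12-\varepsilon$ for \emph{every} $\varepsilon>0$; your reduction inherits the admissible range of Theorem~\ref{theo:xorindex}, which is only $\epsilon<\nicefrac15$. Indeed \xorindex\ cannot support a $\frac12-\varepsilon$ bound in general (a zero-communication protocol already achieves error $\nicefrac12$ on the uniform distribution), so this is intrinsic to your choice of hard problem, not a fixable detail of the write-up. Your proof therefore establishes $\L\C\smallsetminus\C\L\neq\varnothing$ and its robustness up to error $\nicefrac15$, but not the full randomized statement; to get error $\frac12-\varepsilon$ you would need to route the reduction through DISJ (or another problem with a two-sided $\frac12-\varepsilon$ bound), which is precisely why the paper's gadget carries the answer bit $b$ at $v_4$ and funnels a symmetric decision into the single vertex $v_2$.
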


\begin{proof}
Consider the language {\tt  special-disjointness} defined by the pairs $(G, \ell)$ such that: (1) $G$ is defined from a path $P = v_1, v_2, v_3, v_4$, a clique $K_n$, and two more vertices $u_1$, $u_2$. Node  $v_4$ is adjacent to an arbitrary node of the clique, and $v_1$ has two pending vertices $u_1$ and $u_2$. (2) $\ell: V(G) \rightarrow (\{0,1\}^*, [4])$ is a function such that:
\begin{itemize}
\item $\ell(w) = (\bot, \bot) $ for every $w \notin V(K_n)$, 
\item $\ell(v_1) = (\bot, 1)$, $\ell(v_2) = (\bot, 2)$,  $\ell(v_3) = (\bot, 3)$,  
\item $\ell(v_4) = (b, 4)$, with $b \in \{0,1\}$
\item $\ell(u_1) = (\bot, x)$ and $\ell(u_2) = \bot, y)$ with $x,y \in \{0,1\}^n$, and
\item $\forall i \in [n]$ such that $x_i y_i = 0$ if and only if $b=1$. In words, $x,y$ are a yes-instance of disjointness if and only if the input of $b$ is $1$.
\end{itemize}
We first show that {\tt special-disjointness} is in $\L\C$. The protocol has two verification algorithms, that are evaluated in parallel. We say that a node accepts if it accepts in both algorithms. The first algorithm, that we call \emph{topology verification} consists in each node $v$ sending its degree and the second coordinate of $\ell(v)$. Then, 
\begin{itemize}
\item If $v$ has degree $n-1$, then $v$ accepts if  and only if $\ell(v) = (\bot,\bot)$ and it has $n-2$ neighbors of degree $n-1$ and one neighbor of degree $n$.
\item If $v$ has degree $n$, then $v$ accepts   if  and only if  $\ell(v) = (\bot,\bot)$ it has $n-1$ neighbors of degree $n-1$ and one neighbor $w$ of degree $2$ such that $\ell(w)_2 = 4$.
\item If $v$ has degree $2$ and $d(v)_2=4$, then $v$ accepts if and only if one of the neighbors of $v$ has degree $n$ and the other neighbor has degree $2$ and has $3$ in its second component.
\item If $v$ has degree $2$ and $d(v)_2 = \{2,3\}$, then $v$ accepts if and only one neighbor has $d(v)_2-1$ and the other $d(v)_2+1$ in their second components.
\item If $v$ has degree $3$, then $v$ accepts if and only if $d(v)_2 = 1$, one neighbor has $2$ ind its second components, and the other two has $\bot$ in their second components.  
\item If $v$ has degree $1$, then $v$ accepts if and only if $d(v)_2 = \bot$, and its neighbor has $1$ in its second components.   
\end{itemize}
Observe that all nodes accept in the topology verification algorithm if and only if $G$ satisfies the properties of the language. Clearly the topology verification algorithm belongs to $\C\L \cap \L\C$. In the following let us assume that every node accepts in the topology verification algorithm. 
The second verification algorithm, called {\it input verification} is used to verify the conditions on $\ell$, specially the last condition. In the input verification algorithm, every node with a degree different than $1$, $2$ or $3$ immediately accepts. For the other nodes $v$, the algorithm is the following:
\begin{itemize}
\item If $v$ has degree $1$ (i.e. $v = u_1$ or $v= u_2$), then in the first round $v$ communicates $\ell_1(v)$ to his neighbor, then accepts.
\item If $v$ has degree $3$ (i.e. $v = v_1$), then in the first round $v$ does not communicate anything. In the second round $v$ receives $x$ and $y$ from two of neighbors. If $x$ and $y$ are not of the same length $n$, then $v$ rejects. Otherwise, it verifies whether $x_i y_i = 0$ for every $i\in [n]$. If the answer is affirmative, it communicates a bit $1$ to the other neighbor, and otherwise it communicates a bit $0$.
\item If $v$ is such that $d(v)=4$ (i.e. $v=v_4$), then $v$ communicates $b = \ell_2(v)$ to its neighbors in the first round and then accept.
\item If $v$ is such that $d(v)=3$ (i.e. $v=v_3$), then $v$ sends nothing in the first round, and receives $b$ from one neighbor. In the second round, he communicates $b$  to the other neighbor and accept. 
\item If $v$ is such that $d(v)=2$ (i.e. $v=v_2$), then $v$ sends nothing in the first two rounds, but receives two bits in the second round from two different neighbors. Then $v$ accept if the two bits are equal.
\end{itemize}
In simple words, the input verification algorithm consists in communicating $x$ and $y$ to $v_1$, then $v_1$ checks whether $x$ and $y$ are disjoint and communicates the answer to $v_2$. At the same time, the bit $b$ is communicated to $v_2$ in two communication rounds. The first round can be done in {\L} as we do not have bandwidth restrictions. The second can be done in {\C} as at most one bit is communicated per edge. We deduce  {\tt  special-disjointness} $\in$ {\L\C}.

We now show that  {\tt  special-disjointness} $\notin$ {\C\L}. Let $\mathcal{A}$ be an $\nicefrac12-\epsilon$-error {\C\L} algorithm for {\tt  special-disjointness}. We show that $\mathcal{A}$ can be transformed into a two-player protocol $\Pi$ for {\tt disjointness}. Observe first that given a yes-instance of {\tt  special-disjointness}, we have that if we change $b$ for $1-b$ on the input of node $v_4$, we obtain a No-instance. However, the nodes in a distance greater than $2$ from $v_4$ can not see this difference. Therefore, as $\epsilon < 1/2$, we have that all vertices in a distance greater than $2$ (in particular $u_1$, $u_2$ and $v_1$) from $v_4$ necessarily accept in $\mathcal{A}$ independently on the values of $x$ and $y$. Following a similar reasoning, we obtain that all nodes at distance greater than $2$ from $u_1$ and $u_2$ (in particular $v_3, v_4$ and all the nodes in the clique) must accept independently of the value of $b$. Therefore, the only node that rejects in the illegal inputs is $v_2$.

Now let $(x,y)$ be an instance of DISJ. In protocol $\Pi$, Alice and Bob assume that they play the role of $u_1$ and $u_2$ in an instance of {\tt  special-disjointness} where the identifiers are chosen arbitrarily, and where $\ell_2(u_1) = x$ and $\ell_2(u_2) = y$ and $\ell_2(v_4)=1$.  Alice simulates the {\congest} round of $\mathcal{A}$ for node $u_1$, generating a message $m_A$. Similarly, Bob simulates the {\congest} round of $\mathcal{A}$ for node $u_2$, generating the message $m_B$. Then Alice and Bob interchange $m_A$ and $m_B$. Using that information, Alice and Bob can simulate $\mathcal{A}$ the message that node $v_1$ sends to $v_2$ in the first and second round. Notice that the message that $v_1$ sends to $v_2$ in the first round does not depend on $x$ and $y$, while the message sent by $v_1$ in the second round only depends on $m_A$ and $m_B$. On the other hand, Alice and Bob can simulate the two rounds of $v_3$ and $v_4$, as their messages do not depend on $x$ and $y$. Then, Alice and Bob obtain the two messages received by $v_2$ from his neighbors $v_1$ and $v_3$. Using that information, the nodes can simulate the output of $v_2$ in $\mathcal{A}$ and accept if and only if $v_2$ accepts. From the correctness of $\mathcal{A}$ and the discussion of previous paragraph, we deduce that $\Pi$ an $\epsilon$-error protocol for DISJ.  Nevertheless, in $\Pi$ only $\cO(\log n)$ bits were communicated in total, which is a contradiction with Lemma \ref{lem:disj}. We deduce that {\tt  special-disjointness}~$\notin$~{\C\L}.
\end{proof}

\begin{theorem}\label{theo:CCvsLB}
 $\C\C \smallsetminus \L\B^* \not = \varnothing$. This result hols even for randomized decision algorithms which may err with probability $\frac{1}{2}-\varepsilon$, for every $\epsilon>0$. 
\end{theorem}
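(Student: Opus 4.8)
The plan is to reduce from \textsc{DISJ} (Lemma~\ref{lem:disj}), whose $\Omega(m)$ lower bound holds for \emph{every} number of communication rounds and for error $\nicefrac12-\epsilon$; this is exactly what is needed to rule out membership in $\L\B^k$ for all constant~$k$, i.e., in $\L\B^*$. So I would exhibit a distributed language $\DL$, prove $\DL\in\C\C$ by a direct two-round \congest{} algorithm, and then, assuming for contradiction that $\DL\in\L\B^k$ for some constant~$k$, turn the corresponding algorithm into a randomized two-way protocol for \textsc{DISJ} communicating only $o(m)$ bits on $m$-bit instances. For the reduction to bite, the graph $G_{x,y}$ built from $(x,y)\in\{0,1\}^m\times\{0,1\}^m$ must have only $o(m/(k\log m))$ vertices, partitioned into an ``Alice part'' whose local inputs depend only on~$x$, a ``Bob part'' depending only on~$y$, and a set of unlabeled connector vertices, arranged so that (i)~no Alice vertex is adjacent to a Bob vertex through an input-dependent link, so that each player can simulate the single \local{} round on her/his own vertices with no communication, and (ii)~the $k$ subsequent \bcc{} rounds can be simulated by exchanging only the $O(p\cdot k\log m)=o(m)$ bits broadcast by the Alice and Bob vertices (the broadcasts of the connector vertices being recomputable by both players after this exchange). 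Any language of this form that lies in $\C\C$ then contradicts Lemma~\ref{lem:disj}.

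The structure of $\DL$ must exploit the \emph{one} respect in which $\C\C$ genuinely beats $\L\B^*$. Note that a single \congest{} round delivers at most $O(\log m)$ bits along each edge, so whatever a vertex learns about a neighbour in one \congest{} round it also learns (together with that neighbour's whole local view) in one \local{} round; the only advantage of $\C\C$ is that its \emph{second} round is point-to-point rather than broadcast. Hence $\DL$ must be such that the second-round messages are genuinely edge-specific and depend on data a vertex only acquires during the first round (so that in $\L\B^*$ the addressee cannot recompute them from the \local{} round), and such that some connector vertex must forward $\omega(\log m)$ bits in total along pairwise distinct edges, so that emulating it by broadcasting needs $\omega(1)$ rounds. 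Concretely, I would encode~$x$ and~$y$ on the incident edges of dense subgraphs and route the relevant bits through the connector layer, so that a ``checker'' vertex receives, via many length-two paths, exactly the bit it must compare against its own input; the per-coordinate consistency check is then local, while the aggregate routing cannot be compressed into $O(k\log m)$-bit per-vertex broadcasts, and the \congest{} algorithm does it in two rounds (round one routes the bits to the connector layer, round two brings matching bits together at the checkers).

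The hard part will be precisely this simultaneous requirement: keeping $\DL$ inside $\C\C$ while pushing it outside $\L\B^*$. To defeat the \local{} round one wants the two halves holding a matching pair of bits to be far apart (so that, after one \local{} round, no vertex knows both an $x$-bit and the corresponding $y$-bit, ruling out $\DL\in\L^*$ a fortiori), yet the $\C\C$ algorithm needs them close enough to be reconciled in two rounds; and one must ensure that a connector vertex that receives an $x$-bit is not adjacent to the holder of the matching $y$-bit, for otherwise that connector would decide the coordinate already after the \local{} round. Balancing the degrees so that the connector vertices can absorb the routed bits in one \congest{} round (which forces degree $\omega(1)$) while still making the total number of vertices $o(m/\log m)$ is the technical heart of the argument — the same kind of careful sizing as in the proof of Theorem~\ref{theo:CBvsBL}, but now against an adversary that performs its \local{} round before all of its \bcc{} rounds.
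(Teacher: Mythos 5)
Your strategy is precisely the paper's: reduce from \textsc{DISJ} on $m$-bit instances via a graph with $o(m/(k\log m))$ vertices, place the $x$-dependent and $y$-dependent labels so that the single \local{} round can be simulated by the two players for free (input-independent cut, labeled vertices far apart), pay only $O(k\cdot|V|\cdot\log m)$ bits to simulate the $k$ \bcc{} rounds, and let the point-to-point second \congest{} round reconcile matching coordinates along length-two paths through an unlabeled connector layer. The gap is that the proposal stops exactly where the proof must start: you exhibit no language and no graph, and you explicitly defer what you call ``the technical heart'' (sizing the connector layer so it can absorb the routed bits in one \congest{} round while keeping $|V|$ small). A correct list of desiderata for a construction is not yet a construction, so as written this does not establish the theorem.

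That said, the missing object is simple and meets every constraint you list, so your plan is completable essentially as stated. Take four layers $V_1,V_2,V_3,V_4$ of $n$ vertices each, joined by complete bipartite graphs between consecutive layers, and input size $m=n^2$: vertex $i\in V_1$ carries the $i$-th row $x[i]\in\{0,1\}^n$ of $X$, vertex $j\in V_4$ carries the $j$-th row of $Y$, the middle layers are unlabeled, and the language requires $X$ and $Y$ to be disjoint as $n^2$-bit vectors. In $\C\C$, round one sends $x[i][j]$ from $i\in V_1$ to $j\in V_2$ and $y[j][i]$ from $j\in V_4$ to $i\in V_3$ (one bit per edge); round two forwards these across the middle cut so that each middle vertex holds a matching pair and checks it locally. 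For the lower bound, Alice owns $V_1\cup V_2$ and Bob owns $V_3\cup V_4$: the cut between them is input-independent and the labeled layers are at distance three, so the \local{} round is simulated without communication, no single vertex ever sees both halves of a matching pair after that round, and the $k$ \bcc{} rounds cost only $O(kn\log n)=o(n^2)$ exchanged bits, contradicting Lemma~\ref{lem:disj}. The quadratic sizing ($4\sqrt{m}$ vertices, connectors of degree $2n$ absorbing $n$ bits in one \congest{} round) dissolves the balancing issue you flagged; what remains to be written is exactly this construction and its two verifications.
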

\begin{proof}
Let $G_n = K_{n,n,n,n} = (V = V_1\cup V_2 \cup V_3 \cup V_4,E)$ be a complete $4$-partite graph. Given $X, Y \in \{0,1\}^{n^2}$ we define $\text{DISJ}(X,Y) = \neg \bigvee \limits_{i, j \in [n]} x[i][j] \wedge y[i][j].$ Let us define the distributed language $\tt{DISJ-4-partite-graph} = \{(G_n, \ell_{X,Y}), \ell_{X.Y}:V\mapsto\{0,1\}^*\cup \{\epsilon\}, \ell(i) = x[i] \in \{0,1\}^n \text{ for each } i\in V_1,  \ell(j) = y[j] \in \{0,1\}^n \text{ for each } j\in V_4, \ell(v) = \epsilon \text{ for each } v \in V_2 \cup V_3, \text{DISJ}(X,Y) = 1 \}$ where $X = (x[i][j])_{i,j \in [n]}, Y =(y[i][j])_{i,j \in [n]}.$ 

First, we are going to show that $\tt{DISJ-4-partite-graph} \in \textbf{CC}$. In particular, we are going to show that there exists a two round  \congest  \ protocol $\pi$ for  $\text{DISJ}(X,Y)$. Let $n \in \mathbb{N}$ and $\ell_{X,Y}$ be a labeling for $G_n.$ In the first round of communication, each node $i \in V_1$ sends a message containing $x[i][j]$ to its $j$th-neighbor. Besides,  each node $j \in V_4$ sends a message containing $y[j][i]$ to its $i$th-neighbor. After this round of communication, each node sends $\mathcal{O}(\log n)$ bits to each of its neighbors. Now, each $j \in V_2$ knows the string $(x[i][j])_{i \in [n]}$ and each node $i \in V_3$ knows  the string $(y[j][i])_{j \in [n]}.$ Finally, in the second round of communcation each node $ j \in V_2$ sends   $x[i][j]$ to its $i$th-neighbor and each node $i \in V_3$ sends $y[j][i]$ to its $j$th-neighbor. Observe that each node sends again $\mathcal{O}(\log n)$ bits to each of its neighbors. In addition, each node $j \in V_2$ knows $y[j][i]$ for each $i \in [n]$ and also $x[i][j]$ for each $i \in [n].$ Complementarily, each node $i \in V_3$ knows  $x[i][j]$ for each $j \in [n]$ and $y[j][i])$ for each $j \in [n].$ Now each node $j \in V_2$ computes  $x[i][j]\wedge y[j][i] $ for each $i \in [n]$ and each node $i \in V_3$ computes $x[i][j]\wedge y[j][i]$ for each $j \in [n].$ If $x[i][j]\wedge y[j][i] =1$ for some $i,j \in [n]$ then, the node rejects and otherwise it accepts. Observe that $\pi$ is a two round \congest protocol and  $\pi$ accepts if and only if \text{DISJ}(X,Y) = 1. Thus, $\pi$ decides $\tt{DISJ-4-partite-graph}$ and $\tt{DISJ-4-partite-graph} \in \C\C$ holds. 

Let $k \in \mathbb{N}.$ We show that $\tt{DISJ-4-partite-graph}~\notin~\L\B^k.$ Fix $n \in \mathbb{N}$. By contradiction, let us assume that there exists a (k +1) round $\frac{1}{2} - \varepsilon$-error protocol $\pi$ in \local \ + \ \bcc \ that decides $\tt{DISJ-4-partite-graph}.$ We are going to show that there exists a  $\frac{1}{2} - \varepsilon$-error protocol $\pi'$ for $\tt{DISJ}(X,Y)$ with total communication $\mathcal{O}(n\log n)$. In fact, given an instance $(X,Y)$ of  $\tt{DISJ}$ such that $X,Y \in \{0,1\}^{n^2},$ we define a labeling $\ell_{X,Y}$ for $G_n.$ Observe that $\pi$ accepts in $(X,Y)$ if and only if $\text{DISJ}(X,Y) = 1.$ We define the  complete bipartite graphs $G^A_n =  G_n[V_1,V_2]$ and $G^B_n =  G_n[V_3,V_4]$ corresponding to the first two partitions of $G_n$ and the second two partitions of $G_n$ respectively. Now we say that Alice and Bob have each one $G_A$ and $G_B$ respectively. Let $M^L = (m^L_v)_{v \in V}$ the messages sent by each node $v \in V$ in the first round of $\pi.$ Since the nodes in $V_1$ and $V_4$ are the ones that hold the input and the rest of the nodes have an empty input, Alice and Bob can simulate the \ \local \ round and compute $M^L.$ Then, let $M^{\text{Broad}} = (m^{\text{Broad},s}_v)_{v \in V,s \in [k]}$ be de messages sent by each node in each \ \bcc \ round of $\pi.$ Since Alice and Bob have each one half of the original graph $G_n$ they can simulate the messages $M_A =  (m^{\text{Broad},s}_v)_{v \in V(G^A), s \in [k]}$ and $M_B =  (m^{\text{Broad},s}_v)_{v \in V(G^B),s \in [k]}$ respectively. Now, Alice sends $M_A$ to Bob and Bob sends $M_B$ to Alice. Observe that the size of each message is $\mathcal{O}(2 kn \log n).$ Finally, since Alice and Bob have all the messages exchanged by the nodes in $\pi,$ they can simulate the protocol and compute $\text{DISJ}(X,Y).$ However,  the total cost of the protocol $\pi'$  is $\mathcal{O}(2kn \log n)$ which is a contradiction with Lemma \ref{lem:disj}.
\end{proof}

\end{document}